\def\comment#1{}
\def\withcomments{
   \newcounter{mycommentcounter}
   \def\comment##1{\refstepcounter{mycommentcounter}%
    \ifhmode%
     \unskip%
     {\dimen1=\baselineskip \divide\dimen1 by 2 %
       \raise\dimen1\llap{\tiny -\themycommentcounter-}}\fi%
     \marginpar{\renewcommand{\baselinestretch}{0.8}%
       \footnotesize [\themycommentcounter]: \raggedright ##1}}
   }
\newcommand{\minabclong}{$(\alpha,\beta,\gamma)$-\caps{ClusterCrossingNumber}}
\newcommand{\minb}{$\beta$-\caps{CCN}}
\newcommand{\cg}{clustered graph\xspace}
\newcommand{\cgs}{clustered graphs\xspace}
\newcommand{\ccg}{$C$-graph\xspace}
\newcommand{\cgt}{C(G,T)\xspace}
\newcommand{\mcgt}{$\cgt$\xspace}
\newcommand{\cn}{{\scshape CrossingNumber}\xspace}
\newcommand{\stplong}{{\scshape SteinerTreePlanarGraphs}\xspace}
\newcommand{\stp}{{\scshape STPG}\xspace}
\newcommand{\azz}{$\langle\alpha,0,0\rangle$\xspace}
\newcommand{\azzd}{\azz-drawing\xspace}
\newcommand{\zbz}{$\langle0,\beta,0\rangle$\xspace}
\newcommand{\zbzd}{\zbz-drawing\xspace}
\newcommand{\zzc}{$\langle0,0,\gamma\rangle$\xspace}
\newcommand{\zzcd}{\zzc-drawing\xspace}
\newcommand{\caps}[1]{{\scshape #1}\xspace}
\newcommand{\nsc}{\cellcolor[gray]{0.9}}
\newcommand{\gtxt}[1]{\textcolor{gray}{#1}}
\newtheorem{corollary}{Corollary}
\newtheorem{theorem}{Theorem}
\newtheorem{lemma}{Lemma}
\newtheorem{definition}{Definition}
\renewenvironment{proof}{\noindent~ \caps{Proof.}}{~\hfill \qed\vspace{2mm}}
\newcommand{\remove}[1]{}
\newcommand{\skel}{sk}
\newcommand{\pert}{pert}
\newcommand{\pertinent}{pert}
\newcommand{\mmct}{$\mathcal{T}$\xspace}
\newcommand{\mct}{\mathcal{T}}
\newcommand{\eec}{$ee$-crossing\xspace}
\newcommand{\eeec}{\emph{ee-crossing}\xspace}
\newcommand{\eecs}{$ee$-crossings\xspace}
\newcommand{\erc}{$er$-crossing\xspace}
\newcommand{\eerc}{\emph{er-crossing}\xspace}
\newcommand{\ercs}{$er$-crossings\xspace}
\newcommand{\rrc}{$rr$-crossing\xspace}
\newcommand{\errc}{\emph{rr-crossing}\xspace}
\newcommand{\rrcs}{$rr$-crossings\xspace}
\begin{document}

\title{Relaxing the Constraints of Clustered Planarity}

\author[1]{Patrizio~Angelini}

\author[1]{Giordano~Da~Lozzo}

\author[1]{Giuseppe~Di~Battista}

\author[2]{Fabrizio~Frati}

\author[1]{Maurizio~Patrignani}

\author[1]{Vincenzo~Roselli}

\affil[1]{Dipartimento di Informatica e Automazione, Roma Tre University,
Italy} 
\affil[2]{School of Information Technologies, The University of Sydney,
Australia}


\date{}

\maketitle
\begin{abstract}
In a drawing of a clustered graph vertices and edges are drawn as points
and curves, respectively, while clusters are represented by simple closed
regions. A drawing of a clustered graph is c-planar if it has no
edge-edge, edge-region, or region-region crossings.
Determining the complexity of testing whether a clustered graph admits a
c-planar drawing is a long-standing open problem in the Graph Drawing
research area.
An obvious necessary condition for c-planarity is the planarity of the
graph
underlying the clustered graph. However, such a condition is not sufficient
and
the consequences on the problem due to the requirement of not
having edge-region and region-region crossings are not yet fully
understood.

In order to shed light on the c-planarity problem, we consider a relaxed
version
of it, where some kinds of crossings (either edge-edge, edge-region, or
region-region) are allowed even if the underlying graph is planar.
We investigate the relationships among the minimum number of edge-edge,
edge-region, and region-region crossings for drawings of the same clustered
graph.
Also, we consider drawings in which only crossings of one kind are
admitted. In this setting, we prove that drawings with only edge-edge or
with only edge-region crossings always exist, while drawings with only
region-region crossings may not. Further, we provide upper and lower bounds
for the number of such crossings. Finally, we give a polynomial-time
algorithm to test whether a drawing with only region-region crossings exist
for biconnected graphs, hence identifying a first non-trivial necessary
condition for c-planarity that can be tested in polynomial time for a
noticeable class of graphs.
\end{abstract}
Keywords:
\textsf{graph drawing,}
\textsf{clustered planarity,}
\textsf{planar graphs,}
\textsf{$NP$-hardness}
\section{Introduction}

Clustered planarity is a classical Graph Drawing topic (see~\cite{cd-cp-05} for a survey).
A \emph{clustered graph} $C(G,T)$ consists of a graph $G$ and of a rooted tree $T$ whose
leaves are the vertices of $G$. Such a structure is used to enrich the vertices of the
graph with hierarchical information. In fact, each internal node $\mu$ of $T$ represents
the subset, called \emph{cluster}, of the vertices of $G$ that are the leaves of the
subtree of $T$ rooted at $\mu$. Tree $T$, which defines the inclusion relationships among
clusters, is called \emph{inclusion tree}, while $G$ is the \emph{underlying graph} of
$C(G,T)$.

In a \emph{drawing} of a clustered graph $C(G,T)$ vertices and edges of $G$ are drawn as
points and open curves, respectively, and each node $\mu$ of $T$ is represented by a
simple closed region $R(\mu)$ containing all and only the vertices of $\mu$. Also, if
$\mu$ is a
descendant of a node $\nu$, then $R(\nu)$ contains $R(\mu)$.

A drawing of $C$ can have three types of crossings. \emph{Edge-edge crossings} are
crossings between edges of $G$. Algorithms to produce drawings allowing edge-edge
crossings have been already proposed (see, for example, \cite{ddm-pcg-02} and
Fig.~\ref{fig:marcandalli}). Two kinds of crossings involve, instead, regions. Consider an
edge $e$ of $G$ and a node $\mu$ of $T$. If $e$ intersects the boundary of $R(\mu)$ only
once, this is not considered as a crossing, since there is no way of connecting the
endpoints of
$e$ without intersecting the boundary of $R(\mu)$. On the contrary, if $e$ intersects the
boundary of $R(\mu)$ more than once, we have \emph{edge-region crossings}. An example of
this kind of crossings is provided by Fig.~\ref{fig:er-example}, where edge $(u,w)$
traverses $R(\mu)$ and edge $(u,v)$ exits and enters $R(\mu)$. Finally, consider two nodes
$\mu$ and $\nu$ of $T$; if the boundary of $R(\mu)$ intersects the boundary of $R(\nu)$ we
have a \emph{region-region crossing} (see Fig.~\ref{fig:rr-example} for an example).

A drawing of a \cg is \emph{c-planar} if it does not have any edge-edge, edge-region, or
region-region crossing. A \cg is \emph{c-planar} if it admits a c-planar drawing.

\begin{figure}[tb]
\centering
\begin{tabular}{c p{1cm}}
\multirow{2}*{\subfigure[]{\includegraphics[scale=.4]{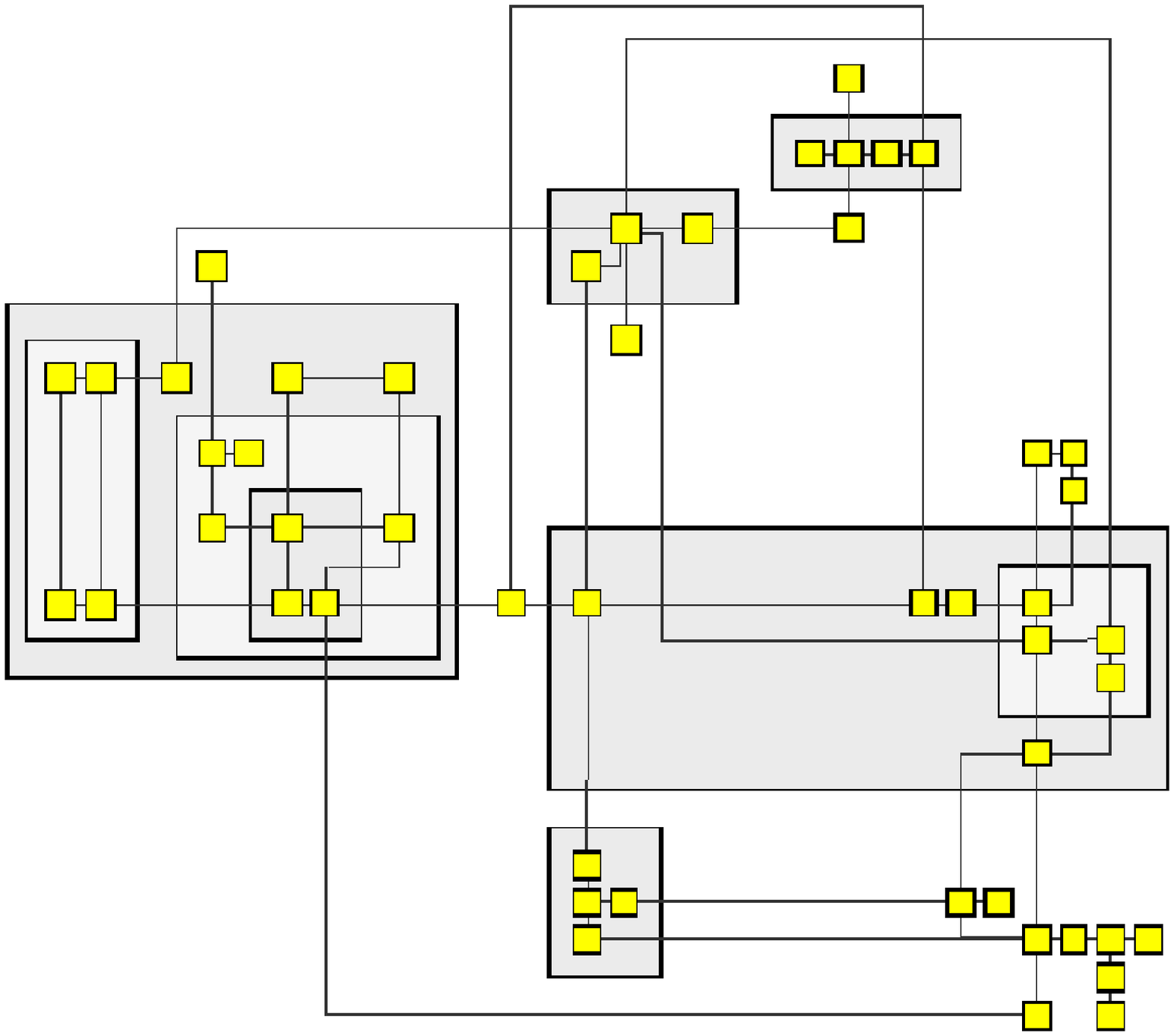}
\label {fig:marcandalli}}} &
\subfigure[]{\includegraphics[scale=1.2]{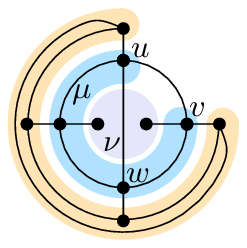}
\label{fig:er-example}}\\
&\subfigure[]{\includegraphics[scale=1.2]{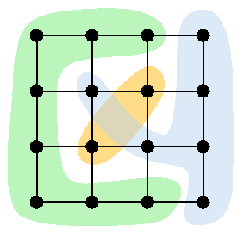}
\label{fig:rr-example}}
\end{tabular}
\label{fig:ciccio}
\caption{Examples of crossings in drawings of clustered graphs. (a) A drawing
obtained with the planarization algorithm described in~\cite{ddm-pcg-02} and containing
three edge-edge crossings. (b) A drawing with two edge-region crossings. (c) A drawing
with a region-region crossing.}
\end{figure}

In the last decades c-planarity has been deeply studied. While the
complexity of deciding if a clustered graph is c-planar is still an open problem in the
general case, polynomial-time algorithms have been proposed to test c-planarity and
produce c-planar drawings under several kinds of restrictions, such as:

\begin{itemize}

\item Assuming that each cluster induces a small number of connected components
(\cite{CornelsenW06,cdfpp-ccccg-08,Dahlhaus98,FengCE95,fce-hdpcg-95,GoodrichLS05,
GutwengerJLMPW02,jjkl-cpecgtcc-08,jstv-cpcfoe-09}). In particular, the case in which the
graph is \emph{c-connected}, that is, for each node $\nu$ of $T$ the graph induced by the
vertices of $\nu$ is connected, has been deeply investigated.

\item Considering only \emph{flat} hierarchies, i.e., the height of $T$ is two, namely no
cluster different from the root contains other clusters
(\cite{CorteseBPP05,cdpp-ecpg-09,df-ectefcgsf-09}).

\item Focusing on particular families of underlying graphs
(\cite{CorteseBPP05,cdpp-ecpg-09,jkkpsv-cpsceg-08}).

\item Fixing the embedding of the underlying graph
(\cite{df-ectefcgsf-09,jjkl-cpecgtcc-08}).

\end{itemize}

This huge body of research can be read as a collection of polynomial-time testable
sufficient conditions for c-planarity.

In contrast, the planarity of the underlying graph is the only polynomial-time testable
necessary condition that has been found so far for c-planarity in the general case. Such a
condition, however, is not sufficient and the consequences on the problem due to the
requirement of not having edge-region and region-region crossings are not yet fully
understood.

Other known necessary conditions are either trivial (i.e., satisfied by all clustered
graphs) or of unknown complexity as the original problem is. An example of the first kind
is the existence of a c-planar clustered graph obtained by splitting some cluster into
sibling clusters~\cite{afp-scgcp-10}. An example of the second kind, which is also a
sufficient condition, is the existence of a set of edges that, if added to the underlying
graph, make the clustered graph c-connected and c-planar~\cite{FengCE95}.   

In this paper we study a relaxed model of c-planarity. Namely, we study $\langle \alpha,
\beta ,\gamma\rangle$-drawings of clustered graphs. In an $\langle \alpha,
\beta,\gamma\rangle$-drawing the number of edge-edge, edge-region, and region-region
crossings is equal to $\alpha$, $\beta$, and $\gamma$, respectively. 
Figs.~\ref{fig:marcandalli}, \ref{fig:er-example}, and~\ref{fig:rr-example} show examples
of a $\langle 3, 0, 0\rangle$-drawing, a $\langle 0, 2, 0\rangle$-drawing, and a $\langle
0, 0, 1\rangle$-drawing, respectively.
Notice that this model provides a generalization of c-planarity, as the traditional
c-planar drawing is a special case of an $\langle \alpha, \beta ,\gamma\rangle$-drawing
where $\alpha=\beta=\gamma=0$. Hence, we can say that the existence of a $\langle \alpha,
\beta ,\gamma\rangle$-drawing, for some values of $\alpha$, $\beta$, and $\gamma$, is a
necessary condition for c-planarity.

In our study we focus on clustered graphs whose underlying graph is planar. We mainly concentrate on the existence of drawings in which only one type of
crossings is allowed, namely we consider \azz-, \zbz-, and \zzc-drawings. Our
investigation uncovers that allowing different types of crossings has a different impact
on the existence of drawings of clustered graphs (see Fig.~\ref{fig:classes}). In particular, we prove that, while every clustered graph admits an \azzd (even if its underlying graph is not planar) and a \zbzd, there exist clustered graphs not admitting any \zzcd. Further, we provide a polynomial-time testing algorithm to decide whether a biconnected clustered graph admits any \zzcd. From this fact we conclude that the existence of such a drawing is the first non-trivial necessary condition for the c-planarity of clustered graphs that can be tested efficiently. This allows us to further restrict the search for c-planar instances with respect to the obvious condition that the underlying graph is planar.  

\begin{figure}[tb]
\centering
\includegraphics[scale=.9]{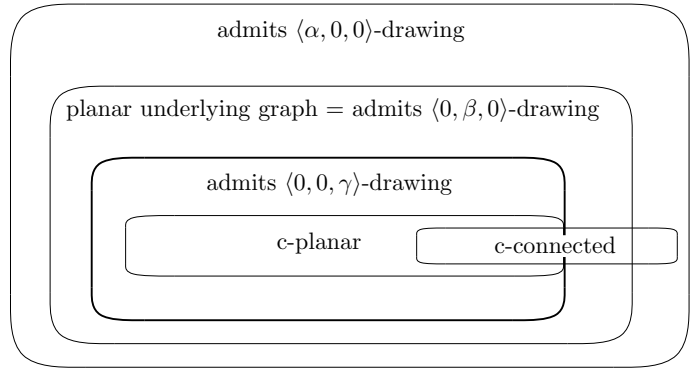}
\caption{Containment relationships among instances of clustered planarity. The existence
of a $\langle 0, 0 ,\gamma\rangle$-drawing is a necessary condition for c-planarity.}
\label{fig:classes}
\end{figure}

Also, we investigate the relationships among the minimum number of edge-edge, edge-region,
and region-region crossings for drawings of the same clustered graph, showing that, in
most of the cases, the fact that a \cg admits a drawing with few crossings of one type
does not imply that such a clustered graph admits a drawing with few crossings of another
type.

Finally, we show that minimizing the total number of crossings in $\langle \alpha,
\beta, \gamma\rangle$-drawings is an NP-complete problem. Note that this implies
NP-completeness also for the problems of minimizing crossings in $\langle \alpha, 0,
0\rangle$-, $\langle 0, \beta, 0\rangle$-, and $\langle 0, 0, \gamma\rangle$-drawings.
However, for the first two types of drawings we can prove NP-completeness even for
simpler classes of clustered graphs.

We remark that drawings of clustered graphs where a few intersections are admitted may
meet the requirements of many typical Graph Drawing applications, and that their
employment is encouraged by the fact that the class of c-planar instances might be too
small to be relevant for some application contexts.

More in detail, we present the following results (recall that we assume the necessary condition that the underlying graph is planar to be always satisfied):
\begin{enumerate}

\item In Section~\ref{se:upper} we provide algorithms to produce \azz-, \zbz-, and
\zzc-drawings of clustered graphs, if they exist. In particular, while \azz- and \zbz-drawings always exist, we show that some clustered graphs do not admit any \zzc-drawing, and we present a polynomial-time algorithm to test whether a biconnected \cg admits a \zzcd, which is a necessary condition for c-planarity.

\item The above mentioned algorithms provide upper bounds on the number of crossings for
the three kinds of drawings. We show that the majority of these upper bounds are tight by
providing matching lower bounds in Section~\ref{se:lower}.  
These results are summarized in Tab.~\ref{tab:crossing-bounds}.  

\item In Section~\ref{se:osmosis} we show that there are clustered graphs admitting
drawings with one crossing of
a certain type but requiring many crossings in drawings where different
types of crossings are allowed. For example, there are clustered graphs that admit a
$\langle
1, 0, 0\rangle$-drawing and that require $\beta \in \Omega (n^2)$ in any \zbzd and $\gamma
\in \Omega (n^2)$ in any \zzcd. See Tab.~\ref{tab:osmosis} for a summary of these results.

\newcommand{\sine}[0]{{\scriptsize \sc yes}}
\newcommand{\none}[0]{{\scriptsize \sc no}}

\begin{table}[tb]\label{tab:crossing-bounds}
\centering
\setlength{\extrarowheight}{1.75pt}
\setlength{\tabcolsep}{2pt}

{\scriptsize

\begin{tabular}{||c|c||c|c||c|c||c|c||}

\hline
 \multirow{2}*{c-c} & \multirow{2}*{flat} &
 \multicolumn{2}{c||}{\azz} &
 \multicolumn{2}{c||}{\zbz} &
 \multicolumn{2}{c||}{\zzc} \\  \cline{3-8}

   && $\alpha$ UB & $\alpha$ LB & $\beta$ UB &
     $\beta$ LB & $\gamma$ UB & $\gamma$ LB\\ \hline \hline

  \none & \none & $O(n^2)$ {\scriptsize Th.\ref{th:a00-upper}} &
     \gtxt{$\Omega(n^2)$} &
     $O(n^3)$ Th.\ref{le:0b0-upper-ncc}& \gtxt{$\Omega(n^2)$}
         & $O(n^3)^ {\text{\maltese}}$ Th.\ref{le:00c-upper-planar-any} &
     $\Omega(n^3)$ Th.\ref{le:00c-lower-outer-nf}\\
     \hline
 \none & \sine &
 \gtxt{$O(n^2)$} & \gtxt{$\Omega(n^2)$} &
       $O(n^2)$ Th.\ref{le:0b0-upper-ncc}& $\Omega(n^2)$
Cor.\ref{cor:0b0-lower-ncc-f}&
       $O(n^2)^{\text{\maltese}}$  Th.\ref{le:00c-upper-planar-any} &
$\Omega(n^2)$ Th.\ref{le:00c-lower-outer-f}\\
\hline


  \sine & \none  &
 \gtxt{$O(n^2)$} & \gtxt{$\Omega(n^2)$} &
       $O(n^2)$ Th.\ref{th:0b0-upper-cc}& $\Omega(n^2)$
Th.\ref{th:0b0-lower-cc-nf}&
    $0^ {\text{\maltese}}$ \cite{FengCE95} &     $0^ {\text{\maltese}}$
\cite{FengCE95} \\
\hline
 \sine & \sine &
 \gtxt{$O(n^2)$} & $\Omega(n^2)$ Th.\ref{th:a00-planar-lower} &
    $O(n)$ Th.\ref{th:0b0-upper-cc}& $\Omega(n)$ Th.\ref{th:0b0-lower-cc-f}&
    $0^ {\text{\maltese}}$ \cite{FengCE95} & $0^ {\text{\maltese}}$
\cite{FengCE95} \\
\hline
\end{tabular}
}
\vspace{0.8em}
\caption{Upper and lower bounds for the number of crossings in \azz-, \zbz-,
and \zzc-drawings of \cgs. Flags \emph{c-c} and \emph{flat} mean that the \cg is
\emph{c-connected} and that the cluster hierarchy is \emph{flat},
respectively. Results written in gray derive from those in black, while a ``{\tiny $\maltese$}'' means that there exist \cgs not admitting the corresponding drawings. A ``0'' occurs
if the \cg is c-planar.}
\end{table}

\begin{table}[tb]\label{tab:osmosis}
\centering
\setlength{\extrarowheight}{1.75pt}
\setlength{\tabcolsep}{3pt}
{\scriptsize
\begin{tabular}{|c||c||c||c||}
\hline
$\rightarrow$&
{\azz} &
{\zbz} &
{\zzc} \\
\hline \hline

 $\langle 1,0,0 \rangle$ &
\nsc &
$\Omega(n^2) $&
$ \Omega(n^2)$ \\ \hline

 $\langle 0,1,0 \rangle$ &
$\Omega(n)$ &
\nsc &
$\Omega(n^2)$ \\ \hline

 $\langle 0,0,1 \rangle$ &
$\Omega(n^2)$ &
$\Omega(n)$  &
\nsc \\
\hline
\end{tabular}
}\vspace{1em}
\caption{Relationships between types of drawings proved in
Theorem~\ref{th:abc-osmosis}.}
\end{table}

\item In Section~\ref{se:complexity} we present several complexity results. Namely, we
show that:
\begin{itemize}
\item minimizing $\alpha + \beta + \gamma$ in an $\langle \alpha, \beta,
\gamma\rangle$-drawing is NP-complete even if the underlying graph is planar, namely a
forest of star graphs;
\item minimizing $\alpha$ in an \azzd is NP-complete even if the underlying graph is a
matching;
\item minimizing $\beta$ in a \zbzd is NP-complete (see also~\cite{phd-forster-2005}) even
for c-connected flat \cgs in which the underlying graph is a triconnected planar
multigraph;
\end{itemize}

\end{enumerate}

Section~\ref{se:preliminaries} gives definitions and preliminary lemmas, while
Section~\ref{se:conclusions} contains conclusions and open problems.

\section{Preliminaries} \label{se:preliminaries}

We remark that every \cg \mcgt that is considered in this paper is such that $G$ is
planar.

Let \mcgt be a \cg. A \emph{drawing} $\Gamma$ of $C$ is a collection of points,
open curves, and simple closed regions such that: the vertices of $G$ correspond to
distinct points of $\Gamma$; the edges of $G$ correspond to open curves between their
endpoints; each node $\mu$ of $T$ is represented by a simple closed region $R(\mu)$
containing all and only the vertices of $\mu$; if $\mu$ is a descendant of $\nu$ then
$R(\nu)$ contains $R(\mu)$. 

If $\mu$ is an internal node of $T$, we denote by $V(\mu)$ the leaves of the
subtree of $T$ rooted at $\mu$. The subgraph of $G$ induced by $V(\mu)$ is denoted by
$G(\mu)$.

Some constraints are usually enforced on the crossings among the open
curves representing edges in the drawing of a graph. Namely: \remove{(${\cal C}_x$) no two
vertices lay on the same point; (${\cal C}_y$) no curve intersects a vertex;}(${\cal
C}_1$) the intersections among curves form a set of isolated points; (${\cal C}_2$) no
three curves intersect on the same point; and (${\cal C}_3$) two intersecting curves
appear alternated in the circular order around their intersection point.
Figure~\ref{fig:intersection}(a) shows a legal crossing, while
Figures~\ref{fig:intersection}(b)-(d) show crossings violating Constraints~${\cal
C}_1$, ${\cal C}_2$, and ${\cal C}_3$, respectively. These constraints naturally extend to
encompass crossings involving regions representing clusters, by considering, for each
region, the closed curve that forms its boundary.

\begin{figure}[!b]
\centering
\subfigure[]{\includegraphics[scale=0.7]{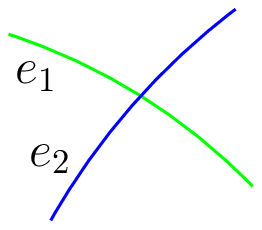}
\label{fig:intersection-1}}\hspace{5pt}
\subfigure[]{\includegraphics[scale=0.7]{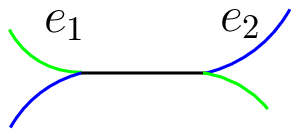}
\label{fig:intersection-2}}\hspace{5pt}
\subfigure[]{\includegraphics[scale=0.7]{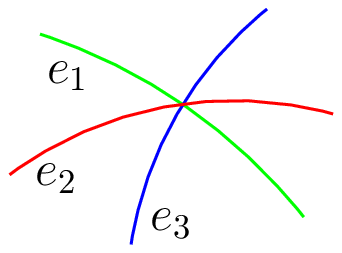}
\label{fig:intersection-3}}\hspace{5pt}
\subfigure[]{\includegraphics[scale=0.7]{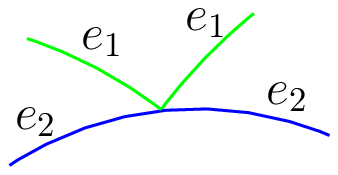}
\label{fig:intersection-4}}\hspace{5pt}
\caption{Allowed and forbidden crossings in a drawing of a graph. (a) A legal crossing.
(b) A crossing violating Constraint~${\cal C}_1$. (c) A crossing violating
Constraint~${\cal C}_2$. (d) A crossing violating Constraint~${\cal C}_3$.}
\label{fig:intersection}
\end{figure}

Let $\Gamma$ be a drawing of a clustered graph $C(G,T)$.
First, we formally define the types of crossings of $\Gamma$ and how to count them.

\begin{description}
\item{\bf Edge-edge crossings.} Each crossing between two edges of $G$ is an \emph{edge-edge crossing} (or \eeec for short) of $\Gamma$.

\item{\bf Edge-region crossings.} An \emph{edge-region crossing} (\eerc) is
a crossing involving an edge $e$ of $G$ and a region $R(\mu)$ representing
a cluster $\mu$ of $T$; namely, if $e$ crosses $k$ times the boundary of
$R(\mu)$, the number of \ercs between $e$ and $R(\mu)$ is $\lfloor
\frac{k}{2} \rfloor$. Note that, if $e$ intersects the boundary of $R(\mu)$
exactly once, then such an intersection does not count as an
\erc, as in the traditional c-planarity literature.

\item{\bf Region-region crossings.} A \emph{region-region crossing} (\errc) is a crossing
involving two regions $R(\mu)$ and $R(\nu)$ representing clusters $\mu$ and $\nu$ of $T$,
respectively, and such that $\mu$ is not an ancestor of $\nu$ and vice-versa. 
In fact, if $\mu$ is an ancestor of $\nu$, then $R(\nu)$ is contained into $R(\mu)$ by the
definition of drawing of a clustered graph. 
The number of \rrcs between $R(\mu)$ and $R(\nu)$ is equal to the number of
the topologically connected regions resulting from the set-theoretic difference between
$R(\mu)$ and $R(\nu)$ minus one. Observe that, due to Constraints~${\cal C}_1$, ${\cal
C}_2$, and ${\cal C}_3$, the number of \rrcs between $R(\mu)$ and $R(\nu)$ is equal to the
number of \rrcs between $R(\nu)$ and $R(\mu)$. Also, as region $R(\mu)$ contains all and
only the vertices of $\mu$, intersections between regions cannot contain vertices of $G$.
Figure~\ref{fig:rr-explanation} provides examples of region-region crossings.

\begin{figure}
\centering
 \subfigure[]{\includegraphics[scale=.5]{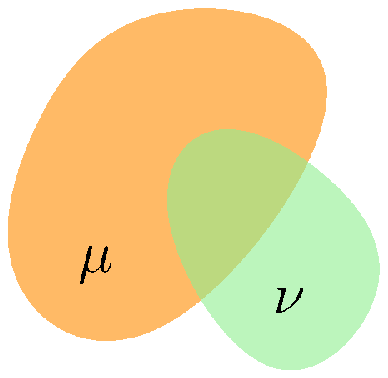}\label{fig:rr-zero}}
 \subfigure[]{\includegraphics[scale=.5]{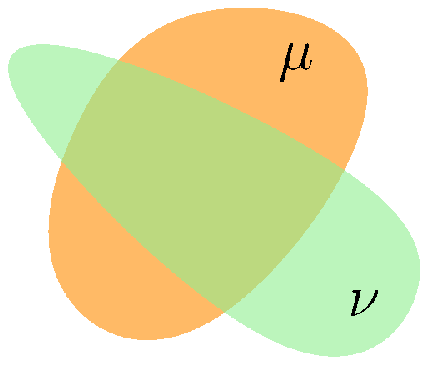}\label{fig:rr-one}}
 \subfigure[]{\includegraphics[scale=.43]{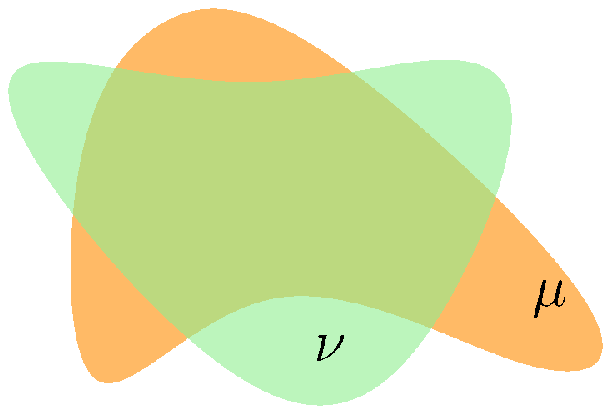}\label{fig:rr-two}}
 \label{fig:rr-explanation}
 \caption{Examples of intersections between clusters generating (a) zero
\rrcs; (b) one \rrc; and (c) two \rrcs.}
\end{figure}


\end{description}

\begin{definition}
An $\langle \alpha, \beta ,\gamma\rangle$-drawing of a \cg is a
drawing with
$\alpha$ \eecs, $\beta$ \ercs, and $\gamma$
\rrcs.
\end{definition}

\subsection{Connectivity and SPQR-trees}

A graph is \emph{connected} if every two vertices are joined by a
path. A graph $G$ is \emph{biconnected} (\emph{triconnected}) if
removing any vertex (any two vertices) leaves $G$ connected.

To handle the decomposition of a biconnected graph into its
triconnected components, we use \emph{SPQR-trees}, a data structure introduced by Di
Battista and Tamassia (see, e.g.,~\cite{dt-omtc-96,dt-opt-96}).

A graph is \emph{st-biconnectible} if adding edge $(s,t)$ to it yields a
biconnected graph. Let $G$ be an st-biconnectible graph. A \emph{separation
pair} of $G$ is a pair of vertices whose removal disconnects the graph. A
\emph{split pair} of $G$ is either a separation pair or a pair of adjacent
vertices. A \emph{maximal split component} of $G$ with respect to a split
pair $\{u, v\}$ (or, simply, a maximal split component of $\{u, v\}$) is
either an edge $(u, v)$ or a maximal subgraph $G'$ of $G$ such that $G'$
contains $u$ and $v$, and $\{u, v\}$ is not a split pair of $G'$. A vertex
$w \neq u,v$ belongs to exactly one maximal split component of $\{u, v\}$.
We call \emph{split component} of $\{u, v\}$ the union of any number of
maximal split components of $\{u, v\}$.

We consider SPQR-trees that are rooted at one edge of the graph,
called the \emph{reference edge}.

The rooted SPQR-tree $\mathcal{T}$ of a biconnected graph $G$, with respect
to a reference edge $e$, describes a recursive decomposition of $G$ induced
by its split pairs. The nodes of $\mathcal{T}$ are of four types: S, P, Q,
and R. Their connections are called \emph{arcs}, in order to distinguish
them from the edges of $G$.

Each node $\tau$ of $\mathcal{T}$ has an associated st-biconnectible
multigraph, called the \emph{skeleton} of $\tau$ and denoted by
\skel($\tau$). Skeleton \skel($\tau$) shows how the children of $\tau$,
represented by ``virtual edges'', are arranged into $\tau$. The virtual edge
in \skel($\tau$) associated with a child node $\sigma$, is called the
\emph{virtual edge of $\sigma$ in \skel($\tau$)}.

For each virtual edge $e_i$ of \skel($\tau$), recursively replace $e_i$ with
the skeleton \skel($\tau_i$) of its corresponding child $\tau_i$. The
subgraph of $G$ that is obtained in this way is the \emph{pertinent graph}
of $\tau$ and is denoted by \pert($\tau$).

Given a biconnected graph $G$ and a reference edge $e=(u',v')$, tree
$\mathcal{T}$ is recursively defined as follows. At each step, a split
component $G^*$, a pair of vertices $\{u,v\}$, and a node $\sigma$ in
$\mathcal{T}$ are given. A node $\tau$ corresponding to $G^*$ is introduced
in $\mathcal{T}$ and attached to its parent $\sigma$. Vertices $u$ and $v$ are
the \emph{poles} of $\tau$ and denoted by $u(\tau)$ and $v(\tau)$,
respectively. The decomposition possibly recurs on some split components of
$G^*$. At the beginning of the decomposition $G^* = G - \{e\}$,
$\{u,v\}=\{u',v'\}$, and $\sigma$ is a Q-node corresponding to $e$.

\begin{description}

\item[\textbf{Base Case:}] If $G^*$ consists of exactly one edge between
$u$ and $v$, then $\tau$ is a Q-node whose skeleton is $G^*$ itself.

\item[\textbf{Parallel Case:}] If $G^*$ is composed of at least two maximal
split components $G_1, \dots, G_{k}$ ($k \geq 2$) of $G$ with respect to
$\{u,v\}$, then $\tau$ is a P-node. Graph \skel($\tau$) consists of $k$
parallel virtual edges between $u$ and $v$, denoted by $e_1, \dots, e_{k}$
and corresponding to $G_1, \dots, G_{k}$, respectively. The decomposition
recurs on $G_1, \dots, G_{k}$, with $\{u,v\}$ as pair of vertices for every
graph, and with $\tau$ as parent node.

\item[\textbf{Series Case:}] If $G^*$ is composed of exactly one maximal
split component of $G$ with respect to $\{u,v\}$ and if $G^*$ has
cutvertices $c_1, \dots, c_{k-1}$ ($k \geq 2$), appearing in this order on
a path from $u$ to $v$, then $\tau$ is an S-node. Graph \skel($\tau$) is the
path $e_1, \dots, e_k$, where virtual edge $e_i$ connects $c_{i-1}$ with
$c_i$ ($i = 2, \dots ,k-1$), $e_1$ connects $u$ with $c_1$, and $e_k$
connects $c_{k-1}$ with $v$. The decomposition recurs on the split
components corresponding to each of $e_1, e_2,\dots, e_{k-1}, e_{k}$ with
$\tau$ as parent node, and with $\{u,c_1\}, \{c_1,c_2\},$ $\dots,$
$\{c_{k-2},c_{k-1}\}, \{c_{k-1},v\}$ as pair of vertices, respectively.

\item[\textbf{Rigid Case:}] If none of the above cases applies, the purpose
of the decomposition step is that of partitioning $G^*$ into the minimum
number of split components and recurring on each of them. We need some
further definition. Given a maximal split component $G'$ of a split pair
$\{s,t\}$ of $G^*$, a vertex $w \in G'$ \emph{properly belongs} to $G'$ if
$w \neq s, t$. Given a split pair $\{s,t\}$ of $G^*$, a maximal split
component $G'$ of $\{s,t\}$ is \emph{internal} if neither $u$ nor $v$ (the
poles of $G^*$) properly belongs to~$G'$, \emph{external} otherwise. A
\emph{maximal split pair} $\{s,t\}$ of $G^*$ is a split pair of $G^*$ that
is not contained into an internal maximal split component of any other
split pair $\{s',t'\}$ of $G^*$. Let $\{u_1,v_1\}, \dots, \{u_k,v_k\}$ be
the maximal split pairs of $G^*$ ($k \geq 1$) and, for $i = 1, \dots, k$,
let $G_i$ be the union of all the internal maximal split components of
$\{u_i,v_i\}$. Observe that each vertex of $G^*$ either properly belongs to
exactly one $G_i$ or belongs to some maximal split pair $\{u_i,v_i\}$. Node
$\tau$ is an R-node. Graph \skel($\tau$) is the graph obtained from $G^*$ by
replacing each subgraph $G_i$ with the virtual edge $e_i$ between $u_i$ and
$v_i$. The decomposition recurs on each $G_i$ with $\mu$ as parent node and
with $\{u_i,v_i\}$ as pair of vertices.
\end{description}

For each node $\tau$ of $\mathcal{T}$, the construction of \skel($\tau$) is
completed by adding a virtual edge $(u,v)$ representing the rest of the
graph.

The SPQR-tree $\mathcal{T}$ of a graph $G$ with $n$ vertices and $m$ edges
has $m$ Q-nodes and $O(n)$ S-, P-, and R-nodes. Also, the total number of
vertices of the skeletons stored at the nodes of $\mathcal{T}$ is $O(n)$.
Finally, SPQR-trees can be constructed and handled efficiently. Namely,
given a biconnected planar graph $G$, the SPQR-tree $\mathcal{T}$ of $G$
can be computed in linear time~\cite{dt-omtc-96,dt-opt-96,gm-lti-00}.

\section{Drawings of Clustered Graphs with Crossings}\label{se:upper}

The following three sections deal with \azz-, \zbz- and \zzc-drawings,
respectively.


\subsection{Drawings with Edge-Edge Crossings}\label{sse:upper-alpha}

In this section we show a simple algorithm to construct an \azzd of any
clustered graph. The number of edge-edge crossings in the drawing
constructed by the algorithm is asymptotically optimal in the worst case,
as proved in Section~\ref{se:lower}.

\begin{theorem}\label{th:a00-upper}
Let \mcgt be a \cg. There exists an algorithm to
compute an \azzd of \mcgt with $\alpha \in O(n^2)$.
\end{theorem}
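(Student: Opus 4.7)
The plan is a direct geometric construction. I linearise the cluster hierarchy, put the edges on one side of the spine, and put the cluster regions on the other, so the three types of crossings become essentially decoupled.

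First, I compute a DFS ordering $v_1,\dots,v_n$ of the leaves of $T$. A standard property of DFS orderings on $T$ is that, for every internal node $\mu$, the set $V(\mu)$ forms a contiguous interval $[i_\mu,j_\mu]$ of positions, and for any two internal nodes $\mu,\nu$ these intervals are either disjoint (when neither cluster is an ancestor of the other) or nested (otherwise). I then place $v_i$ at the point $(i,0)$ on a horizontal spine $\ell$.

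Second, I draw each edge of $G$ as a semicircular arc in the open upper half-plane. Because any two semicircular arcs sharing the spine cross at most once, the number of \eecs is at most $\binom{|E(G)|}{2}$; combined with the standing assumption that $G$ is planar, and hence $|E(G)|\le 3n-6$, this gives $\alpha\in O(n^2)$.

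Third, for each internal node $\mu$ of $T$ I draw the region $R(\mu)$ as the axis-aligned rectangle
\[
R(\mu) \;=\; \bigl[i_\mu - \tfrac14,\; j_\mu + \tfrac14\bigr]\times\bigl[-H_\mu,\;0\bigr],
\]
where the depths $H_\mu$ are chosen strictly increasing with $|V(\mu)|$ (say $H_\mu=|V(\mu)|$). The verification is then immediate: each $v_i$ with $i\in[i_\mu,j_\mu]$ lies on the top edge of $R(\mu)$ and is therefore contained in the closed region, while each $v_i$ with $i\notin[i_\mu,j_\mu]$ lies strictly outside the $x$-extent of $R(\mu)$; so $R(\mu)$ contains exactly the vertices of $V(\mu)$. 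Since every region lies in the closed lower half-plane while every edge is an open curve in the open upper half-plane, no edge ever meets $\partial R(\mu)$, hence $\beta=0$. Finally, two rectangles coming from disjoint cluster intervals have disjoint $x$-extents, and two rectangles coming from nested clusters are properly nested both in $x$ and in depth, so $\gamma=0$.

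The only delicate point—and it is semantic rather than technical—is the convention that the vertices of $V(\mu)$ are allowed to lie on the boundary of the closed region $R(\mu)$; this matches the paper's definition of a drawing of a clustered graph. If a stricter interior-containment convention were preferred, one could simply attach an arbitrarily thin chimney to $R(\mu)$ above each vertex of $V(\mu)$, entirely below every edge arc, without creating any new crossings.
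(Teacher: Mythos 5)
Your construction is correct and is essentially the paper's own proof: both order the vertices so that each cluster occupies a consecutive run (the paper places them on a convex curve with straight-line edges and draws each cluster as the convex hull of its points, you use a spine with semicircular arcs above and rectangles below), and both obtain $\alpha\in O(n^2)$ from the fact that the planar graph $G$ has $O(n)$ edges, each pair crossing $O(1)$ times. Your half-plane separation is, if anything, a slightly cleaner way to see $\beta=\gamma=0$, and the degeneracies you flag (vertices on region boundaries, coincident regions for nested clusters with equal vertex sets) occur at the same level of informality in the paper's convex-hull version.
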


\begin{proof}
Let $\sigma = v_1,\dots, v_n$ be an ordering of the vertices of $G$
such that vertices of the same cluster are consecutive in $\sigma$. A
drawing of
$G$ can be constructed as follows. Place the vertices of $G$ along a
convex curve in
the order they appear in $\sigma$. Draw the edges of $G$ as straight-line
segments. Since vertices belonging to the same cluster are consecutive in
$\sigma$, drawing each cluster as the convex hull of the points assigned to
its vertices yields a drawing without region-region and edge-region
crossings (see Fig.~\ref{fig:a00-parabola}). Further, since $G$ has $O(n)$
edges, and since edges are drawn as straight-line segments, such a
construction produces $O(n^2)$ edge-edge crossings.
\begin{figure}[!htb]
\centering
\includegraphics[scale=.8]{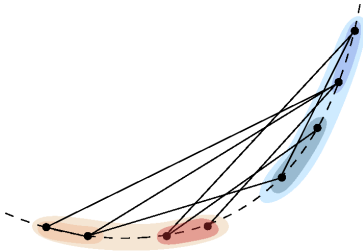}
\caption{Illustration for Theorem~\ref{th:a00-upper}.}
\label{fig:a00-parabola}
\end{figure}
\end{proof}

Observe that, using the same construction used in the proof of Theorem~\ref{th:a00-upper}, it can be proved that every clustered graph (even if its underlying graph is not planar) admits an \azzd with $\alpha \in O(n^4)$.

\subsection{Drawings with Edge-Region Crossings}\label{sse:upper-beta}

In this section we show two algorithms for constructing a \zbzd of
any clustered graph $C(G,T)$.
The number of edge-region crossings in the drawings constructed by the algorithms are
asymptotically optimal in the worst case if $C(G,T)$ is c-connected or if it is flat,
as proved in Section~\ref{se:lower}. If $C(G,T)$ is a general
clustered graph, then the number of edge-region crossings in the produced
drawings is a linear factor apart from the lower bound presented in
Section~\ref{se:lower}.

The two algorithms handle the case in which \mcgt is
not c-connected (Theorem~\ref{le:0b0-upper-ncc}) and in which \mcgt is
c-connected (Theorem~\ref{th:0b0-upper-cc}), respectively. Both such
algorithms have three steps:

\begin{itemize}
\item In the first step, a spanning
tree \mmct of the vertices of $G$ is constructed in such a way
that, for each cluster $\mu\in T$, the subgraph of \mmct induced by the vertices of
$\mu$ is connected. The two algorithms construct \mmct in two different
ways;

\item In the second step, a simultaneous embedding of
$G$ and \mmct is computed.
A \emph{simultaneous embedding} of two graphs $G_1(V,E_1)$
and $G_2(V,E_2)$, on the same set $V$ of vertices, is a drawing
of $G(V,E_1\cup E_2)$ such that a crossing might occur only between an edge
of $E_1$ and an edge of $E_2$~\cite{bcdeeiklm-ospge-07};

\item In the third step, a \zbzd of \mcgt is constructed by drawing each cluster $\mu$ as
a region $R(\mu)$ slightly surrounding the edges
of $\mct(\mu)$ and the regions $R(\mu_1),\dots, R(\mu_k)$ representing the
children $\mu_1,\dots,\mu_k$ of $\mu$.
\end{itemize}

In the case in which \mcgt is not c-connected, we get the following:

\begin{theorem}\label{le:0b0-upper-ncc}
Let \mcgt be a \cg. Then, there exists an algorithm to
compute a \zbzd of \mcgt with $\beta \in O(n^3)$. If \mcgt is flat, then
$\beta
\in O(n^2)$.
\end{theorem}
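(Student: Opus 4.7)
The plan is to follow the three-step scheme outlined above the theorem statement. First, I construct the spanning tree $\mathcal{T}$ so that, for every cluster $\mu\in T$, the induced subgraph $\mathcal{T}[V(\mu)]$ is connected. Since $C(G,T)$ is not assumed c-connected, $G[V(\mu)]$ may itself be disconnected, so $\mathcal{T}$ is allowed to use edges that are not in $G$. I build $\mathcal{T}$ by a bottom-up traversal of $T$: at a leaf cluster $\mu$ I pick any $|V(\mu)|-1$ edges on $V(\mu)$ forming a tree; at an internal cluster $\mu$ with children $\mu_1,\dots,\mu_k$, whose subtrees $\mathcal{T}[V(\mu_i)]$ are connected by induction, I add $k-1$ edges that merge them into a single spanning tree of $V(\mu)$. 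The resulting $\mathcal{T}$ has $n-1$ edges and its restriction to every $V(\mu)$ is a subtree.

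Second, I compute a simultaneous embedding of $G$ and $\mathcal{T}$. Since both graphs are planar, such an embedding exists: I start from a planar embedding of $G$ and then insert the edges of $\mathcal{T}$ one at a time as curves, each crossing some edges of $G$ but no previously routed tree edge. A shortest-path argument in the dual of the current planar drawing shows that each tree edge can be inserted with $O(n)$ crossings against $G$, which yields $O(n^2)$ edge--tree crossings overall.

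Third, for each cluster $\mu$ I draw $R(\mu)$ as a sufficiently thin tubular neighborhood of $\mathcal{T}[V(\mu)]$ together with the regions $R(\mu_i)$ of its children. Because $\mathcal{T}$ is drawn planarly and the subtrees corresponding to clusters $\mu,\nu \in T$ with neither being an ancestor of the other are vertex-disjoint, shrinking the tubes enough avoids any \rrc. To count \ercs, observe that an edge $e$ of $G$ can enter $R(\mu)$ only in the immediate vicinity of an edge of $\mathcal{T}[V(\mu)]$, so $e$ crosses the boundary of $R(\mu)$ at most $2k + O(1)$ times, where $k$ is the number of crossings of $e$ with edges of $\mathcal{T}[V(\mu)]$. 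Now, a tree edge $t=(u,v)$ lies in $\mathcal{T}[V(\mu)]$ exactly when $\mu$ is a common ancestor of $u$ and $v$ in $T$, so each of the $O(n^2)$ edge--tree crossings contributes to at most $h$ boundary crossings, where $h$ is the height of $T$. This gives $\beta \in O(n^2 h)=O(n^3)$ in general, and $\beta \in O(n^2)$ when $T$ is flat so that $h=O(1)$, as required.

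The main obstacle I anticipate is Step~2: one must argue not only that each new tree edge can be added with $O(n)$ crossings against $G$, but also that the routing can be performed without letting any two tree edges cross. A convenient workaround is to fix first a planar embedding of $\mathcal{T}$ alone and then process its edges in an order consistent with that embedding, so that each newly inserted tree edge can be routed through faces of the current planarized drawing that do not contain any previously inserted tree edge on their boundary.
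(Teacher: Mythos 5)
Your Steps 1 and 3 coincide with the paper's: the same bottom-up construction of a spanning tree $\mathcal{T}$ whose restriction to each cluster is connected (using edges outside $G$ where necessary), and the same final count in which each crossing between an edge of $G$ and an edge of $\mathcal{T}$ is charged to the $O(n)$ (resp.\ $O(1)$ in the flat case) clusters whose subtree contains that tree edge. The difference, and the problem, is Step 2. The paper does not construct the simultaneous embedding by incremental edge insertion; it invokes Kammer's algorithm, which produces a simultaneous embedding of a planar graph and a tree in which every edge is a polyline with at most two bends, so that each of the $O(n^2)$ pairs $\langle e_1\in G, e_2\in\mathcal{T}\rangle$ crosses $O(1)$ times. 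That citation is doing real work, and your replacement for it has a gap.

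Concretely: inserting one tree edge into a planar embedding of $G$ via a shortest path in the dual indeed costs $O(n)$ crossings with $G$. But once some tree edges have been routed, the next tree edge must be routed in the planarized drawing of $G$ \emph{together with} the already-inserted tree edges, along a dual path that avoids all dual edges corresponding to tree-edge segments. Such a path exists (a drawn forest does not disconnect the plane), but its length is bounded only by the complexity of the current planarized drawing, which includes all crossings created so far; detouring around a previously inserted tree edge can force you to re-cross the same edges of $G$ that it crosses. A naive recurrence of the form $c_i = O(n + \sum_{j<i} c_j)$ gives an exponential, not quadratic, total. Your proposed workaround --- fixing an embedding of $\mathcal{T}$ in advance and routing each new tree edge ``through faces that do not contain any previously inserted tree edge on their boundary'' --- is not justified: there is no argument that such faces form a connected corridor from one endpoint to the other, nor that traversing it costs only $O(n)$ crossings with $G$. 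To close the gap you either need to prove a genuine insertion lemma of this kind (which is essentially reproving a simultaneous-embedding theorem) or, as the paper does, appeal to a known result such as Kammer's two-bend simultaneous embedding of a planar graph and a tree, from which the $O(1)$ crossings per edge pair, and hence the $O(n^2)$ bound on edge--tree crossings, follow immediately.
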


\begin{proof}
In the first step, tree \mmct is constructed by means of a bottom-up
traversal of $T$. Whenever a node $\mu\in T$ is considered, a spanning tree $\mct(\mu)$ of
$\mu$ is constructed as follows. Denote by $\mu_1,\dots,\mu_k$ the children
of $\mu$ in $T$ (observe that, for each $1\leq i\leq k$, $\mu_i$ is either a
cluster or a vertex). Assume that spanning trees $\mct(\mu_1),
\dots, \mct(\mu_k)$ of $\mu_1,\dots,\mu_k$ have been already computed. The
spanning tree $\mct(\mu)$ of $\mu$ is constructed by connecting a vertex of $\mu_1$ to a
vertex of each of $\mct(\mu_2), \dots, \mct(\mu_k)$. Tree \mmct coincides with
$\mct(\rho)$, where $\rho$ is the root of $T$. Observe that some of the edges of \mmct
might not belong to $G$.

In the second step, we apply the algorithm by
Kammer~\cite{k-setbpept-06} (see also \cite{ek-sepgfb-05}) to construct a
simultaneous embedding of $G$ and \mmct in which each edge has at most two
bends, which implies that each pair of edges $\langle e_1\in G$, $e_2\in
\mct
\rangle$
crosses a constant number of times.

In the third step, each cluster $\mu$ is drawn as a region $R(\mu)$
slightly
surrounding the edges of $\mct(\mu)$ and the regions $R(\mu_1),\dots,
R(\mu_k)$
representing
the children $\mu_1,\dots,\mu_k$ of $\mu$. Hence, each crossing between an
edge
$e_1\in
G$ and an edge $e_2\in \mct$ determines two intersections (hence one
edge-region crossing) between $e_1$ and the boundary of each cluster $\nu$
such that $e_2 \in \mct(\nu)$. Since for each
edge $e_2 \in \mct$ there exist $O(n)$ clusters $\nu$ such
that $e_2 \in \mct(\nu)$ and since there exist $O(n^2)$ pairs of edges
$\langle e_1\in G, e_2\in \mct \rangle$, the total number of edge-region
crossings is $O(n^3)$.

If \mcgt is flat, then for each edge $e_2 \in \mct$ there exists at most
one
cluster $\nu$ different from the root such that $e_2 \in \mct(\nu)$, and
hence
the total number of edge-region crossings is $O(n^2)$.
\end{proof}

If \mcgt is c-connected, we can improve the bounds of
Theorem~\ref{le:0b0-upper-ncc} as follows:

\begin{theorem}\label{th:0b0-upper-cc}
Let \mcgt be a c-connected \cg. Then, there exists an algorithm to compute
a \zbzd of \mcgt with $\beta \in O(n^2)$. If \mcgt is flat, $\beta \in
O(n)$.
\end{theorem}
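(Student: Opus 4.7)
The plan is to run the same three-step scheme as in the proof of Theorem~\ref{le:0b0-upper-ncc}, but to exploit c-connectivity so that Step~1 produces a spanning tree $\mct$ whose edges already belong to $G$, and so that Step~2 collapses to simply drawing $G$ planarly.

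For Step~1, I would traverse $T$ bottom-up and, at each internal node $\mu$ with children $\mu_1,\dots,\mu_k$ whose spanning trees $\mct(\mu_1),\dots,\mct(\mu_k)$ of $G(\mu_1),\dots,G(\mu_k)$ have already been built, add $k-1$ edges of $G(\mu)$ to tie these subtrees together into a spanning tree $\mct(\mu)$ of $G(\mu)$. Such edges exist precisely because c-connectivity of \mcgt ensures that $G(\mu)$ is connected. Letting $\mct=\mct(\rho)$, I obtain a spanning tree of $G$, contained in $G$, whose restriction to every cluster $\mu$ spans $G(\mu)$. Step~2 then trivializes: since $G$ is planar, any planar drawing of $G$ is automatically a planar drawing of \mmct, and the simultaneous-embedding subroutine of~\cite{k-setbpept-06} invoked in Theorem~\ref{le:0b0-upper-ncc} is not needed at all.

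For Step~3, I would visit the clusters of $T$ in post-order and draw $R(\mu)$ as the union of the already-built regions $R(\mu_1),\dots,R(\mu_k)$ with a thin tubular neighborhood of $\mct(\mu)$; the non-tree edges of $G(\mu)$ are rerouted inside $R(\mu)$ so that they contribute no crossings with $\partial R(\mu)$. The key observation for the count is that, because $\mct(\mu)\subseteq G$ and $G$ is drawn planarly, no edge of $G$ crosses an edge of $\mct(\mu)$. Hence, by shrinking the tube around $\mct(\mu)$ sufficiently, each edge of $G$ with at least one endpoint outside $V(\mu)$ can be forced to cross $\partial R(\mu)$ only $O(1)$ times. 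In the general c-connected case, each edge is ``seen'' by at most $O(n)$ cluster regions along its route, yielding $\beta\in O(n^2)$. In the flat case, each edge is seen by only $O(1)$ clusters (the two containing its endpoints), so summing over the $O(n)$ edges gives $\beta\in O(n)$.

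The main obstacle I anticipate is the rigorous accounting of the per-cluster crossings in Step~3: one must argue that the tubes can be taken thin enough to yield the $O(1)$ per-edge-per-cluster bound while still respecting the hierarchical nesting $R(\mu)\supseteq R(\mu_i)$ at every level of $T$, and that the non-tree chords of $\mct(\mu)$ can always be rerouted inside $R(\mu)$ consistently with the regions of the children. This is delicate because $R(\mu)$ is assembled as the union of a tree neighborhood with already-fixed children regions, and these pieces must be glued together without introducing fresh \ercs.
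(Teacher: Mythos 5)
Your Steps 1 and 2 coincide with the paper's proof: the bottom-up construction of $\mct$ using c-connectivity so that $\mct\subseteq G$, and the observation that any planar drawing of $G$ then already serves as the required simultaneous embedding, are exactly what the paper does. The flaw is in Step 3. You propose to reroute the non-tree edges of $G(\mu)$ inside $R(\mu)$ so that they contribute no crossings with the boundary of $R(\mu)$. But $R(\mu)$ is a thin tubular neighborhood of $\mct(\mu)$ (unioned with the children's regions), so a rerouted chord $(u,v)$ with $u,v\in V(\mu)$ would have to run alongside the tree path from $u$ to $v$ inside the tube; at every intermediate vertex $w$ of that path it must pass on one side of $w$ and therefore cross the edges of $G$ incident to $w$ that emanate into the angular sector it traverses. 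These are \eecs, which are forbidden in a \zbzd. Indeed, if your rerouting were possible, then (after shrinking the tubes so that every edge with an endpoint outside the cluster meets the boundary at most once, hence contributes no \erc) every c-connected \cg would admit a \zbzd with $\beta=0$, contradicting the paper's own lower bounds in Theorems~\ref{th:0b0-lower-cc-nf} and~\ref{th:0b0-lower-cc-f}, which exhibit c-connected instances requiring $\Omega(n^2)$ and $\Omega(n)$ \ercs.

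The correct accounting --- and the one the paper uses --- is essentially the opposite of yours: the edges with at least one endpoint outside $V(\mu)$ are the harmless ones (a sufficiently thin tube meets such an edge once or not at all, since the edge crosses no edge of $\mct(\mu)$ and passes through no vertex of it, so it contributes no \erc), whereas the \ercs are charged precisely to the $O(n)$ non-tree edges whose two endpoints lie in a common cluster. Such an edge is left where the planar drawing of $G$ put it; it must exit and re-enter $R(\mu)$, costing one \erc for each of the $O(n)$ clusters (at most one non-root cluster, in the flat case) containing both its endpoints. This yields the claimed $O(n^2)$ and $O(n)$ bounds with no rerouting at all, and it dissolves the ``main obstacle'' you identify, which as stated cannot be overcome.
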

\begin{proof}
In the first step, tree \mmct is constructed by means of a bottom-up
traversal of $T$. When a node $\mu\in T$ is considered, a spanning tree $\mct(\mu)$ of
$\mu$ is constructed as follows. Denote by $\mu_1,\dots,\mu_k$ the children
of $\mu$ in $T$ (note that, for each $1\leq i\leq k$, $\mu_i$ is either a
cluster or a vertex). Assume that spanning trees $\mct(\mu_1),
\dots, \mct(\mu_k)$ of $\mu_1,\dots,\mu_k$ have been already computed so
that
$\mct(\mu_i)$ is a subgraph of $G(\mu_i)$, for $i=1,\dots,k$. Tree
$\mct(\mu)$
contains all the edges in $\mct(\mu_1),
\dots, \mct(\mu_k)$ plus a minimal set of edges of $G(\mu)$ connecting
$\mct(\mu_1),\dots, \mct(\mu_k)$. The latter set of edges always exists
since $G(\mu)$ is connected. Tree \mmct coincides with
$\mct(\rho)$, where $\rho$ is the root of $T$. Observe that, in contrast with the
construction in the proof of
Theorem~\ref{le:0b0-upper-ncc}, all the edges of \mmct belong to~$G$.

In the second step, since each edge of \mmct is also an edge of $G$, any
planar
drawing of $G$ determines a simultaneous embedding of $G$ and \mmct in
which no
edge of $G$ properly crosses an edge of \mmct. Hence, the only edge-region
crossings that may occur are those between any edge of $G$ not in \mmct
whose
endvertices belong to the same cluster $\mu$ and the boundary of $R(\mu)$.

In the third step, clusters are drawn in the same way as in the proof of
Theorem~\ref{le:0b0-upper-ncc}. Since there exist $O(n)$ edges not
belonging to $\mct$
and since for each edge $e \notin \mct$ there exist $O(n)$ clusters $\nu$
such
that both the endvertices of $e$ belong to $\nu$, it follows that the total
number of edge-region crossings is $O(n^2)$.

If \mcgt is flat, then for each edge $e \notin \mct$ there exists at most
one
cluster $\nu$ different from the root such that both the endvertices of $e$
belong to $\nu$, and hence the
total number of edge-region crossings is $O(n)$.
\end{proof}


\subsection{Drawings with Cluster-Cluster Crossings}\label{sse:upper-gamma}

In this section we study \zzc-drawings of clustered graphs. First, we prove
that there are clustered graphs that do not admit \zzc-drawings. Second, we
provide a polynomial-time algorithm to test if a \cg \mcgt with $G$
biconnected admits a \zzcd and to compute one if it exists. Third, we show
an algorithm that constructs a \zzc-drawing $\Gamma$ with a worst-case
asymptotically-optimal number of crossings of any clustered graph $C(G,T)$
that admits such a drawing (the input of the algorithm is any \zzc-drawing
$\Gamma'$ of \mcgt).

To show that there exist \cgs not admitting any \zzcd, we give two
examples. Let \mcgt be a \cg such that $G$ is triconnected and has a
cycle of vertices belonging to a cluster $\mu$ separating two vertices not in $\mu$. Note that, even in the presence of \rrcs, one of the two vertices not in $\mu$ is enclosed by $R(\mu)$ in any \zzcd of \mcgt.
The above example exploits the triconnectivity of the underlying graph.
Next we show that even clustered graphs with series-parallel underlying
graph may not admit any \zzcd.
Namely, let \mcgt be a \cg, depicted in Fig.~\ref{fig:00c-parallel}, such
that $G$ has eight vertices and is composed of parallel paths $p_1$, $p_2$, $p_3$, and $p_4$. Tree
$T$ is such that cluster $\mu_1$ contains a vertex of $p_1$ and a vertex of
$p_2$; cluster $\mu_2$ contains a vertex of $p_2$ and a vertex of $p_3$;
cluster $\mu_3$ contains a vertex of $p_2$ and a vertex of $p_4$. Note
that, in any \zzcd of \mcgt, path $p_2$ should be adjacent to all the other paths
in the order around the poles, and this is not possible.

\begin{figure}[htb]
\centering
\subfigure[]{\includegraphics[scale=.7]{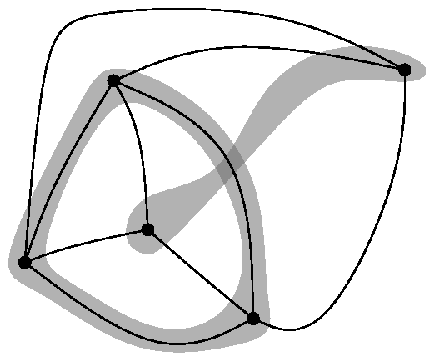}\label{fig:00c-doublestar}}\hspace{5pt}
\subfigure[]{\includegraphics[scale=.7]{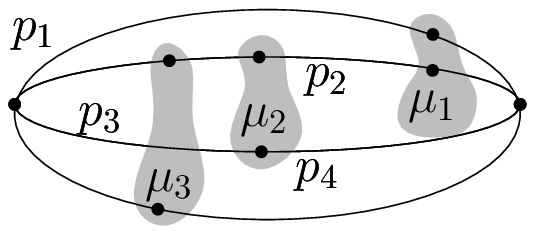}\label{fig:00c-parallel}}
\caption{Two \cgs not admitting any \zzcd. The underlying graph of (a) is a triconnected planar graph, while the underlying graph of (b) is a series-parallel graph.}
\label{fig:00c-parallel}
\end{figure}


Since some \cgs do not admit any \zzcd, we study the complexity of testing
whether a clustered graph \mcgt admits one. In order to do that, we first give a characterization of the planar embeddings of $G$ that allow for the realization of a \zzcd of \mcgt. Namely, let \mcgt be a \cg and let $\Gamma$
be a planar embedding of $G$. For each cluster $\mu \in T$ consider an
auxiliary graph $H(\mu)$ with the same vertices as $G(\mu)$ and such that there is an
edge between two vertices of $H(\mu)$ if and only if the corresponding
vertices of $G$ are incident to the same face in $\Gamma$.

\begin{lemma}\label{le:00c-test-fixed-embedding}
Let \mcgt be a \cg and let $\Gamma$ be a planar embedding of
$G$. Then, \mcgt admits a \zzcd preserving $\Gamma$ if and only if, for each cluster
$\mu \in T$ : (i) graph $H(\mu)$ is connected and
(ii) there exists no cycle of $G$ whose vertices belong to $\mu$ and whose
interior contains in $\Gamma$ a vertex not belonging to $\mu$.
\end{lemma}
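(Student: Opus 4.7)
The plan is to prove the two directions separately. For the \emph{only if} direction, I would fix a \zzcd $\Delta$ of $C(G,T)$ preserving $\Gamma$ and consider an arbitrary cluster $\mu$. Since $\Delta$ has no \eecs and no \ercs, the boundary $\partial R(\mu)$ is a Jordan curve disjoint from every $\mu$-to-$\mu$ edge of $G$ and from every non-$\mu$-to-non-$\mu$ edge, and it intersects each mixed edge exactly once (between its $\mu$-endpoint and its non-$\mu$-endpoint). These structural observations will drive both necessity arguments.

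To establish (ii), I would argue by contradiction: if some cycle $C$ of $\mu$-vertices encloses a vertex $v \notin \mu$ in $\Gamma$, then since $\Delta$ preserves $\Gamma$, $v$ still lies in the interior of $C$ in $\Delta$, and $\partial R(\mu)$, being disjoint from every edge of $C$, lies entirely in one of the two connected components of $\mathbb{R}^2 \setminus C$. A short Jordan-curve case analysis on each such component, combined with the fact that $R(\mu)$ is a bounded closed disk, shows that $R(\mu)$ must either omit some vertex of $C$ or contain $v$, contradicting the placement of vertices. To establish (i), I would suppose $H(\mu)$ is disconnected and take a continuous path $\pi$ in the interior of $R(\mu)$ between two $\mu$-vertices lying in distinct $H(\mu)$-components. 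Since non-$\mu$-to-non-$\mu$ edges lie entirely outside $R(\mu)$, $\pi$ can cross only $\mu$-to-$\mu$ edges and $\mu$-side portions of mixed edges, and in either case the two faces of $\Gamma$ on opposite sides of the crossed edge share a common $\mu$-vertex (the $\mu$-endpoint of that edge). An inductive argument along $\pi$ then shows that the $H(\mu)$-component of any $\mu$-vertex incident to the face currently occupied by $\pi$ is invariant, contradicting the choice of endpoints.

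For the \emph{if} direction, I would construct a \zzcd by a bottom-up traversal of $T$. Given a cluster $\mu$ with children $\mu_1, \dots, \mu_k$ whose regions have already been drawn as tight neighborhoods of their contents, I would build an auxiliary drawing $D_\mu$ by taking the sub-drawing of $\Gamma$ induced by all $\mu$-vertices and $\mu$-to-$\mu$ edges and then merging its connected components via \emph{face-paths}: simple curves drawn through the interior of a face of $\Gamma$ and joining $\mu$-vertices from different components. Connectivity of $H(\mu)$ guarantees the existence of a spanning tree in the multigraph whose vertices are the components of $G[\mu]$ and whose edges record pairs of components sharing a face of $\Gamma$, and picking one face-path per tree edge merges all components without creating any new cycle that crosses between components. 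Consequently every bounded face of $D_\mu$ coincides with a bounded face of $G[\mu]$, is enclosed by a cycle of $\mu$-vertices in $G$, and by (ii) contains no non-$\mu$ vertex. I would then define $R(\mu)$ as a topological disk enclosing $D_\mu$ together with its bounded faces and all of $R(\mu_1),\dots,R(\mu_k)$, yet thin enough outside $D_\mu$ to avoid every non-$\mu$ vertex and every non-$\mu$-to-non-$\mu$ edge, and to cross each mixed edge exactly once near its $\mu$-endpoint. Any crossing of $\partial R(\mu)$ with the boundary of a non-ancestor, non-descendant cluster is then an \rrc, which is permitted in a \zzcd.

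The hardest part will be the sufficiency construction, where many clusters must be drawn coherently with respect to the nesting of $T$, and face-paths of different clusters may pass through the same face of $\Gamma$. Routing the face-paths transversely, processing the clusters from the leaves of $T$ upward, and choosing neighborhood radii that grow strictly with depth but remain small enough to enclose no additional vertices or edges, should be enough to make the construction go through cleanly.
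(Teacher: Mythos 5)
Your proposal is correct and follows essentially the same route as the paper: necessity via the topological facts that in a \zzcd the boundary of $R(\mu)$ misses every $\mu$-to-$\mu$ and non-$\mu$-to-non-$\mu$ edge and meets each mixed edge once, and sufficiency by augmenting $G(\mu)$ with acyclically chosen face-paths (the paper's graph $H'(\mu)$) and drawing $R(\mu)$ as a thin neighborhood of the result. Your necessity argument for condition (i) is phrased dually to the paper's (a path inside $R(\mu)$ rather than the boundary crossing a separating cycle) and is somewhat more detailed, but it is the same underlying idea.
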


\begin{proof}
We prove the necessity of the conditions.
Suppose that $H(\mu)$ is not connected. Then, in any drawing of the
region $R(\mu)$ representing $\mu$, the boundary of $R(\mu)$ intersects
(at least) one of the edges of a cycle separating distinct connected
components of $H(\mu)$. Suppose that a cycle $\cal C$ exists in $\Gamma$
whose vertices belong to $\mu$ and whose interior contains in $\Gamma$ a vertex not
belonging to $\mu$. Then, in any drawing of $R(\mu)$ as a simple closed region
containing all and only the vertices in $\mu$, the border of $R(\mu)$
intersects (at least) one edge of $\cal C$.

Conversely, suppose that Conditions $(i)$ and $(ii)$ hold. Consider any
subgraph
$H'(\mu)$ of $H(\mu)$ such that: (a) $G(\mu) \subseteq H'(\mu)$; (b) $H'(\mu)$ is
connected; and (c) for every cycle $\cal C$ in $H'(\mu)$, if any, all the edges of $\cal
C$ belong to $G$. Observe that the fact that $H(\mu)$ satisfies conditions $(i)$ and
$(ii)$ implies the existence of a graph $H'(\mu)$ satisfying (a), (b), and (c). Draw each
edge of $H'(\mu)$ not in $G$ inside the
corresponding face. Represent $\mu$ as a region slightly surrounding the
(possibly non-simple) cycle delimiting the outer face of $H'(\mu)$. Denote by
$\Gamma'_{C}$ the
resulting drawing and denote by $\Gamma_C$ the drawing of \mcgt obtained
from $\Gamma'_C$ by removing the edges not in $G$. We have that $\Gamma_C$
contains no \eec, since $\Gamma$ is a planar embedding.
Also, it contains no \erc, since the only edges crossing clusters in
$\Gamma'_{C}$ are those belonging to $H'(\mu)$ and not belonging to $G(\mu)$.
\end{proof}

Then, based on such a characterization, we provide an algorithm that, given a \cg \mcgt such that $G$ is biconnected, tests whether $G$ admits a planar embedding allowing for a \zzcd of \mcgt. We start by giving some definitions.
Let \mcgt be a \cg such that $G$ is biconnected and consider the SPQR-tree
$\cal T$ of $G$ rooted at any Q-node $\rho$. Consider a node $\tau \in \cal
T$, its pertinent graph $\pert(\tau)$ (augmented with an edge $e$ between
the poles of $\tau$, representing the parent of $\tau$), and a planar
embedding $\Gamma(\pert(\tau))$ with
$e$ on the outer face. Let $f'(\tau)$ and $f''(\tau)$ be the two faces that
are incident to $e$. For each cluster $\mu \in T$, we define an auxiliary
graph $H(\tau,\mu)$ as the graph containing all the vertices of
$\pertinent(\tau)$ that belong to $\mu$ and such that two vertices of
$H(\tau,\mu)$ are connected by an edge if and only if they are
incident to the same face in $\Gamma(\pert(\tau))$. Observe that
$H(\rho,\mu)$ coincides with the above defined auxiliary graph $H(\mu)$. Also, observe that no two connected components of $H(\tau,\mu)$ exist both containing a vertex incident to $f'(\tau)$ or both containing a vertex incident to $f''(\tau)$.

\begin{figure}
\centering
\subfigure[]{\includegraphics[scale=.58]{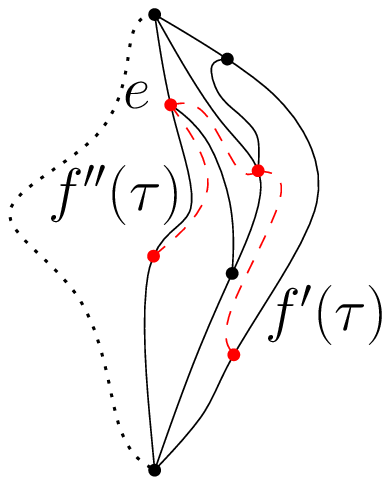}\label{fig:mu-traversable}}
\subfigure[]{\includegraphics[scale=.58]{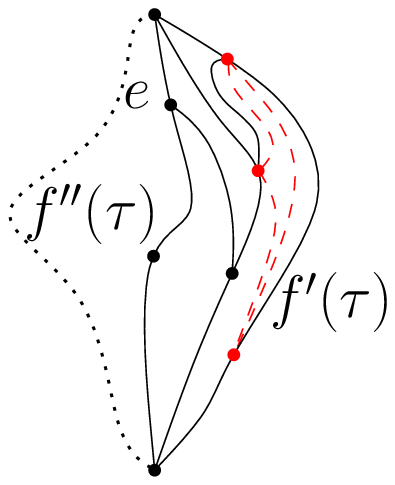}\label{fig:mu-sided}}
\subfigure[]{\includegraphics[scale=.58]{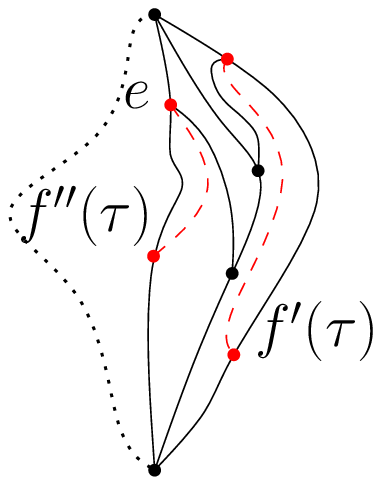}\label{fig:mu-bisided}}
\subfigure[]{\includegraphics[scale=.58]{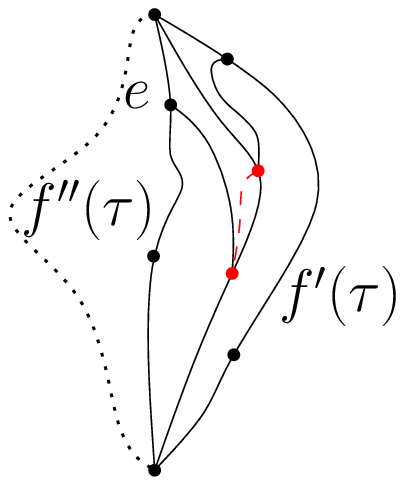}\label{fig:mu-kernelized}}
\subfigure[]{\includegraphics[scale=.58]{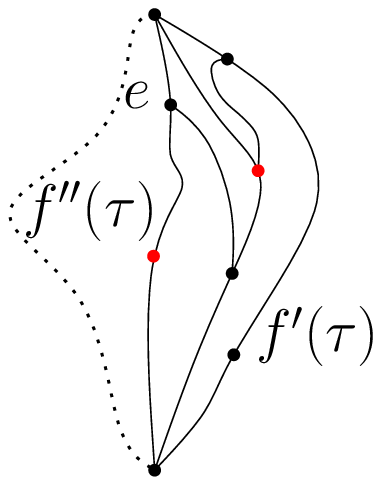}\label{fig:mu-unfeasible}}
\caption{Examples of embeddings of $\Gamma(\pert(\tau))$:  (a) $\mu$-traversable; (b) $\mu$-sided; (c) $\mu$-bisided; (d) $\mu$-kernelized; (e) $\mu$-unfeasible. Dashed red edges belong to $H(\tau,\mu)$.}
\label{fig:gamma-pert-tau-types}
\end{figure}

Next we introduce some classifications of the nodes of ${\cal T}$ and of the embeddings of their pertinent graphs that will  be used
to find an embedding of $G$ such that Conditions $(i)$ and $(ii)$ of Lemma~\ref{le:00c-test-fixed-embedding} are satisfied for each cluster $\mu$.

In order to pursue the connectivity of $H(\mu)$ (Condition $(i)$ of Lemma~\ref{le:00c-test-fixed-embedding}), for each node $\tau\in {\cal T}$ and for each cluster $\mu\in T$, we say that $\Gamma(\pert(\tau))$ is:
\begin{description}

\item[\emph{$\mu$-traversable}:] if $H(\tau,\mu)$ is connected and contains
at least one vertex incident to $f'(\tau)$ and one vertex incident to
$f''(\tau)$ (see Fig.~\ref{fig:mu-traversable}).

\item[\emph{$\mu$-sided}:] if $H(\tau,\mu)$ is connected and contains at
least one vertex incident to $f'(\tau)$ and no vertex incident to
$f''(\tau)$, or vice versa (see Fig.~\ref{fig:mu-sided}).

\item[\emph{$\mu$-bisided}:] if $H(\tau,\mu)$ consists of two connected
components, one containing a vertex of $f'(\tau)$ and no vertex of
$f''(\tau)$, and the other one containing a vertex of $f''(\tau)$ and no
vertex of $f'(\tau)$ (see Fig.~\ref{fig:mu-bisided}).

\item[\emph{$\mu$-kernelized}:] if $H(\tau,\mu)$ is connected and contains
neither a vertex incident to $f'(\tau)$ nor a vertex incident to
$f''(\tau)$ (see Fig.~\ref{fig:mu-kernelized}).

\item[\emph{$\mu$-unfeasible}:] if $H(\tau,\mu)$ has at least two connected
components of which one has no vertex incident to $f'(\tau)$ or
$f''(\tau)$ (see Fig.~\ref{fig:mu-unfeasible}).

\end{description}

Note that, if $\tau$ contains at least one vertex of $\mu$, then
$\Gamma(\pert(\tau))$ is exactly one of the types of embedding defined above.
Also note that, if $\pert(\tau)$ admits a $\mu$-traversable embedding
$\Gamma(\pert(\tau))$, then every embedding of $\pert(\tau)$ such that
$H(\tau,\mu)$ is connected (and such that $e$ is incident to the outer
face) is $\mu$-traversable. In such a case, we also say that $\tau$ (and
the virtual edge representing $\tau$ in the skeleton of its parent) is
\emph{$\mu$-traversable}. Finally note that, if a pole of $\tau$ belongs to
$\mu$, then $\tau$ is $\mu$-traversable.

Further, we introduce some definition used to deal with Condition $(ii)$ of Lemma~\ref{le:00c-test-fixed-embedding}.
Namely, for each node $\tau\in {\cal T}$ and for each cluster $\mu\in T$,
we say that $\tau$ (and the virtual edge representing $\tau$ in the
skeleton of its parent) is:
\begin{description}
\item[\emph{$\mu$-touched}:] if there exists a vertex in
$\pert(\tau)\setminus\{u,v\}$ that belongs to $\mu$.
\item[\emph{$\mu$-full}:] if all the vertices in $\pert(\tau)$ belong to
$\mu$.
\item[\emph{$\mu$-spined}:] if there exists in $\pertinent(\tau)$ a path $P$
between the poles of $\tau$ that contains only vertices of $\mu$. Observe
that, if $\tau$ is $\mu$-spined and $\pertinent(\tau)$ is not a single
edge, then $\tau$ is $\mu$-touched.
\end{description}

Given a $\mu$-spined \remove{and not $\mu$-full }node $\tau$, an embedding
$\Gamma(\pert(\tau))$ is {\bf $\mu$-side-spined} if at least one of the two paths
connecting the poles of $\tau$ and delimiting the outer face of
$\Gamma(\pert(\tau))$ has only vertices in $\mu$. Otherwise, it is {\bf $\mu$-central-spined}.

\begin{figure}
\centering
\subfigure[]{\includegraphics[scale=.7]{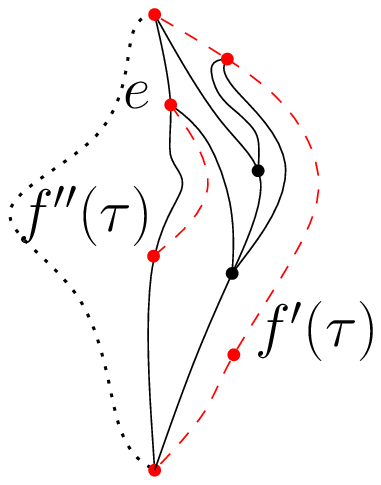}\label{fig:mu-sidespined}}
\subfigure[]{\includegraphics[scale=.7]{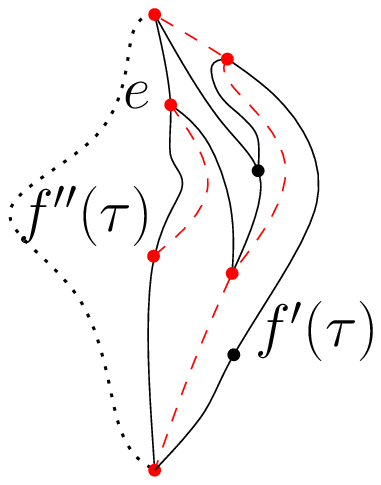}\label{fig:mu-centralspined}}
\caption{A $\mu$-side-spined (a) and a $\mu$-central-spined (b) embedding of $pert(\tau)$ for some $\tau \in \cal{T}$ and $\mu \in T$}
\label{fig:mu-centr-side-spined}
\end{figure}

Observe that, if $\tau$ is $\mu$-spined, then it is also $\mu$-traversable, since its poles
belong to $\mu$.
\remove{Moreover, if $\tau$ is $\mu$-spined, then either $\Gamma(\pert(\tau))$ is $\mu$-side-spined, or $\Gamma(\pert(\tau))$ is $\mu$-central-spined.}

For an embedding $\Gamma(\skel(\tau))$ of $\skel(\tau)$, we define an auxiliary graph $G'(\tau,\mu)$ as follows. Graph $G'(\tau,\mu)$ has one vertex $v_f$ for
each face $f$ of $\Gamma(\skel(\tau))$ containing a $\mu$-traversable virtual edge on its boundary; two vertices of $G'(\tau,\mu)$ are connected by an edge if they share a $\mu$-traversable virtual edge.

We say that an embedding $\Gamma(\skel(\tau))$ of $\skel(\tau)$ is
\emph{extensible} if the following condition holds: If \mcgt admits a \zzcd
then it admits one in which the embedding of $\skel(\tau)$ is
$\Gamma(\skel(\tau))$. The same definition applies to an \emph{extensible}
embedding $\Gamma(\pert(\tau))$ of $\pert(\mu)$.

\begin{lemma}\label{le:00c-test-embedding}
Let \mcgt be a \cg, with $G$ biconnected, that admits a \zzcd. Let $\tau$ be a node in the SPQR-tree $\cal T$ of $G$. Then, an
embedding $\Gamma(\skel(\tau))$ is extensible if and only if the following properties
hold. For each cluster $\mu \in T$:
\begin{inparaenum}[(i)]
%
\item There exists no cycle in $\Gamma(\skel(\tau))$ composed of
$\mu$-spined virtual edges $e_1, \dots, e_h$ containing in its
interior a\remove{ vertex of $\skel(\tau)$ not belonging to $\mu$ or a} virtual edge
that is not $\mu$-full;
\item $G'(\tau,\mu)$ is connected; and
\item if $G'(\tau,\mu)$ contains at least one vertex, then each virtual edge of $\skel(\tau)$ which is
$\mu$-touched and not $\mu$-traversable shares a face with a $\mu$-traversable virtual edge in $\Gamma(\skel(\tau))$. Otherwise (that is, if $G'(\tau,\mu)$ contains no vertex), all the $\mu$-touched virtual edges are incident to the same face of $\Gamma(\skel(\tau))$.
\end{inparaenum}
\end{lemma}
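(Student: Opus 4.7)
The plan is to prove both directions by relating the skeleton-level properties to the global conditions of Lemma~\ref{le:00c-test-fixed-embedding} applied to $G$. The $\mu$-classifications of virtual edges ($\mu$-spined, $\mu$-traversable, $\mu$-touched, $\mu$-full) serve as a dictionary that summarizes the contribution of each pertinent graph to $H(\mu)$ and to the $\mu$-cycles of $G$, and the proof will repeatedly translate between the two levels.

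For necessity, I pick a planar embedding $\Gamma^*(G)$ of $G$ witnessing extensibility, so that its restriction to $\skel(\tau)$ is $\Gamma(\skel(\tau))$ and, by Lemma~\ref{le:00c-test-fixed-embedding}, $H(\mu)$ is connected and no cycle of $\mu$-vertices encloses a vertex outside $\mu$, for every cluster~$\mu$. If condition~(i) failed, I would concatenate, inside each $\mu$-spined virtual edge of a violating cycle, a pole-to-pole path through $\mu$ to obtain a cycle of $G$ composed entirely of vertices of $\mu$; by planarity this cycle would enclose any vertex living inside the offending non-$\mu$-full virtual edge, contradicting Lemma~\ref{le:00c-test-fixed-embedding}(ii). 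For condition~(ii), I use that a $\mu$-traversable virtual edge, when replaced by its pertinent graph, expands into a connected subgraph of $H(\mu)$ reaching both of its incident faces of $\skel(\tau)$; therefore two components of $G'(\tau,\mu)$ translate into two components of $H(\mu)$, contradicting Lemma~\ref{le:00c-test-fixed-embedding}(i). Condition~(iii) is argued analogously: a $\mu$-touched, non-$\mu$-traversable virtual edge contributes to $H(\mu)$ a subgraph confined to a single face of $\skel(\tau)$, so unless that face already hosts a $\mu$-traversable virtual edge (or, when no such edge exists anywhere, coincides with the common face of all $\mu$-touched virtual edges), $H(\mu)$ is forced to be disconnected.

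For sufficiency, I assume (i)--(iii) and fix any planar embedding $\Gamma^*(G)$ realizing a $\langle 0,0,\gamma\rangle$-drawing of $C$. I build a new embedding $\Gamma'(G)$ by substituting the embedding of $\skel(\tau)$ in $\Gamma^*(G)$ with $\Gamma(\skel(\tau))$, while leaving the embedding of each pertinent graph of each child of $\tau$ exactly as it was in $\Gamma^*(G)$. Each virtual edge of $\skel(\tau)$ keeps its $\mu$-classification, since this is determined by the embedding of its pertinent graph, which is preserved. To apply Lemma~\ref{le:00c-test-fixed-embedding} to $\Gamma'(G)$, I show that any cycle of $\mu$-vertices enclosing a non-$\mu$-vertex in $\Gamma'(G)$ either lies entirely inside one pertinent graph (impossible because $\Gamma^*(G)$ itself satisfied Lemma~\ref{le:00c-test-fixed-embedding}) or projects to a cycle of $\mu$-spined virtual edges enclosing a non-$\mu$-full virtual edge in $\Gamma(\skel(\tau))$, ruled out by~(i); and that $H(\mu)$ is connected because the connected components $H(\sigma,\mu)$ of each child $\sigma$ are welded across $\skel(\tau)$ according to the adjacencies recorded by $G'(\tau,\mu)$ (condition~(ii)) together with the shared-face incidences forced by condition~(iii).

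The main obstacle is in the sufficiency direction: verifying that reassembling the unchanged pertinent-graph embeddings inside the new $\Gamma(\skel(\tau))$ really preserves the connectivity of $H(\mu)$. The delicate point is that the face around each vertex may change, since the face structure along $\skel(\tau)$ is now governed by $\Gamma(\skel(\tau))$ rather than by $\Gamma^*(G)$. I plan to handle this by a case analysis on whether $\tau$ is an S-, P-, or R-node: in each case, the faces of $\skel(\tau)$ correspond, after plugging in the pertinent graphs, to unions of faces of $G$ whose $\mu$-content is controlled precisely by the $\mu$-classifications of the virtual edges bounding them, so that the combinatorial connectivity recorded by $G'(\tau,\mu)$ and by the common face of condition~(iii) faithfully reflects the connectivity of $H(\mu)$ in $\Gamma'(G)$.
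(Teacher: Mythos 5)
Your necessity direction is essentially the paper's: each violation of (i)--(iii) is translated into either a cycle of $\mu$-vertices enclosing a vertex not in $\mu$ or a disconnection of $H(\mu)$, contradicting Lemma~\ref{le:00c-test-fixed-embedding}; that part is sound. The genuine gap is in sufficiency, precisely at the step you yourself flag as ``the main obstacle'' and then resolve incorrectly. You propose to substitute $\Gamma(\skel(\tau))$ into the embedding underlying an existing \zzcd while ``leaving the embedding of each pertinent graph of each child of $\tau$ exactly as it was.'' This does not work. When $\tau$ is a P-node and the circular order of the virtual edges changes, the \emph{flips} of the children's pertinent graphs must in general be re-chosen: a $\mu$-sided child exposes its $\mu$-vertices to only one of its two incident faces of $\skel(\tau)$, and after the reordering that face need no longer be shared with a $\mu$-traversable or $\mu$-sided neighbor, so $H(\mu)$ becomes disconnected even though Properties (ii) and (iii) hold for $\Gamma(\skel(\tau))$. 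Likewise a $\mu$-side-spined, not $\mu$-full child must have its all-$\mu$ boundary path facing the correct neighbor, or a cycle of $\mu$-vertices will enclose a non-$\mu$ vertex. Your claim that ``each virtual edge keeps its $\mu$-classification'' is true only for the embedding-independent labels ($\mu$-touched, $\mu$-full, $\mu$-spined, $\mu$-traversable); what actually matters is the orientation of the $\mu$-sided and $\mu$-side-spined embeddings relative to the \emph{new} neighbors, and that is exactly what must be adjusted. Even for an R-node your recipe fails: the paper does not freeze the pertinent embeddings under the skeleton's flip, it flips the entire drawing around the poles of the root.

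The paper's proof devotes most of its length to exactly the material your plan omits: for a P-node it defines, for each child $\tau'$, the two candidate embeddings $\Gamma^1(\pert(\tau'))$ and its flip $\Gamma^2(\pert(\tau'))$, prescribes which one to take as a function of the $\mu$-traversability/$\mu$-sidedness/$\mu$-spinedness of the two neighbors of $\tau'$ in the new circular order, then (a) proves through a four-way case analysis that the prescriptions arising from two different clusters $\mu$ and $\nu$ can never conflict (using the ancestor/descendant structure of $T$ and the fact that the original drawing satisfies Lemma~\ref{le:00c-test-fixed-embedding}), and (b) verifies Conditions (i) and (ii) of Lemma~\ref{le:00c-test-fixed-embedding} for the resulting embedding by examining consecutive and non-consecutive pairs of $\mu$-spined children. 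None of this is supplied by, or reducible to, your closing remark that faces of $\skel(\tau)$ correspond to unions of faces of $G$; without the flip-selection rule and its cross-cluster consistency proof, the sufficiency direction is not established.
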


\begin{proof}
Suppose that an embedding $\Gamma(\skel(\tau))$ of $\skel(\tau)$ is extensible. That is, \mcgt admits a \zzcd in which the embedding of $\skel(\tau)$ is $\Gamma(\skel(\tau))$. We prove that Properties $(i)$, $(ii)$, and $(iii)$ are satisfied.

\begin{itemize}
\item{\em Property (i).} No cycle $(e_1, \dots, e_h)$ in
$\Gamma(\skel(\tau))$ such that virtual edges $e_1, \dots, e_h$ are
$\mu$-spined contains in its interior a\remove{vertex of $\skel(\tau)$ not
belonging to $\mu$ or a} virtual edge that is not $\mu$-full, as otherwise
in any drawing $\Gamma$ of \mcgt in which the embedding of $\skel(\tau)$ is
$\Gamma(\skel(\tau))$ there would exist a cycle whose vertices all belong
to $\mu$ enclosing a vertex not belonging to $\mu$, thus implying that $R(\mu)$ is not simple or that $\Gamma$ contains an edge-region
crossing.
\item{\em Property (ii).} We have that $G'(\tau,\mu)$ is connected, as
otherwise in any drawing $\Gamma$ of \mcgt in which the embedding of
$\skel(\tau)$ is  $\Gamma(\skel(\tau))$ there would exist a cycle $\cal C$ such that none of the vertices of $\cal C$ belongs
to $\mu$ and such that $\cal C$ separates vertices belonging to $\mu$, thus implying that $R(\mu)$ is not simple or that $\Gamma$ contains an edge-region crossing.
\item{\em Property (iii).}
We distinguish the case in which $G'(\tau,\mu)$ contains at least one vertex and the case in which $G'(\tau,\mu)$ contains no vertex.
\begin{itemize}
\item Suppose that $G'(\tau,\mu)$ contains at least one vertex. Refer to Fig.~\ref{fig:one-vertex}. Then, we
prove that each virtual edge $e$ of $\skel(\tau)$ which is $\mu$-touched
and not $\mu$-traversable shares a face with a $\mu$-traversable virtual
edge in $\Gamma(\skel(\tau))$. For a contradiction, suppose it doesn't.
Consider the two faces $f^1_e$ and $f^2_e$ of $\Gamma(\skel(\tau))$
incident to $e$ and consider the cycle ${\cal C}_s$ of virtual edges
composed of the edges delimiting $f^1_e$ and $f^2_e$, except for $e$. Since
none of the edges of ${\cal C}_s$ is $\mu$-traversable, we have that
$\pert({\cal C}_s)$ (that is the graph obtained as the union of the
pertinent graphs of the virtual edges of ${\cal C}_s$) contains a cycle
$\cal C$ such that none of the vertices of $\cal C$ belongs to $\mu$ and such that $\cal C$ passes through all the
vertices of ${\cal C}_s$. Observe that $\cal C$ contains vertices of $\mu$
in its interior (namely vertices of $\mu$ in $\pert(e)$) and vertices of
$\mu$ in its exterior (namely vertices of $\mu$ in the pertinent graph of a
virtual edge $e'$ of $\skel(\tau)$ which is $\mu$-traversable; such an edge $e'$ exists since $G'(\tau,\mu)$ contains at least one
vertex). Hence, in any drawing $\Gamma$ of \mcgt in which the embedding of
$\skel(\tau)$ is  $\Gamma(\skel(\tau))$, there exists a cycle $\cal C$ such that none of the vertices of $\cal C$ belongs to $\mu$ and such that $\cal C$ separates vertices belonging to $\mu$, thus implying that $R(\mu)$ is not simple or that $\Gamma$ contains an
edge-region crossing.
\item Suppose that $G'(\tau,\mu)$ contains no vertex. Refer to Fig.~\ref{fig:no-vertex}. We prove that
all the $\mu$-touched virtual edges are incident to the same face of
$\Gamma(\skel(\tau))$. For a contradiction, suppose they are not. Consider
the two faces $f^1_e$ and $f^2_e$ of $\Gamma(\skel(\tau))$ incident to any
$\mu$-touched virtual edge $e$ and delimited by cycles ${\cal C}^1_s$ and
${\cal C}^2_s$, respectively. Then, in any drawing $\Gamma$ of \mcgt in
which the embedding of $\skel(\tau)$ is  $\Gamma(\skel(\tau))$, there
exists a cycle $\cal C$ such that none of the vertices of $\cal C$ belongs
to $\mu$, such that all the vertices of $\cal C$ belong either entirely to
$\pert({\cal C}^1_s)$ or entirely to $\pert({\cal C}^2_s)$, say to $\pert({\cal C}^1_s)$, and
such that $\cal C$ contains vertices of $\mu$ in its interior (namely vertices of $\mu$ in $\pert(e)$) and vertices of $\mu$ in its exterior
(namely vertices of $\mu$ in the pertinent graph of a virtual edge $e'$ of
$\skel(\tau)$ which is not incident to $f^1_e$; such an edge $e'$ exists since
not all the $\mu$-touched virtual edges are incident to the same face of
$\Gamma(\skel(\tau))$). This implies that $R(\mu)$ is not simple or that
$\Gamma$ contains an edge-region crossing.
\end{itemize}
\end{itemize}

\begin{figure}[htb]
\centering
\subfigure[]{\includegraphics[scale=1.3]{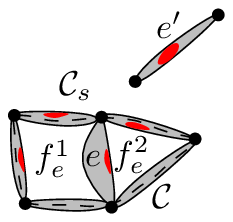}\label{fig:one-vertex}}\hspace{15pt}
\subfigure[]{\includegraphics[scale=1.3]{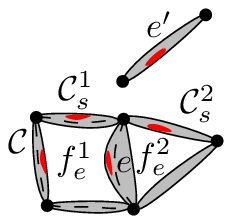}\label{fig:no-vertex}}
\caption{Proof that Property $(iii)$ is satisfied when \mcgt admits a \zzcd in which the embedding of $\skel(\tau)$ is $\Gamma(\skel(\tau))$. (a) $G'(\tau,\mu)$ contains at least one vertex and (b) $G'(\tau,\mu)$ contains no vertex. In both figures, cycle $\cal C$ is represented by a dashed curve.}
\end{figure}

Next, suppose that Properties $(i)$, $(ii)$, and $(iii)$ hold for an embedding $\Gamma(\skel(\tau))$ of $\skel(\tau)$. We prove that $\Gamma(\skel(\tau))$ is extensible.

Let $\Gamma'$ be any \zzcd of \mcgt (clustered graph \mcgt admits a \zzcd by hypothesis). Let $\Gamma'(\skel(\tau))$ be the embedding of $\skel(\tau)$ in $\Gamma'$. If $\Gamma'(\skel(\tau))$ coincides with $\Gamma(\skel(\tau))$, then there is nothing to prove. Otherwise, assume that the two embeddings of $\skel(\tau)$ do not coincide. Observe that this implies that $\tau$ is not an S-node, as the skeleton of an S-node is a cycle, which has a unique embedding.

Suppose that $\tau$ is an R-node. Then, since $\skel(\tau)$ has exactly two embeddings, which are one the flip of the other, $\Gamma(\skel(\tau))$ is the flip of $\Gamma'(\skel(\tau))$. Consider the drawing $\Gamma$ of \mcgt obtained by flipping $\Gamma'$ around the poles of the root $\rho$ of $\cal T$. Observe that $\Gamma$ is a \zzcd since $\Gamma'$ is. Moreover, the embedding of $\skel(\tau)$ in $\Gamma$ is the flip of the embedding $\Gamma'(\skel(\tau))$ of $\skel(\tau)$ in $\Gamma'$, hence it coincides with $\Gamma(\skel(\tau))$. Thus a \zzcd $\Gamma$ of \mcgt in which the embedding of $\skel(\tau)$ is $\Gamma(\skel(\tau))$ exists.

Next, suppose that $\tau$ is a P-node. We show how to construct a \zzcd $\Gamma$ such that the embedding of
$\skel(\tau)$ is $\Gamma(\skel(\tau))$. For every neighbor $\tau'$ of $\tau$ in $\cal T$ (including its parent), denote by $\Gamma^1(\pert(\tau'))$ the embedding of $\pert(\tau')$ in $\Gamma'$. Moreover, denote by $\Gamma^2(\pert(\tau'))$ the embedding of $\pert(\tau')$ obtained by flipping $\Gamma^1(\pert(\tau'))$ around the poles of $\tau$. Drawing $\Gamma$ is such that, for every neighbor $\tau'$ of $\tau$ in $\cal T$, the embedding of $\pert(\tau')$ is either $\Gamma^1(\pert(\tau'))$ or $\Gamma^2(\pert(\tau'))$. Observe that, fixing the embedding of $\skel(\tau)$ to be $\Gamma(\skel(\tau))$ and fixing the embedding of $\pert(\tau')$ to be either $\Gamma^1(\pert(\tau'))$ or $\Gamma^2(\pert(\tau'))$, completely determines $\Gamma$. We show how to choose the embedding of $\pert(\tau')$ in order to satisfy Conditions $(i)$ and $(ii)$ of Lemma~\ref{le:00c-test-fixed-embedding}.

{\bf Satisfying Condition $(i)$ of Lemma~\ref{le:00c-test-fixed-embedding}.} Consider any neighbor $\tau'$ of $\tau$ in $\cal T$ which is $\mu$-sided, for some cluster $\mu \in T$. Also, consider the neighbors $\tau''$ and $\tau'''$ of $\tau$ in $\cal T$ following and preceding $\tau'$ in the circular order of the neighbors of $\tau$ determined by $\Gamma(\skel(\tau))$, respectively. W.l.o.g. assume that if the embedding of $\pert(\tau')$ is $\Gamma^1(\pert(\tau'))$, then a vertex in $\pert(\tau')$ and in $\mu$ is incident to the face of $\Gamma(\skel(\tau))$ to which $\tau''$ is incident; and if the embedding of $\pert(\tau')$ is $\Gamma^2(\pert(\tau'))$, then a vertex in $\pert(\tau')$ and in $\mu$ is incident to the face of $\Gamma(\skel(\tau))$ to which $\tau'''$ is incident. If $\tau''$ is $\mu$-traversable and $\tau'''$ is not (if $\tau'''$ is $\mu$-traversable and $\tau''$ is not), then choose the embedding of $\pert(\tau')$ to be $\Gamma^1(\pert(\tau'))$ (resp. $\Gamma^2(\pert(\tau'))$), see Fig.~\
ref{fig:cond1-a}. If none of $\tau''$ and $\tau'''$ is $\mu$-traversable and if $\tau''$ is $\mu$-sided (if $\tau'''$ is $\mu$-sided), then choose the embedding of $\pert(\tau')$ to be $\Gamma^1(\pert(\tau'))$ (resp. $\Gamma^2(\pert(\tau'))$), see Fig.~\ref{fig:cond1-b}.

\begin{figure}[htb]
\centering
\subfigure[]{\includegraphics[scale=1.3]{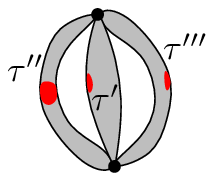}\label{fig:cond1-a}}\hspace{15pt}
\subfigure[]{\includegraphics[scale=1.3]{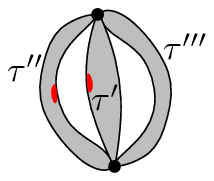}\label{fig:cond1-b}}\hspace{15pt}
\subfigure[]{\includegraphics[scale=1.3]{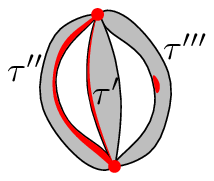}\label{fig:cond2-a}}
\caption{Choosing the embedding of $\pert(\tau')$ in order to satisfy Conditions $(i)$ and $(ii)$ of Lemma~\ref{le:00c-test-fixed-embedding} when $\tau$ is a P-node. (a) If $\tau''$ is $\mu$-traversable and $\tau'''$ is not, then the embedding of $\pert(\tau')$ is $\Gamma^1(\pert(\tau'))$; (b)  If none of $\tau''$ and $\tau'''$ is $\mu$-traversable and if $\tau''$ is $\mu$-sided, then the embedding of $\pert(\tau')$ is $\Gamma^1(\pert(\tau'))$; (c)  If $\tau''$ is $\mu$-spined, then the embedding of $\pert(\tau')$ is $\Gamma^1(\pert(\tau'))$.}
\end{figure}

{\bf Satisfying Condition $(ii)$ of Lemma~\ref{le:00c-test-fixed-embedding}.} Consider any neighbor $\tau'$ of $\tau$ in $\cal T$ which is $\mu$-side-spined and not $\mu$-full, for some cluster $\mu \in T$. Also, consider the neighbors $\tau''$ and $\tau'''$ of $\tau$ in $\cal T$ following and preceding $\tau'$ in the circular order of the neighbors of $\tau$ determined by $\Gamma(\skel(\tau))$, respectively. W.l.o.g. assume that if the embedding of $\pert(\tau')$ is $\Gamma^1(\pert(\tau'))$, then the path $P(\mu)$ delimiting the outer face of $\Gamma^1(\pert(\tau'))$ and composed only of vertices in $\mu$ is incident to the face of $\Gamma(\skel(\tau))$ to which $\tau''$ is incident; and if the embedding of $\pert(\tau')$ is $\Gamma^2(\pert(\tau'))$, then $P(\mu)$ is incident to the face of $\Gamma(\skel(\tau))$ to which $\tau'''$ is incident. If $\tau''$ is $\mu$-spined (resp. if $\tau'''$ is $\mu$-spined), then choose the embedding of $\pert(\tau')$ to be $\Gamma^1(\pert(\tau'))$ (resp. $\Gamma^2(\pert(\
tau'))$), see Fig.~\ref{fig:cond2-a}.

Finally, if the embedding of $\pert(\tau')$, for some neighbor $\tau'$ of $\tau$, has not been determined by the previous choices over all clusters $\mu \in T$, then arbitrarily set it to be $\Gamma^1(\pert(\tau'))$ or $\Gamma^2(\pert(\tau'))$. 

We prove that the drawing $\Gamma$ resulting from the above described algorithm is a \zzcd. Denote by $\Gamma_G$ and $\Gamma'_G$ the embeddings of $G$ as in $\Gamma$ and as in $\Gamma'$, respectively. 

First, we prove that the described algorithm does not determine two different embedding choices for $\pert(\tau')$. Such a proof distinguishes several cases.

{\bf Case 1:} Suppose that the embedding of $\pert(\tau')$ has been determined to be $\Gamma^1(\pert(\tau'))$ because $\tau'$ is $\mu$-sided, because $\tau''$ is $\mu$-traversable (or $\mu$-sided), and because $\tau'''$ is not $\mu$-traversable, for some cluster $\mu \in T$.

{\bf Case 1A:} Suppose that the embedding of $\pert(\tau')$ has been also determined to be $\Gamma^2(\pert(\tau'))$ because $\tau'$ is $\nu$-sided, because $\tau'''$ is $\nu$-traversable (or $\nu$-sided), and because $\tau''$ is not $\nu$-traversable, for some cluster $\nu \in T$ with $\nu \neq \mu$. Then, we have that Condition $(i)$ of Lemma~\ref{le:00c-test-fixed-embedding} is not satisfied by $\Gamma'_G$, a contradiction. Namely, one of the two paths between the poles of $\tau'$ delimiting the outer face of $\Gamma^1(\pert(\tau'))$, say $P(\mu,\nu)$, contains vertices in $\mu$ and in $\nu$, while the other path does not. Hence, if $\Gamma'_G$ is such that $\tau''$ is found before $\tau'''$ when traversing the neighbors of $\tau$ in $\cal T$ starting from $\tau'$ in the direction ``defined by $P(\mu,\nu)$'', then $H(\nu)$ is not connected, given that $\tau'$ and $\tau''$ are not $\nu$-traversable. Otherwise $H(\mu)$ is not connected, given that $\tau'$ and $\tau'''$ are not $\mu$-traversable.

{\bf Case 1B:} Suppose that the embedding of $\pert(\tau')$ has been also determined to be $\Gamma^2(\pert(\tau'))$ because $\tau'$ is $\nu$-side-spined and not $\nu$-full, and because $\tau'''$ is $\nu$-spined and $\tau''$ is not, for some cluster $\nu \in T$ with $\nu \neq \mu$. Then one of the two paths between the poles of $\tau'$ delimiting the outer face of $\Gamma^1(\pert(\tau'))$, say $P(\nu)$, entirely belongs to $\nu$ and the same path also contains a vertex in $\mu$. This gives rise to a contradiction if $\mu$ is not a descendant of $\nu$ in $T$. Hence, assume that $\mu$ is a descendant of $\nu$ in $T$. If $\Gamma'_G$ is such that $\tau'''$ is found before $\tau''$ when traversing the neighbors of $\tau$ in $\cal T$ starting from $\tau'$ in the direction ``defined by $P(\nu)$'', then $H(\mu)$ is not connected, given that $\tau'$ and $\tau'''$ are not $\mu$-traversable. Otherwise there exists a cycle in $\Gamma'_G$ whose vertices belong to $\nu$ and whose interior contains in $\Gamma'_G$ a vertex 
not belonging to $\nu$, thus implying that Condition $(ii)$ of Lemma~\ref{le:00c-test-fixed-embedding} is not satisfied by $\Gamma'_G$. Namely, such a cycle is composed of $P(\nu)$ and of any path belonging to $\nu$ in $\pert(\tau''')$; the vertex not in $\nu$ in the interior of this cycle is either a vertex in $\pert(\tau')$ not in $\nu$  (which exists since $\tau'$ is not $\nu$-full), or a vertex in $\pert(\tau'')$ not in $\nu$ (which exists since $\tau''$ is not $\nu$-spined and hence not $\nu$-full), depending on the ``position'' of the outer face in $\Gamma'_G$.

{\bf Case 2:} Suppose that the embedding of $\pert(\tau')$ has been determined to be $\Gamma^1(\pert(\tau'))$ because $\tau'$ is $\mu$-side-spined and not $\mu$-full, and because $\tau''$ is $\mu$-spined and $\tau'''$ is not, for some cluster $\mu \in T$.

{\bf Case 2A:} Suppose that the embedding of $\pert(\tau')$ has been also determined to be $\Gamma^2(\pert(\tau'))$ because $\tau'$ is $\nu$-sided, because $\tau'''$ is $\nu$-traversable (or $\nu$-sided), and because $\tau''$ is not $\nu$-traversable, for some cluster $\nu \in T$ with $\nu \neq \mu$. Such a case can be discussed analogously to Case 1B.

{\bf Case 2B:} Suppose that the embedding of $\pert(\tau')$ has been also determined to be $\Gamma^2(\pert(\tau'))$ because $\tau'$ is $\nu$-side-spined and not $\nu$-full, and because $\tau'''$ is $\nu$-spined and $\tau''$ is not, for some cluster $\nu \in T$ with $\nu \neq \mu$. Observe that, since $\mu$ and $\nu$ share vertices, one of them is the ancestor of the other one. Assume, w.l.o.g., that $\mu$ is an ancestor of $\nu$. Since $\tau'''$ is $\nu$-spined, it is also $\mu$-spined, a contradiction.

Second, we prove that Conditions $(i)$ and $(ii)$ of Lemma~\ref{le:00c-test-fixed-embedding} are satisfied by $\Gamma_G$.

Condition $(i)$: Consider any cluster $\mu$. We denote by $H(\mu,\Gamma)$ the graph $H(\mu)$ obtained from $\Gamma$ as described before Lemma~\ref{le:00c-test-fixed-embedding}. Also, we denote by $H(\mu,\Gamma')$ the graph $H(\mu)$ obtained from $\Gamma'$, as described before Lemma~\ref{le:00c-test-fixed-embedding}. Observe that, by Lemma~\ref{le:00c-test-fixed-embedding}, $H(\mu,\Gamma')$ is connected since $\Gamma'$ is a \zzcd. We prove that $H(\mu,\Gamma)$ is connected. Suppose, for a contradiction, that it is not.

For any neighbor $\tau'$ of $\tau$ in $\cal T$, we can assume that $\Gamma^1(\pert(\tau'))$ is neither $\mu$-unfeasible, nor $\mu$-kernelized, nor $\mu$-bisided. Namely:

\begin{itemize}
\item If $\Gamma^1(\pert(\tau'))$ is $\mu$-unfeasible, we would have that $H(\mu,\Gamma')$ is not connected, given that the embedding of $\pert(\tau')$ in $\Gamma'_G$ is the same as in $\Gamma_G$, up to a flip, thus leading to a contradiction.
\item If $\Gamma^1(\pert(\tau'))$ is $\mu$-kernelized, then either we would have that there exists a neighbor of $\tau$ in $\cal T$ different from $\tau'$ containing a vertex in $\mu$, thus implying that $H(\mu,\Gamma')$ is not connected, given that the embedding of $\pert(\tau')$ in $\Gamma'_G$ is the same as in $\Gamma_G$, up to a flip, or we would have that there exists no neighbor of $\tau$ in $\cal T$ different from $\tau'$ containing a vertex in $\mu$, thus implying that $H(\mu,\Gamma)$ is connected; in both cases this leads to a contradiction.
\item If $\Gamma^1(\pert(\tau'))$ is $\mu$-bisided, then, in order for $H(\mu,\Gamma')$ to be connected, all the neighbors of $\tau$ in $\cal T$ different from $\tau'$ are $\mu$-traversable, hence $H(\mu,\Gamma')$ would also be connected, given that the embedding of $\pert(\sigma)$ in $\Gamma'_G$ is the same as in $\Gamma_G$, up to a flip, for each neighbor $\sigma$ of $\tau$ in $\cal T$, thus leading to a contradiction.
\end{itemize}

Hence, we can assume that each neighbor $\tau'$ of $\tau$ in $\cal T$ is either $\mu$-sided, or $\mu$-traversable, or it contains no vertex of $\mu$. Consider a maximal sequence $\tau_1,\tau_2,\dots,\tau_k$ of neighbors of $\tau$ in $\cal T$, ordered as in $\Gamma(\skel(\tau))$, such that $\tau_i$ contains a vertex in $\mu$. We distinguish two cases.

\begin{itemize}
\item If there exists no neighbor of $\tau$ in $\cal T$ which is $\mu$-traversable, then, by Property $(iii)$ of $\Gamma(\skel(\tau))$, there exist at most two neighbors $\tau_1$ and $\tau_2$ of $\tau$ in $\cal T$ which are $\mu$-sided, and moreover they are adjacent in $\Gamma(\skel(\tau))$. By construction, the embedding of $\pert(\tau_1)$ is chosen to be $\Gamma^1(\pert(\tau_1))$ or $\Gamma^2(\pert(\tau_1))$ so that a vertex in $\pert(\tau_1)$ and in $\mu$ is incident to the face of $\Gamma(\skel(\tau))$ to which $\tau_2$ is incident. Analogously, the embedding of $\pert(\tau_2)$ is chosen to be $\Gamma^1(\pert(\tau_2))$ or $\Gamma^2(\pert(\tau_2))$ so that a vertex in $\pert(\tau_2)$ and in $\mu$ is incident to the face of $\Gamma(\skel(\tau))$ to which $\tau_1$ is incident. Hence, the subgraph of $H(\mu,\Gamma)$ induced by the vertices in $\tau_1$ and the subgraph of $H(\mu,\Gamma)$ induced by the vertices in $\tau_2$ are connected by an edge; moreover, both such subgraphs are connected. It follows that 
$H(\mu,\Gamma)$ is connected, a contradiction.
\item If there exist neighbors of $\tau$ in $\cal T$ which are $\mu$-traversable, then, by Properties $(ii)$ and $(iii)$ of $\Gamma(\skel(\tau))$, the order of the neighbors of $\tau$ in $\cal T$ is $\tau_1,\tau_2,\dots,\tau_k$, where $\tau_2,\dots,\tau_{k-1}$ are $\mu$-traversable, where $\tau_1$ and $\tau_k$ are $\mu$-sided (and they might not exist), and where $k>1$. By construction, the embedding of $\pert(\tau_1)$ is chosen to be $\Gamma^1(\pert(\tau_1))$ or $\Gamma^2(\pert(\tau_1))$ so that a vertex in $\pert(\tau_1)$ and in $\mu$ is incident to the face of $\Gamma(\skel(\tau))$ to which $\tau_2$ is incident. Analogously, the embedding of $\pert(\tau_k)$ is chosen to be $\Gamma^1(\pert(\tau_k))$ or $\Gamma^2(\pert(\tau_k))$ so that a vertex in $\pert(\tau_k)$ and in $\mu$ is incident to the face of $\Gamma(\skel(\tau))$ to which $\tau_{k-1}$ is incident. Hence, the subgraph of $H(\mu,\Gamma)$ induced by the vertices in $\tau_1$ is connected by an edge to the subgraph of $H(\mu,\Gamma)$ induced by the 
vertices in $\tau_2,\dots,\tau_{k-1}$; moreover, the subgraph of $H(\mu,\Gamma)$ induced by the vertices in $\tau_k$ is connected by an edge to the subgraph of $H(\mu,\Gamma)$ induced by the vertices in $\tau_2,\dots,\tau_{k-1}$; furthermore, these three subgraphs are connected. It follows that $H(\mu,\Gamma)$ is connected, a contradiction.
\end{itemize}

Condition $(ii)$: Suppose, for a contradiction, that $\Gamma_G$ contains a cycle $C$ whose vertices all belong to the same cluster $\mu$ and whose interior contains a vertex $v$ not in $\mu$. If $C$ entirely belongs to $\pert(\tau')$, for some neighbor $\tau'$ of $\tau$ in $\cal T$, then $C$ contains $v$ in its interior also in $\Gamma'_G$, given that the embedding of $\pert(\tau')$ in $\Gamma_G$ is the same as in $\Gamma'_G$, up to a flip. However, by Lemma~\ref{le:00c-test-fixed-embedding}, this implies that $\Gamma'$ is not a \zzcd, a contradiction. Otherwise, $C$ is composed of two paths connecting the poles of $\tau$, where the first path $P_{\mu}(\tau'')$  entirely belongs to $\pert(\tau'')$ and the second path $P_{\mu}(\tau''')$ entirely belongs to $\pert(\tau''')$, for two distinct neighbors $\tau''$ and $\tau'''$ of $\tau$ in $\cal T$. Hence, $\tau''$ and $\tau'''$ are $\mu$-spined. Moreover, they are both $\mu$-side-spined (and possibly $\mu$-full). Namely, if one of them, say $\tau''$, is $\mu$-
central-spined, then $C$ contains a vertex not in $\mu$ in its interior in any planar embedding of $G$ in which the planar embedding of $\pert(\tau'')$ is $\Gamma^1(\pert(\tau''))$ or $\Gamma^2(\pert(\tau''))$, thus implying that $\Gamma'_G$ does not satisfy Condition $(ii)$ of Lemma~\ref{le:00c-test-fixed-embedding} and hence that $\Gamma'$ is not a \zzcd, a contradiction. Hence, assume that $\tau''$ and $\tau'''$ are both $\mu$-side-spined. If $\tau''$ is not $\mu$-full, then denote by $P'_{\mu}(\tau'')$ the path that is composed only of vertices of $\mu$, that connects the poles of $\tau$, and that delimits the outer face of $\Gamma^1(\pert(\tau''))$. If $\tau''$ is $\mu$-full, then denote by $P'_{\mu}(\tau'')$ any path that connects the poles of $\tau$ and that delimits the outer face of $\Gamma^1(\pert(\tau''))$. Analogously define path $P'_{\mu}(\tau''')$ in $\Gamma^1(\pert(\tau'''))$.

\begin{itemize}
\item Assume that $\tau''$ and $\tau'''$ are consecutive in $\Gamma(\skel(\tau))$. We distinguish the following cases.
\begin{itemize}
\item If $\tau''$ and $\tau'''$ are both $\mu$-full, then $C$ contains in its interior only vertices of $\mu$, a contradiction.
\item If $\tau''$ is $\mu$-full and $\tau'''$ is not, then by construction the embedding of $\pert(\tau''')$ is chosen in such a way that $P'_{\mu}(\tau''')$ is incident to the face of $\Gamma(\skel(\tau))$ to which $\tau''$ is incident. Since all the vertices in the interior of cycles $P_{\mu}(\tau'')\cup P'_{\mu}(\tau'')$ and $P_{\mu}(\tau''')\cup P'_{\mu}(\tau''')$ belong to $\mu$, it follows that $C$ contains a vertex not in $\mu$ in its interior if and only if cycle $P'_{\mu}(\tau'')\cup P'_{\mu}(\tau''')$ does. However, such a cycle does not contain any vertex in its interior at all, a contradiction.
\item The case in which $\tau'''$ is $\mu$-full and $\tau''$ is not can be discussed analogously to the previous one.
\item If neither $\tau''$ nor $\tau'''$ is $\mu$-full, then by construction the embedding of $\pert(\tau'')$ is chosen in such a way that $P'_{\mu}(\tau'')$ is incident to the face of $\Gamma(\skel(\tau))$ to which $\tau'''$ is incident, and the embedding of $\pert(\tau''')$ is chosen in such a way that $P'_{\mu}(\tau''')$ is incident to the face of $\Gamma(\skel(\tau))$ to which $\tau''$ is incident. Since all the vertices in the interior of cycles $P_{\mu}(\tau'')\cup P'_{\mu}(\tau'')$ and $P_{\mu}(\tau''')\cup P'_{\mu}(\tau''')$ belong to $\mu$, it follows that $C$ contains a vertex not in $\mu$ in its interior if and only if cycle $P'_{\mu}(\tau'')\cup P'_{\mu}(\tau''')$ does. However, such a cycle does not contain any vertex in its interior at all, a contradiction.
\end{itemize}

\item Assume that $\tau''$ and $\tau'''$ are not consecutive in $\Gamma(\skel(\tau))$. By Property $(i)$ of $\Gamma(\skel(\tau))$, there exists no cycle of $\mu$-spined virtual edges in $\Gamma(\skel(\tau))$ containing in its interior a virtual edge that is not $\mu$-full. Hence, all the virtual edges ``between'' $\tau''$ and $\tau'''$ are $\mu$-full. Denote by $\sigma''$ and $\sigma'''$ the neighbors of $\tau$ that are between $\tau''$ and $\tau'''$ and that are adjacent to $\tau''$ and $\tau'''$, respectively. We distinguish the following cases.

\begin{itemize}
\item If $\tau''$ and $\tau'''$ are both $\mu$-full, then $C$ contains in its interior only vertices of $\mu$, a contradiction.
\item If $\tau''$ is $\mu$-full and $\tau'''$ is not, then by construction the embedding of $\pert(\tau''')$ is chosen in such a way that $P'_{\mu}(\tau''')$ is incident to the face of $\Gamma(\skel(\tau))$ to which $\sigma''$ is incident. Since all the vertices in the interior of cycles $P_{\mu}(\tau'')\cup P'_{\mu}(\tau'')$ and $P_{\mu}(\tau''')\cup P'_{\mu}(\tau''')$ belong to $\mu$, it follows that $C$ contains a vertex not in $\mu$ in its interior if and only if cycle $P'_{\mu}(\tau'')\cup P'_{\mu}(\tau''')$ does. However, such a cycle only contains vertices belonging to the virtual edges between $\tau''$ and $\tau'''$, that are all $\mu$-full, a contradiction.
\item The case in which $\tau'''$ is $\mu$-full and $\tau''$ is not can be discussed analogously to the previous one.
\item If neither $\tau''$ nor $\tau'''$ is $\mu$-full, then by construction the embedding of $\pert(\tau'')$ is chosen in such a way that $P'_{\mu}(\tau'')$ is incident to the face of $\Gamma(\skel(\tau))$ to which $\sigma'''$ is incident, and the embedding of $\pert(\tau''')$ is chosen in such a way that $P'_{\mu}(\tau''')$ is incident to the face of $\Gamma(\skel(\tau))$ to which $\sigma''$ is incident. Since all the vertices in the interior of cycles $P_{\mu}(\tau'')\cup P'_{\mu}(\tau'')$ and $P_{\mu}(\tau''')\cup P'_{\mu}(\tau''')$ belong to $\mu$, it follows that $C$ contains a vertex not in $\mu$ in its interior if and only if cycle $P'_{\mu}(\tau'')\cup P'_{\mu}(\tau''')$ does. However, such a cycle only contains vertices belonging to the virtual edges between $\tau''$ and $\tau'''$, that are all $\mu$-full, a contradiction.
\end{itemize}
\end{itemize}
This completes the proof of the lemma.
\end{proof}

Based on the characterization of the extensible embeddings given in the previous Lemma, we give an algorithm for testing whether a given \cg admits a \zzcd.

\begin{theorem}\label{th:00c-test}
Let \mcgt be a \cg such that $G$ is biconnected.
There exists a polynomial-time algorithm to test whether \mcgt admits a \zzcd.
\end{theorem}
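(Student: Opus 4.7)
The plan is to test the existence of a \zzcd of \mcgt by performing a bottom-up traversal of the SPQR-tree $\cal T$ of $G$, rooted at an arbitrary Q-node $\rho$, and checking at every node $\tau$ whether $\skel(\tau)$ admits an embedding satisfying Properties $(i)$, $(ii)$, $(iii)$ of Lemma~\ref{le:00c-test-embedding}. By that lemma, an embedding of $\skel(\tau)$ is extensible if and only if these properties hold; composing compatible embeddings over all nodes produces a planar embedding of $G$ that meets Conditions $(i)$ and $(ii)$ of Lemma~\ref{le:00c-test-fixed-embedding}, and hence a \zzcd.

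The first step computes, for every node $\tau\in{\cal T}$ and every cluster $\mu\in T$, the type of $\pert(\tau)$ with respect to $\mu$, that is, whether $\pert(\tau)$ is $\mu$-traversable, $\mu$-sided, $\mu$-bisided, $\mu$-kernelized, or $\mu$-unfeasible, together with the auxiliary flags $\mu$-touched, $\mu$-full, $\mu$-spined, $\mu$-side-spined, $\mu$-central-spined. For Q-nodes these are read directly from the endpoints of the underlying edge; for higher nodes they are derived from the types of the children together with the chosen embedding of $\skel(\tau)$. Because $\cal T$ has $O(n)$ nodes whose skeletons contain $O(n)$ vertices in total, and because there are $O(n)$ clusters, this information can be maintained in polynomial time.

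The second step verifies the three properties of Lemma~\ref{le:00c-test-embedding} at $\skel(\tau)$ for every cluster $\mu$. The case analysis follows the type of $\tau$: S-nodes have a unique embedding, so we only need to check the conditions; R-nodes have exactly two embeddings, mirror of one another, and both can be checked in turn; Q-nodes are trivial. The main obstacle is the P-node case, where the choice of a cyclic order of the parallel virtual edges between the poles of $\tau$ is nontrivial and must simultaneously respect the constraints imposed by all clusters.

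For a P-node, the properties translate into combinatorial constraints on the cyclic order of the virtual edges. Property $(ii)$ forces, for each $\mu$, the $\mu$-traversable virtual edges to form a contiguous arc; Property $(iii)$ forces each $\mu$-touched non-$\mu$-traversable virtual edge to be adjacent to that arc (or, when no $\mu$-traversable edge exists, forces all $\mu$-touched virtual edges to share a face); Property $(i)$ forbids that two $\mu$-spined edges enclose a non-$\mu$-full one, which bounds how $\mu$-spined edges may be interleaved with non-$\mu$-full ones. All of these are consecutivity-and-adjacency constraints on a circular sequence, so I would encode the constraints coming from the various clusters into a PC-tree on the virtual edges of $\skel(\tau)$ and intersect them; if the resulting PC-tree still admits an admissible cyclic order, we choose one and recurse, otherwise we report that no \zzcd exists. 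Correctness follows from Lemma~\ref{le:00c-test-embedding} applied at each node, and polynomial running time follows from the linear size of $\cal T$ and the polynomial cost of the PC-tree reductions over the $O(n)$ clusters.
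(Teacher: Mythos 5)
Your outline matches the paper's overall architecture (bottom-up traversal of the SPQR-tree, checking Properties $(i)$--$(iii)$ of Lemma~\ref{le:00c-test-embedding} at each skeleton, consecutivity constraints solved with a PQ-/PC-tree at P-nodes), but it omits the step that carries most of the technical weight in the paper: after an extensible embedding of $\skel(\tau)$ has been found, one must still choose a \emph{flip} of the already-computed embedding $\Gamma(\pert(\tau_i))$ of each child. Lemma~\ref{le:00c-test-embedding} characterizes extensible embeddings of the \emph{skeleton} only; the classifications that depend on the actual embedding of a pertinent graph ($\mu$-sided, $\mu$-bisided, $\mu$-side-spined, $\mu$-central-spined) leave a binary degree of freedom per child, and these choices interact across different clusters. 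For instance, a $\mu$-sided child of a P-node must expose its $\mu$-vertices toward the face shared with a $\mu$-traversable sibling to keep $H(\mu)$ connected, while the same child may simultaneously be $\nu$-side-spined for another cluster $\nu$ and be forced to expose its all-$\nu$ path toward a different face to avoid enclosing a non-$\nu$ vertex. The paper resolves these interacting binary choices by building a 2-SAT formula over flip variables $x_i$, and a nontrivial part of the proof of Lemma~\ref{le:00c-test-embedding} (Cases 1A, 1B, 2A, 2B) is devoted to showing that conflicting forced assignments cannot arise when a \zzcd exists. Your phrase ``we choose one and recurse'' silently assumes that any admissible cyclic order with arbitrary flips works, which is not established and is precisely what the 2-SAT machinery is there to handle.

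A second, smaller gap: you root $\cal T$ at ``an arbitrary Q-node,'' but the choice of root fixes which edge lies on the outer face, and Conditions such as ``no cycle of $\mu$-vertices encloses a non-$\mu$ vertex'' depend on that choice. The paper explicitly repeats the whole test for every Q-node as root. Finally, note that the inductive hypothesis should be stated in terms of \emph{extensible} embeddings of the children's pertinent graphs (the paper's ``visible nodes,'' which skip over S-node children), not merely embeddings satisfying the skeleton properties; this is what licenses composing the per-node decisions into a single \zzcd at the root.
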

\begin{proof}
Let $\cal T$ be the SPQR-tree of $G$. Consider any Q-node $\rho$ of $\cal T$ corresponding
to an edge $e$ of $G$ and root $\cal T$ at $\rho$. Such a choice corresponds to assuming
that any considered planar embedding of $G$ has $e$ incident to the outer face. In the
following, we describe how to test whether \mcgt admits a \zzcd under the above
assumption. The repetition of such a test for all possible choices of $\rho$ results in a
test of whether \mcgt admits a \zzcd.

First, we perform a preprocessing step to compute the following
information. For each node $\tau \in \cal T$ and for each cluster $\mu \in
T$, we label each virtual edge $(u,v)$ of $\skel(\tau)$ with flags stating
whether:
\begin{inparaenum}[(i)]
\item $(u,v)$ is $\mu$-touched;
\item $(u,v)$ is $\mu$-full;
\item $(u,v)$ is $\mu$-spined; and
\item $(u,v)$ is $\mu$-traversable.
\end{inparaenum}
Observe that such information can be easily computed in polynomial time based on whether
the vertices of $\pert(u,v)$ belong to $\mu$ or not. In particular, such information does not change if the embedding of $\pert(u,v)$ varies (while whether $(u,v)$ is $\mu$-sided,  $\mu$-bisided,  $\mu$-kernelized,  $\mu$-unfeasible,  $\mu$-side-spined, or  $\mu$-central-spined depends on the actual embedding of $\pert(u,v)$). 

Second, we traverse $\cal T$ bottom-up. For every P-node and every R-node $\tau$ of $\cal T$, the \emph{visible
nodes} of $\tau$ are the children of $\tau$ that are not $S$-nodes plus the children of each S-node that is a child of $\tau$. At each step, we consider
either a P-node or an R-node $\tau$ with visible nodes $\tau_1,\dots,\tau_k$. We inductively assume that, for each visible node $\tau_i$, with $1 \leq i
\leq k$, an extensible embedding $\Gamma(\pert(\tau_i))$ has been computed, together with the information whether $\Gamma(\pert(\tau_i))$ is $\mu$-sided,  $\mu$-bisided, $\mu$-kernelized, $\mu$-unfeasible, $\mu$-side-spined, or $\mu$-central-spined. 

We show how to test whether an extensible embedding $\Gamma(\pert(\tau))$ of $\pert(\tau)$ exists. Such a test consists of two phases. We first test whether $\skel(\tau)$ admits an extensible embedding $\Gamma(\skel(\tau))$. In the negative case, we can conclude that \mcgt has no \zzcd with $e$ being an edge incident to the outer face. In the positive case, we also test whether a flip of each $\Gamma(\pert(\tau_i))$ exists such that the resulting embedding $\Gamma(\pert(\tau))$ is extensible.

{\bf Extensible embedding of the skeleton of $\tau$.}

{\em Suppose that $\tau$ is an R-node.} We consider any of the two embeddings of $\skel(\tau)$, say $\Gamma(\skel(\tau))$, and test whether $\Gamma(\skel(\tau))$ is extensible (observe that, if an embedding of $\skel(\tau)$ is extensible, the other embedding is extensible as well, as can be deduced from the proof of Lemma~\ref{le:00c-test-embedding}). The test whether $\Gamma(\skel(\tau))$ is extensible is equivalent to the test whether Properties $(i)$, $(ii)$, and $(iii)$ of Lemma~\ref{le:00c-test-embedding} hold for $\skel(\tau)$ and can be performed in polynomial time as follows. Concerning Property $(i)$, for each cluster $\mu$, we consider the embedded subgraph $\Gamma_{\mu}(\skel(\tau))$ of $\Gamma(\skel(\tau))$ composed of the virtual edges which are $\mu$-spined; for each internal face $f$ of $\Gamma_{\mu}(\skel(\tau))$, we check whether all the edges of $\Gamma(\skel(\tau))$ contained in the interior of $f$, if any, are $\mu$-full. If not, we conclude that $\Gamma(\skel(\tau))$ is not extensible. 
If such a test succeeds for all the faces of $\Gamma_{\mu}(\skel(\tau))$ and for all the clusters $\mu \in \cal T$, we continue. Concerning Properties $(ii)$ and $(iii)$, for each cluster $\mu$, we build the auxiliary graph $G'(\tau,\mu)$; this is done by looking at the faces of $\Gamma(\skel(\tau))$, by inserting a vertex $v_f$ in $G'(\tau,\mu)$ for each face $f$ which is incident to a $\mu$-traversable edge (this information can be obtained by traversing the edges incident to the face) and by inserting an edge between two vertices of $G'(\tau,\mu)$ if the corresponding faces of $\Gamma(\skel(\tau))$ share a $\mu$-traversable edge. Then, if $G'(\tau,\mu)$ is not connected, we conclude that $\Gamma(\skel(\tau))$ is not extensible. Otherwise, if $G'(\tau,\mu)$ contains at least one vertex, we check whether each virtual edge of $\skel(\tau)$ which is $\mu$-touched and not $\mu$-traversable shares a face with a $\mu$-traversable virtual edge in $\Gamma(\skel(\tau))$. If $G'(\tau,\mu)$ contains no vertex, then 
we check whether all the $\mu$-touched virtual edges are incident to the same face of $\Gamma(\skel(\tau))$. Such tests can be easily performed by traversing the edges of $\Gamma(\skel(\tau))$. If any of such tests fails, we conclude that $\Gamma(\skel(\tau))$ is not extensible. If all such tests succeed, for all clusters $\mu \in \cal T$, we conclude that $\Gamma(\skel(\tau))$ is extensible.


{\em Suppose that $\tau$ is a P-node}. We check whether there exists an extensible embedding $\Gamma(\skel(\tau))$ of $\skel(\tau)$ as
follows. We impose constraints on the ordering of the virtual edges of
$\tau$. 

A first set of constraints establishes that $\Gamma(\skel(\tau))$ satisfies Property $(i)$ of Lemma~\ref{le:00c-test-embedding}. Namely, for each cluster $\mu$:

\begin{enumerate}[(a)]
\item We constrain all the $\mu$-full virtual edges to be consecutive; 
\item if there exists no $\mu$-full virtual edge, then we constrain each pair of $\mu$-spined virtual edges to be consecutive; and
\item if there exists at least one $\mu$-full virtual edge, then, for each $\mu$-spined virtual edge, we constrain such an edge and all the $\mu$-full virtual edges to be consecutive.
\end{enumerate}

A second set of constraints establishes that $\Gamma(\skel(\tau))$ satisfies Properties $(ii)$ and $(iii)$ of Lemma~\ref{le:00c-test-embedding}. Namely, for each cluster $\mu$:

\begin{enumerate}[(a)]
\item We constrain all the $\mu$-traversable virtual edges to be consecutive; 
\item if there exists no $\mu$-traversable virtual edge, then we constrain each pair of $\mu$-touched virtual edges to be consecutive; and
\item if there exists at least one $\mu$-traversable virtual edge, then, for each $\mu$-touched virtual edge, we constrain such an edge and all the $\mu$-full virtual edges to be consecutive.
\end{enumerate}


We check whether an ordering of the virtual edges of $\skel(\tau)$ that enforces all these constraints exists by using the PQ-tree data structure~\cite{bl-tcop-76}. If such an ordering does not exist, we conclude that $\skel(\tau)$ admits no extensible embedding. Otherwise, we have an embedding $\Gamma(\skel(\tau))$ of $\skel(\tau)$ which satisfies Properties $(i)$--$(iii)$, hence it is extensible.

{\bf Extensible embedding of the pertinent graph of $\tau$.}

We now determine an extensible embedding $\Gamma(\pert(\tau))$ of $\pert(\tau)$, if one exists, by choosing the flip of the embedding $\Gamma(\pert(\tau_i))$ of each visible node $\tau_i$ of $\tau$ in such a way that $\Gamma(\pert(\tau))$ satisfied Properties $(i)$ and $(ii)$ of Lemma~\ref{le:00c-test-fixed-embedding}. Observe that the choice of the flip of the embedding $\Gamma(\pert(\tau_i))$ of each visible node $\tau_i$ of $\tau$, together with the choice of the embedding of $\skel(\tau)$ to be $\Gamma(\skel(\tau))$, completely determines $\Gamma(\pert(\tau))$.

We will construct a 2-SAT formula $F$ such that $\pert(\tau)$ admits an extensible embedding if and only if $F$ is satisfiable. We initialize $F=\emptyset$. Then, for each visible node $\tau_i$  of $\tau$, we assign an arbitrary
flip to $\tau_i$ and define a boolean variable $x_i$ that is positive if
$\tau_i$ has the assigned flip and negative otherwise.

We introduce some clauses in $F$ in order to ensure that $\Gamma(\pert(\tau))$ satisfies Property $(ii)$ of Lemma~\ref{le:00c-test-fixed-embedding}. 

For each cluster $\mu$, we consider the embedded subgraph $\Gamma_{\mu}(\skel(\tau))$ of $\Gamma(\skel(\tau))$ containing all the $\mu$-spined virtual edges. Note that, since $\Gamma(\skel(\tau))$ satisfies Property $(i)$ of Lemma~\ref{le:00c-test-embedding}, each edge of $\Gamma_{\mu}(\skel(\tau))$ that is
not incident to the outer face of $\Gamma_{\mu}(\skel(\tau))$ is $\mu$-full. Consider any edge $g$ in $\Gamma_{\mu}(\skel(\tau))$ such that:

\begin{itemize}
\item $g$ is incident to an internal face $f_g$ of $\Gamma_{\mu}(\skel(\tau))$;
\item $g$ corresponds either to a P-node $\tau_i$, or to an R-node $\tau_i$, or to an S-node $\sigma$ which is parent of a node $\tau_i$; and 
\item $\Gamma(\pert(\tau_i))$ is either $\mu$-side-spined or $\mu$-central-spined.
\end{itemize}

Then:

\begin{enumerate}[(a)]
\item If $\Gamma(\pert(\tau_i))$ is $\mu$-central-spined, then we conclude that $\pert(\tau)$ has no extensible embedding (with $e$ incident to the outer face);
\item if $\Gamma(\pert(\tau_i))$ is $\mu$-side-spined, add clause $\{x_i\}$ to $F$ if the default flip of $\Gamma(\pert(\tau_i))$ does not place any vertex not in $\mu$ on $f_g$, and add clause $\{\neg x_i\}$ to $F$ if the default flip of $\Gamma(\pert(\tau_i))$ places a vertex not in $\mu$ on $f_g$.
\end{enumerate}

We next introduce some clauses in $F$ in order to ensure that $\Gamma(\pert(\tau))$ satisfies Property $(i)$ of Lemma~\ref{le:00c-test-fixed-embedding}.


Suppose that there exists a visible node $\tau_i$ of $\tau$ such that:

\begin{itemize}
\item $\Gamma(\pert(\tau_i))$ is $\mu$-bisided; and 
\item if $\tau_i$ is child of an S-node $\sigma$, then no child of $\sigma$ is $\mu$-traversable.
\end{itemize}

Then, we check:

\begin{enumerate}[(a)]
\item Whether all the visible nodes of $\tau$ that are children of $\tau$, except for $\tau_i$, are $\mu$-traversable; and 
\item whether, for each S-node $\gamma$ that is child of $\tau$ and that is not the parent of $\tau_i$, at least one child of $\gamma$ is $\mu$-traversable.
\end{enumerate}

If the check fails, we conclude that $\pert(\tau)$ has no extensible embedding (with $e$ incident to the outer face), otherwise we continue. 


Suppose that there exists a visible node $\tau_i$ of $\tau$ such that:

\begin{itemize}
\item $\Gamma(\pert(\tau_i))$ is $\mu$-sided;  
\item if $\tau_i$ is child of an S-node $\sigma$, then no child of $\sigma$ is $\mu$-traversable; and
\item $\tau_i$ shares exactly one face with a $\mu$-traversable or $\mu$-sided node $\tau_j$.
\end{itemize}

Then, we add either clause $\{\neg x_i\}$ or $\{x_i\}$ to $F$ depending on whether $\Gamma(\pert(\tau_i))$ has to be flipped or not, respectively, in order to make the vertices of $\pert(\tau_i)$ belonging to $\mu$ be incident to the face that $\tau_i$ shares with $\tau_j$.


For each $S$-node $\sigma$ child of $\tau$ such that:

\begin{itemize}
\item no child of $\sigma$ is $\mu$-traversable; and 
\item no child of $\tau$ different from $\sigma$ is $\mu$-touched.
\end{itemize}

Consider each pair of visible nodes $\tau_i$ and $\tau_j$ children of $\sigma$ that are both $\mu$-sided. Then:

\begin{enumerate}[(a)]
\item If $\Gamma(\pert(\tau_i))$ and $\Gamma(\pert(\tau_j))$ have vertices of $\mu$ incident to the same face when they both have their default flip, then add clauses $(x_i \vee \neg x_j)$ and  $(\neg x_i \vee x_j)$ to $F$; and 
\item if $\Gamma(\pert(\tau_i))$ and $\Gamma(\pert(\tau_j))$ do not have vertices of $\mu$ incident to the same face when they both have their default flip, then add clauses $(x_i \vee x_j)$ and $(\neg x_i \vee \neg x_j)$ to $F$.
\end{enumerate}

Observe that all the described checks and embedding choices, and the construction and solution of the 2-SAT formula can be easily performed in polynomial time. Finally, once an embedding $\Gamma(\pert(\tau))$ of $\pert(\tau)$ has been computed, by traversing $\Gamma(\pert(\tau))$ it can be determined in polynomial time whether such an embedding is $\mu$-sided,  $\mu$-bisided, $\mu$-kernelized, $\mu$-side-spined, or $\mu$-central-spined. This concludes the proof of the theorem. \end{proof}


We now turn our attention to establish bounds on the minimum value of
$\gamma$ in a \zzcd of a \cg.

\begin{theorem}\label{le:00c-upper-planar-any}
Let \mcgt be a \cg. There exists an algorithm to compute a \zzcd of \mcgt with $\gamma \in
O(n^3)$, if any
such drawing exists. If \mcgt is flat, then $\gamma \in O(n^2)$.
\end{theorem}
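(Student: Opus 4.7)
The plan is to take the planar embedding $\Gamma$ of $G$ induced by the given $\langle 0,0,\gamma\rangle$-drawing $\Gamma'$ and apply the constructive part of the sufficiency proof of Lemma~\ref{le:00c-test-fixed-embedding} to turn $\Gamma$ into a $\langle 0,0,\gamma\rangle$-drawing with few region-region crossings. Since $\Gamma'$ is a $\langle 0,0,\gamma\rangle$-drawing of $C(G,T)$, Lemma~\ref{le:00c-test-fixed-embedding} guarantees that $\Gamma$ satisfies both of its conditions: for every cluster $\mu$, the auxiliary graph $H(\mu)$ is connected, and no cycle of $\mu$-vertices of $\Gamma$ encloses a vertex not in $\mu$.

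For each cluster $\mu$ I compute a connected subgraph $H'(\mu)$ of $H(\mu)$ that contains $G(\mu)$ and whose cycles consist entirely of edges of $G$ -- such a subgraph exists thanks to Conditions $(i)$ and $(ii)$ of Lemma~\ref{le:00c-test-fixed-embedding}. I draw each edge of $H'(\mu)\setminus G$ (a ``fake'' edge) as a simple curve inside the face of $\Gamma$ witnessing it, routed close to the face boundary, and I represent $R(\mu)$ as a thin tube around $H'(\mu)$ containing exactly the vertices of $\mu$. As in the sufficiency direction of Lemma~\ref{le:00c-test-fixed-embedding}, the resulting drawing contains neither edge-edge nor edge-region crossings and is thus a $\langle 0,0,\gamma\rangle$-drawing.

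It then remains to bound $\gamma$. Region-region crossings occur only inside faces of $\Gamma$, at intersections between tubes around fake edges of incomparable clusters; tubes of clusters that are one the ancestor of the other are nested and do not properly cross. Inside a face $f$ of degree $d_f$, cluster $\mu$ contributes at most $|V(\mu)\cap V(f)|-1$ fake edges, so summing over clusters the total number of fake edges in $f$ is at most $d_f \cdot h$, where $h$ is the height of $T$. In the flat case, $h=O(1)$, hence there are $O(d_f)$ fake edges in $f$ yielding $O(d_f^2)$ crossings; summing via $\sum_f d_f^2=O(n^2)$ gives $\gamma\in O(n^2)$. For general $T$, by exploiting the laminar nesting of clusters on $V(f)$ to discard pairs of fake edges from comparable clusters, the per-face crossings can be bounded by $O(d_f^2\cdot h)$, yielding $\gamma\in O\bigl(h\cdot\sum_f d_f^2\bigr)=O(n^3)$ since $h \in O(n)$.

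The main obstacle I expect is precisely this per-face counting for the general case: showing that the laminar structure of the clusters on $V(f)$ saves a factor of $h$ over the naive count of $O((d_f h)^2)$ crossings per face. This requires a careful argument about which pairs of fake edges can actually contribute to region-region crossings, together with the cyclic order of vertices on $\partial f$ and the ancestor-descendant relations among clusters sharing vertices on that boundary.
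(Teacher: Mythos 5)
Your construction is the paper's construction in all but notation: the paper places a hub vertex $u_{\mu,f}$ in each face $f$ containing a vertex of $\mu$, joins it to the $\mu$-vertices on $f$, takes a spanning tree of the resulting connected graph (connectivity being exactly Condition~$(i)$ of Lemma~\ref{le:00c-test-fixed-embedding} applied to the embedding inherited from the given \zzcd), and draws $R(\mu)$ as a thin tube around that tree; your $H'(\mu)$-based tubes are the same object. Your flat-case count is also sound: with $h\in O(1)$ the skeleton edges in a face $f$ number $O(d_f)$, so even charging every pair of them gives $O(d_f^2)$ crossings per face and $\sum_f d_f^2 \le (\sum_f d_f)^2 \in O(n^2)$ overall.

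The gap is exactly where you flag it, and the per-face route you propose for the general case does not close it. Discarding comparable pairs does not save a factor of $h$: for two distinct vertices $v,w$ on $\partial f$, the incomparable pairs $(\mu,\nu)$ with $v\in\mu$ and $w\in\nu$ are precisely the pairs of proper descendants of the cluster $\mathrm{lca}(v,w)$ lying on the two different branches, and there can be $\Theta(h)\times\Theta(h)$ of them; so the number of candidate crossing pairs of fake edges from incomparable clusters in a face is genuinely $\Theta(d_f^2h^2)$, and no bound of the form $O(d_f^2 h)$ on that count holds. The cubic bound instead requires two ingredients. First, the skeletons must be drawn so that, within a face $f$, the portions belonging to two incomparable clusters $\mu$ and $\nu$ cross $O(\min(a^f_\mu,a^f_\nu))$ times rather than $O(a^f_\mu\cdot a^f_\nu)$ times, where $a^f_\mu=|V(\mu)\cap V(f)|$ (e.g., by routing each cluster's per-face skeleton as a curve hugging $\partial f$ at its own depth, so that a deeper curve crosses a shallower one only where it dips to reach its own vertices); arbitrary placement of the fake edges does not give this. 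Second, the total must be grouped by \emph{pairs of clusters} rather than by faces: for a fixed incomparable pair $(\mu,\nu)$ one has $\sum_f \min(a^f_\mu,a^f_\nu)\le\sum_f a^f_\mu\le\sum_{v\in\mu}\deg(v)\in O(n)$, so each of the $O(n^2)$ pairs contributes $O(n)$ \rrcs and the total is $O(n^3)$; the same quantity grouped face-by-face only yields $O(n^4)$. This per-pair accounting is what the paper's (admittedly terse) sentence ``each of the $O(n)$ clusters crosses each of the $O(n)$ other clusters a linear number of times'' is asserting, and it is the idea missing from your write-up.
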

\begin{proof}
Suppose that \mcgt admits a \zzcd. Then, consider the drawing $\Gamma$ of
the underlying graph $G$ in any such a drawing. For each cluster $\mu$,
place a vertex $u_{\mu,f}$ inside any face $f$ of $\Gamma$ that contains at
least one vertex belonging to $\mu$,
and connect $u_{\mu,f}$ to all the vertices
of $\mu$ incident to $f$. Note that, the graph composed by the vertices
of $\mu$ and by the added vertices $u_{\mu,f_i}$ is connected. Then,
construct a spanning tree of such a graph and draw $R(\mu)$ slightly
surrounding such a spanning tree. The cubic bound on $\gamma$ comes from
the fact that each of the $O(n)$ clusters crosses each of the $O(n)$
other clusters a linear number of times. On the other hand, if \mcgt is
flat, then each of the $O(n)$ clusters crosses each of the $O(n)$ other
clusters just once.
\end{proof}

\section{Lower bounds}\label{se:lower}

In this section we give lower bounds on the number of $ee$-, $er$-, and
\rrcs in $\langle \alpha,\beta,\gamma \rangle$-drawings of \cgs.

First, we prove an auxiliary lemma concerning the crossing number in graphs without a
cluster hierarchy. Given a graph $G$, we define $G(m)$ as the multigraph obtained by
replacing each edge of $G$ with a set of at least $m$ multiple edges. For each
pair $(u,v)$ of vertices, we denote by $S(u,v)$ the set of multiple edges connecting $u$
and $v$.

\begin{lemma}\label{le:k5}
Graph $G(m)$ has crossing number $cr(G(m))\geq m^2\cdot cr(G)$.
\end{lemma}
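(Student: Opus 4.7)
The plan is to use the standard probabilistic edge-sampling argument. Fix an optimal drawing $D$ of $G(m)$ with exactly $cr(G(m))$ crossings. From $D$ I would extract a drawing $D'$ of $G$ as follows: for each edge $(u,v)$ of $G$, independently and uniformly at random select a single edge $e_{uv}\in S(u,v)$, keep its curve in $D'$, and delete the curves of every other edge in $S(u,v)$. Since $D'$ retains exactly one curve per edge of $G$ joining the correct endpoints, $D'$ is a valid drawing of $G$ (after cleaning up possible curve intersections left among surviving edges, no new crossings arise).

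Next, I would bound the expected number of crossings of $D'$ by linearity of expectation. Each crossing of $D'$ comes from a crossing in $D$ between two surviving edges $e\in S(u,v)$ and $f\in S(u',v')$. If $\{u,v\}=\{u',v'\}$ then $e$ and $f$ lie in the same set $S(u,v)$, so they are never both selected and the crossing contributes $0$. Otherwise $e$ and $f$ are selected independently, so the probability that both survive is
\[
\frac{1}{|S(u,v)|}\cdot\frac{1}{|S(u',v')|}\;\le\;\frac{1}{m^2},
\]
using $|S(u,v)|,|S(u',v')|\ge m$. Summing over all crossings of $D$ gives
\[
\mathbb{E}[\text{crossings of } D']\;\le\;\frac{cr(G(m))}{m^2}.
\]

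Finally, by the probabilistic method some realization of the random selection yields a drawing of $G$ with at most $cr(G(m))/m^2$ crossings, so by definition of the crossing number $cr(G)\le cr(G(m))/m^2$, which rearranges to $cr(G(m))\ge m^2\cdot cr(G)$. There is no real obstacle here; the only point that requires care is the case of two crossing edges lying in the same $S(u,v)$, which is handled trivially because our sampling picks exactly one edge per set.
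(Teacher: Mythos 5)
Your proof is correct, but it takes a genuinely different route from the paper's. You use the standard probabilistic (averaging) argument: sample one edge uniformly from each parallel class $S(u,v)$, observe that each crossing of the optimal drawing of $G(m)$ survives with probability at most $1/m^2$ (and with probability $0$ if both edges lie in the same class), and conclude $cr(G)\le cr(G(m))/m^2$ by linearity of expectation. The paper instead argues deterministically: it first normalizes the optimal drawing so that, for each pair $\{u,v\}$, every edge of $S(u,v)$ is rerouted alongside the edge $e_{min}(u,v)$ with the fewest crossings (using optimality to argue this cannot change the crossing count), then deletes all but one edge per class to obtain a drawing of $G$ with at least $cr(G)$ crossings, each of which blows back up to at least $m\cdot m$ crossings in the normalized drawing. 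Your approach buys simplicity and robustness: it needs no rerouting step and no claim that the normalization preserves optimality (a step in the paper that requires some care, e.g.\ that parallel edges do not cross each other and that successive reroutings interact correctly). The paper's approach buys a structural by-product --- an optimal drawing of $G(m)$ can be assumed to have all parallel edges crossing the same edges in the same order --- which is not needed for the stated bound. The only point worth making explicit in your write-up is that an optimal drawing has no self-crossing edges, so every crossing of $D$ is indeed between two distinct edges and your case analysis is exhaustive; this is standard and the paper records it as well.
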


\begin{proof}
Consider a drawing $\Gamma$ of $G(m)$ with the minimum number $cr(G(m))$ of
crossings.

First, observe that in $\Gamma$ no edge intersects itself, no two edges
between the same pair of vertices intersect, and each pair of edges crosses
at most once. Namely, if any of these conditions does not hold, it is easy
to modify $\Gamma$ to obtain another drawing of $G(m)$ with a
smaller number of crossings, which is not possible by
hypothesis (see, e.g.,~\cite{pt-wcna-00}).

We show that there exists a drawing $\Gamma'$ of $G(m)$ with $cr(G(m))$
crossings in which, for each pair of vertices $u$ and $v$, all
the edges between $u$ and $v$ cross the same set of edges in the same
order. Let $e_{min}(u,v)$ be any edge with
the minimum number of crossings among the edges of $S(u,v)$. Redraw all the
edges in $S(u,v)\setminus e_{min}(u,v)$ so that they intersect the same set
of edges as $e_{min}(u,v)$, in the same order as $e_{min}(u,v)$. Repeating such operation
for each set $S(u,v)$ yields a drawing $\Gamma'$ with the required property.

Starting from $\Gamma'$, we construct a drawing $\Gamma_G$ of $G$. For each
set of edges $S(u,v)$ remove all the edges except for one edge $e^*(u,v)$. The resulting
drawing $\Gamma_G$ of $G$ has at least $cr(G)$ crossings, by definition. For any two edges
$e^*(u,v)$ and $e^*(w,z)$ that cross in $\Gamma_G$, we have that each edge in $S(u,v)$
crosses each edge in $S(w,z)$, by the properties of $\Gamma'$. Hence
$\Gamma'$ contains at least $m^2 \cdot cr(G)$ crossings.
\end{proof}

We start by proving a lower bound on the total number of crossings in an $\langle \alpha,
\beta, \gamma \rangle$-drawing of a \cg when all the three types of crossings are
admitted.

\begin{theorem}\label{th:abc-sum}
There exists an $n$-vertex non-c-connected flat \cg \mcgt that
admits \azz-, \zbz-, and \zzc-drawings, and such that
$\alpha+\beta+\gamma \in \Omega(n^2)$ in every $\langle \alpha,\beta,\gamma
\rangle$-drawing of \mcgt.
\end{theorem}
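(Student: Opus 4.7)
The plan is to construct a flat, non-c-connected clustered graph \mcgt whose underlying graph $G$ is planar but whose ``cluster multigraph'' $G^*$—obtained by identifying the vertices of each cluster to a single node—is non-planar with $cr(G^*) = \Omega(n^2)$, and to argue via a contraction that any drawing of \mcgt has $\alpha+\beta+\gamma = \Omega(n^2)$. A natural choice is to take five clusters $\mu_1,\dots,\mu_5$, each containing $4m$ vertices partitioned into four buckets of $m = \Theta(n)$ vertices, with $m$ matching edges between the $j$-th bucket of $\mu_i$ and the $i$-th bucket of $\mu_j$ for each pair $\{i,j\}$. Then $G$ is a planar matching, the clustering is flat and non-c-connected, and $G^* = K_5(m)$, which by Lemma~\ref{le:k5} satisfies $cr(G^*) \geq m^2 \cdot cr(K_5) = m^2 = \Omega(n^2)$.

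The existence of an \azzd and a \zbzd follows from Theorems~\ref{th:a00-upper} and~\ref{le:0b0-upper-ncc}, while a \zzcd is exhibited by drawing the five regions as overlapping simple blobs (with each pair having a private overlap area not shared with any third region) and routing each matching edge inside its own pairwise overlap. Notably, \mcgt is not c-planar: any hypothetical c-planar drawing would require five pairwise-simply-overlapping regions with pairwise-disjoint overlap lenses, which would realize the non-planar $K_5$ as the 1-skeleton of the nerve of a cover of the plane by simply-connected sets, an impossibility.

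For the lower bound, given any drawing $\Gamma$ of \mcgt, I construct an auxiliary drawing $\Gamma^*$ of $G^*$ by placing a reference point $p_i$ inside each region $R(\mu_i)$, then representing each edge $(a,b)$ of $G$ (with $a \in \mu_i$ and $b \in \mu_j$) as the concatenation of a tree leg from $p_i$ to $a$ inside $R(\mu_i)$, the original curve for $(a,b)$, and a tree leg from $b$ to $p_j$ inside $R(\mu_j)$. The in-region trees are chosen based on the arrangement inside each region of chords (portions of passing edges having an er-crossing with that region) and overlap lenses (with other regions), so that each chord is crossed exactly once by the tree and each overlap lens is traversed $O(1)$ times. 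Classifying the crossings of $\Gamma^*$ as middle-middle (contributing $\alpha$), middle-leg (charged $O(1)$ per er-crossing, summing to $O(\beta)$), and leg-leg (charged $O(1)$ per rr-crossing, summing to $O(\gamma)$), I obtain $cr(\Gamma^*) \leq \alpha + c\beta + c\gamma$ for a constant $c$; combining this with $cr(\Gamma^*) \geq cr(G^*) = \Omega(n^2)$ gives $\alpha+\beta+\gamma = \Omega(n^2)$.

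The principal obstacle lies in justifying the in-region tree construction: naively, a single er-crossing could cause $\Omega(|\mu_i|)$ middle-leg crossings and a single rr-crossing could cause $\Omega(|\mu_i|\cdot|\mu_j|)$ leg-leg crossings in $\Gamma^*$. The key technical step is to construct, inside each $R(\mu_i)$, a spanning tree of $\{p_i\}\cup\mu_i$ that respects the chord-and-lens arrangement by crossing each chord only once and traversing each overlap lens the topologically minimum number of times, so that each boundary crossing in $\Gamma$ contributes only $O(1)$ crossings to $\Gamma^*$.
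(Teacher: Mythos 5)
Your construction does not satisfy the theorem, so the lower-bound argument cannot be repaired. The difficulty you flag as ``the principal obstacle'' is not a technical nuisance but the heart of the matter: when the five nodes of $K_5$ are realized as large clusters rather than as actual vertices, a region is free to behave like a (thick) edge, and region-region crossings are cheap. Concretely, draw each $R(\mu_i)$ as a central blob with four tentacles, place the $m$ vertices of the bucket $(i,j)$ at the tip of the tentacle of $\mu_i$ pointing towards $\mu_j$, and draw the $m$ matching edges between the two facing tips as short curves, each crossing each of the two boundaries exactly once (which does not count as an \erc). Arranging the blobs as in the one-crossing drawing of $K_5$, the only interaction is that the tentacle of $\mu_a$ towards $\mu_d$ passes once over the tentacle of $\mu_c$ towards $\mu_e$ in a portion containing no vertices and no matching edges. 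Then $R(\mu_a)\setminus R(\mu_c)$ has exactly two connected components, so by the paper's counting this is a single \rrc, and your instance admits a $\langle 0,0,1\rangle$-drawing. This is exactly why your charging scheme cannot be made to work: if a chord or a lens boundary separates the reference point $p_i$ from $\Theta(m)$ vertices of $\mu_i$, then $\Theta(m)$ of the routed curves of $G^*$ must cross it \emph{regardless} of how the spanning tree is chosen (each routed curve is the tree path from $p_i$ to its endpoint, and every such path crosses the separating arc), so a single \erc is charged $\Theta(n)$ auxiliary crossings and a single \rrc up to $\Theta(n^2)$, which only yields $\alpha+\beta+\gamma\in\Omega(1)$.

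The paper's construction sidesteps this by inverting the roles: the five ``hubs'' $a,b,c,d,e$ are genuine vertices (in singleton clusters), and the crossing-forcing gadgets are $10m$ \emph{two-vertex} clusters $\mu(u,v)_i=\{[uv]_i,[vu]_i\}$ sitting in the middle of the $m$ parallel $u$--$v$ connections, so that $G$ is a forest of five stars. Given any drawing, one inserts a single edge $([uv]_i,[vu]_i)$ inside each tiny region and deletes all regions, obtaining a drawing of a subdivision of $K_5(m)$; since each tiny cluster contributes exactly one inserted edge, every crossing of the auxiliary drawing is witnessed by an \eec, an \erc, or an \rrc of the original drawing with only constant multiplicity, and Lemma~\ref{le:k5} gives $\Omega(n^2)$. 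If you want to keep your contraction-based viewpoint, you must ensure that whatever is contracted to a point of $G^*$ contains $O(1)$ vertices, i.e., the ``fat'' objects must be the ones whose crossings you are lower-bounding, not the ones you contract.
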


\begin{proof}
Clustered graph \mcgt is as follows. Initialize graph $G$ with five vertices $a,b,c,d,e$.
For each two vertices $u,v \in \{a,b,c,d,e\}$, with $u \neq v$, and for $i=1,\dots,m$, add
to $G$ vertices $[uv]_i,[vu]_i$, and edges $(u,[uv]_i)$ and $(v,[vu]_i)$, and add to $T$
a cluster $\mu(u,v)_i=\{[uv]_i,[vu]_i\}$. Vertices $a,b,c,d,e$ belong to clusters $\mu_a,
\mu_b, \mu_c, \mu_d, \mu_e$, respectively. See Fig.~\ref{fig:abc-sum-nuova}. 
We denote by $M(u,v)=\{(u,[uv]_i),(v,[vu]_i),\mu(u,v)_i | i=1,\dots,m\}$.

\begin{figure}\label{fig:abc-sum-nuova}
 \centering
 \subfigure[]{\includegraphics[scale=.99]{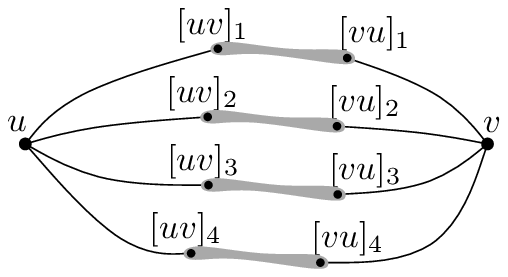}\label{fig:fascio}}
 \hspace{2em}
 \subfigure[]{\includegraphics[scale=.85]{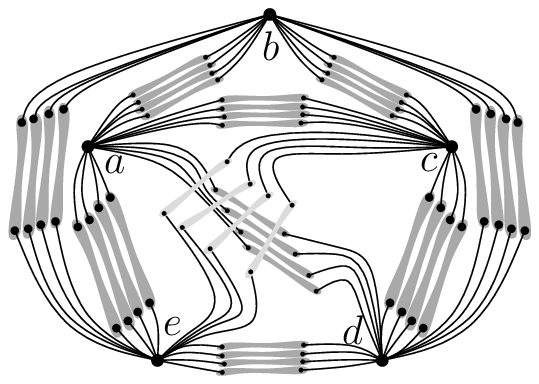}\label{fig:k5m}}
 
 \caption{Illustrations for the proof of Theorem~\ref{th:abc-sum} (a) Edges and clusters in $M(u,v)$. (b) Clustered graph \mcgt}
\end{figure}

First, we prove that $C$ admits \azz-, \zbz-, and \zzc-drawings. Consider a drawing
$\Gamma^*$ of $K_5$ with one crossing. 
Consider the two crossing edges $(x,y)$ and $(w,z)$ of $K_5$. For each pair of vertices
$u,v \in \{a,b,c,d,e\}$, with $(u,v) \notin \{(x,y),(w,z)\}$, replace edge $(u,v)$ in
$\Gamma^*$ with $M(u,v)$ in such a way that the drawing of the edges and clusters in
$M(u,v)$ is arbitrarily close to the drawing of $(u,v)$. We describe how to construct
\azz-, \zbz-, and \zzc-drawings starting from $\Gamma^*$.

\begin{description}
\item[\azzd] Replace $(x,y)$ and $(w,z)$ in $\Gamma^*$ with $M(x,y)$ and $M(w,z)$ in such
a way that the drawing of the edges and clusters in $M(x,y)$ (in $M(w,z)$) is arbitrarily
close to the drawing of $(x,y)$ (of $(w,z)$) and for each $1 \leq i,j \leq m$, edge
$(x,[xy]_i)$ crosses edge $(w,[wz]_j)$, while edges $(y,[yx]_i)$ and $(z,[zw]_i)$, and
regions $R(\mu(x,y)_i)$ and $R(\mu(w,z)_j)$ are not involved in any crossing.

\item[\zbzd] Replace $(x,y)$ and $(w,z)$ in $\Gamma^*$ with $M(x,y)$ and $M(w,z)$ in such
a way that the drawing of the edges and clusters in $M(x,y)$ (in $M(w,z)$) is arbitrarily
close to the drawing of $(x,y)$ (of $(w,z)$) and for each $1 \leq i,j \leq m$, edge
$(x,[xy]_i)$ crosses region $R(\mu(w,z)_j)$, while edges $(y,[yx]_i)$, $(z,[zw]_i)$, and
$(w,[wz]_j)$, and region $R(\mu(x,y)_i)$ are not involved in any crossing.
\item[\zzcd] Replace $(x,y)$ and $(w,z)$ in $\Gamma^*$ with $M(x,y)$ and $M(w,z)$ in such
a way that the drawing of the edges and clusters in $M(x,y)$ (in $M(w,z)$) is arbitrarily
close to the drawing of $(x,y)$ (of $(w,z)$) and for each $1 \leq i,j \leq m$, region
$R(\mu(x,y)_i)$ crosses region $R(\mu(w,z)_j)$, while edges $(x,[xy]_i)$, $(y,[yx]_i)$,
$(w,[wz]_j)$, and $(z,[zw]_i)$ are not involved in any crossing.
\end{description}

Second, we show that $\alpha+\beta+\gamma \in \Omega(n^2)$ in every $\langle
\alpha,\beta,\gamma \rangle$-drawing of \mcgt. Consider any such a drawing $\Gamma$. 
Starting from $\Gamma$, we obtain a drawing $\Gamma'$ of a
subdivision of $K_5(m)$ as follows.
For each $u,v \in \{a,b,c,d,e\}$, with $u \neq v$, and for each $i=1,\dots,m$, insert a
drawing of edge $([uv]_i,[vu]_i)$ inside $R(\mu(u,v)_i)$ and remove region
$R(\mu(u,v)_i)$. Further, remove regions $R(\mu_a)$, $R(\mu_b)$, $R(\mu_c)$, $R(\mu_d)$,
and $R(\mu_e)$. The obtained graph is a subdivision of $K_5(m)$.
Hence, by Lemma~\ref{le:k5}, $\Gamma'$ has $\Omega(n^2)$ crossings. Moreover,
each crossing in $\Gamma'$ corresponds either to an edge-edge crossing, or to an edge-region crossing, or to a
region-region crossing in $\Gamma$, thus proving the lemma.
\end{proof}

Then, we now turn our attention to drawings in which only one type of crossings is
allowed. In this setting, we show that the majority of the upper bounds presented in the
previous section are tight by giving lower bounds on the number of crossings of
\azz-, \zbz-, and \zzc-drawings.

We first consider \azz-drawings. We give two lower bounds, which deal with
c-connected and non-c-connected clustered graphs, respectively. 

\begin{theorem}\label{th:a00-planar-lower}
There exists a c-connected flat \cg \mcgt such that $\alpha\in
\Omega(n^2)$ in every \azzd of \mcgt.
\end{theorem}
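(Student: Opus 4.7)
The plan is to construct a c-connected flat \cg \mcgt with $\Theta(n)$ vertices and planar underlying graph such that every \azzd of \mcgt requires $\Omega(n^2)$ \eecs. The driving observation is that in a \azzd the cluster regions are pairwise-disjoint simple closed regions and no edge crosses any cluster boundary more than once; consequently, the cyclic order in which each cluster's ``port'' vertices (those incident to inter-cluster edges) appear on the boundary of its region is essentially determined, up to a reflection, by the planar embedding of the intra-cluster subgraph, and any deviation from that rigid order can be realized only by paying intra-cluster \eecs.

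The \cg I would use has two clusters $\mu_1,\mu_2$, each inducing a 3-connected planar graph $H_j$ on $\Theta(n)$ vertices with $m=\Theta(n)$ distinguished port vertices on the outer face in a rigid cyclic order, together with $m$ inter-cluster matching edges that pair the two port sets via a permutation $\pi$. I would choose $\pi$ so that (a) the overall graph $G$ is planar, by exhibiting an explicit embedding in which $R(\mu_1)$ is nested inside the complement of $R(\mu_2)$ and the matching edges are routed through the annular region in a specific winding pattern, and (b) the inversion count of $\pi$ with respect to the two rigid cyclic port orders remains $\Omega(m^2)$ under any flip or cyclic rotation of either of them.

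Given any \azzd $\Gamma$ of \mcgt, the absence of \rrcs forces $R(\mu_1)$ and $R(\mu_2)$ to be disjoint simple closed regions, and the absence of \ercs forces every inter-cluster matching edge to cross each cluster boundary exactly once at its endpoint and no other cluster boundary. Let $\sigma_j$ be the cyclic order in which the ports of $\mu_j$ occur on $\partial R(\mu_j)$. First, realizing a $\sigma_j$ that differs from the rigid order of $H_j$ by $k$ inversions costs at least $\Omega(k)$ intra-cluster \eecs, by the uniqueness of the planar embedding of the 3-connected $H_j$. Second, drawing the matching $\pi$ in the annular complement of $R(\mu_1)\cup R(\mu_2)$ with given boundary orders requires at least as many inter-cluster \eecs as the inversion count of $\pi$ relative to $\sigma_1,\sigma_2$. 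Summing the two contributions and invoking (b), the total \eecs count in $\Gamma$ is $\Omega(m^2)=\Omega(n^2)$.

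The main obstacle, and the bulk of the work, will be to simultaneously satisfy conditions (a) and (b): designing a matching $\pi$ twisted enough to have a quadratic inversion count with respect to every achievable pair of boundary orders, while still keeping $G$ planar overall. My intended approach is to use nested wheels (or comparable 3-connected outerplanar-bordered graphs) for $H_1$ and $H_2$, and to implement $\pi$ via a ``double reversal'' or ``bit-interleaving'' pattern that cannot be undone by any single flip or rotation; planarity of $G$ would then be witnessed by routing the matching edges non-crossingly in the annulus around a flipped copy of $H_2$.
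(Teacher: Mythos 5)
There is a genuine gap: your conditions (a) and (b) are mutually exclusive, so the construction you describe cannot exist. The planarity witness for (a) --- a planar embedding in which $H_1$ and a flipped copy of $H_2$ face each other across an annulus and the matching edges are routed there without crossings --- is itself, after surrounding each $H_j$ with a tight simple region, a c-planar drawing of \mcgt: each cluster induces a connected subgraph all of whose non-cluster vertices lie in its outer face, which is exactly the known sufficient condition for c-planarity of c-connected \cgs~\cite{FengCE95}. Hence your instance admits a $\langle 0,0,0\rangle$-drawing and no lower bound on $\alpha$ can hold. Seen through your own lens: the flipped port order of $H_2$ is one of the boundary orders achievable at zero intra-cluster cost (triconnectivity fixes the embedding only up to reflection), and with respect to that achievable pair of orders the non-crossing annulus routing certifies that the inversion count of $\pi$ is $0$, contradicting (b). More generally, any matching realizable by a planar routing between two disjoint disks is cyclically monotone with respect to the realized boundary orders, so twisting $\pi$ enough for (b) necessarily destroys (a).

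The paper's proof avoids this trap precisely by making it impossible for all clusters to sit in one another's outer faces. It takes $G$ to be a subdivision of $K_5(m)$ with one bundle $S(d,e)$ deleted (hence planar), puts $d$ and $e$ in singleton clusters and all remaining vertices in a single cluster $\mu_1$. In any \azzd, every cycle on $\mu_1$-vertices is drawn entirely inside $R(\mu_1)$ (an edge with both endpoints in $\mu_1$ cannot cross the region boundary without creating an \erc), so $d$ and $e$ are both forced outside all such cycles --- a configuration that no planar embedding of $G$ realizes. One can then reinsert $m$ length-$2$ paths from $d$ to $e$ without new crossings, obtaining a drawing of a subdivision of $K_5(m)$, and Lemma~\ref{le:k5} yields $\alpha\ge m^2\cdot cr(K_5)=\Omega(n^2)$. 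If you want to salvage a port-order argument, the ports would have to be distributed over several faces of a rigid cluster subgraph so that no single face of $G(\mu_1)$ can host all outside vertices; but that is essentially the separating-cycle mechanism the paper already exploits.
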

\begin{proof}
We first describe \mcgt. Graph $G$ is a subdivision of $K_5(m)$, with
$m=\frac{n-5}{9}$, where the set of edges $S(d,e)$ has been removed. Tree
$T$ is such that $\mu_2 = \{d\}$, $\mu_3 = \{e\}$, and all the other
vertices belong to $\mu_1$. See Fig.~\ref{fig:a00-planar-lower-clustered}. 
Since, in any \azzd $\Gamma$ of \mcgt, both $d$ and $e$ must be outside any cycle composed
of vertices of $\mu_1$ (as otherwise they would lie inside $R(\mu_1)$), a set of $m$
length-$2$ paths can be drawn in $\Gamma$ between $d$ and $e$ without creating other
crossings, thus obtaining a drawing of a subdivision of $K_5(m)$ in which the crossings
are the same as in $\Gamma$. Since $cr(K_5)=1$ and since $m = O(n)$, by Lemma~\ref{le:k5},
$\alpha \in \Omega(n^2)$ (see Fig.~\ref{fig:a00-planar-lower-clustered-np}).
\end{proof}

 \begin{figure}[!htb]
 \centering
\subfigure[]{\includegraphics[scale=.5]{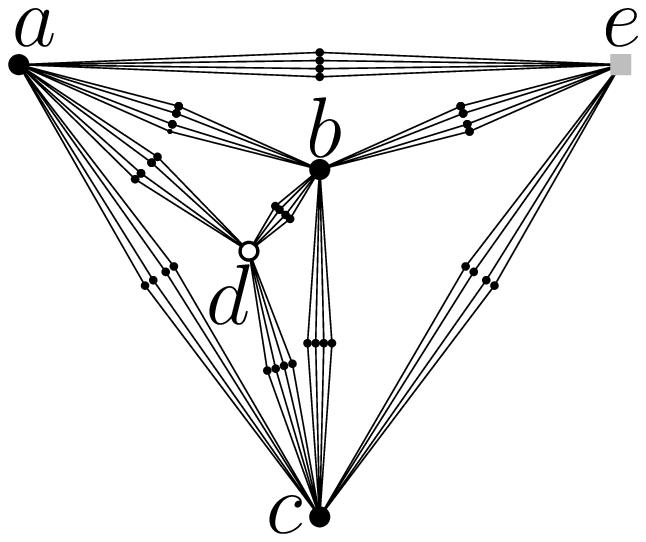}
\label{fig:a00-planar-lower-clustered}}\hspace{10pt}
\subfigure[]{\includegraphics[scale=.5]{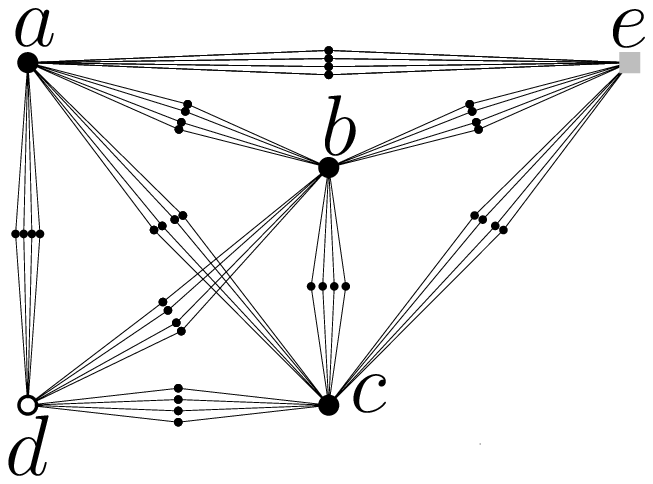}
\label{fig:a00-planar-lower-clustered-np}}\hspace{10pt}
\subfigure[]{\includegraphics[scale=.5]{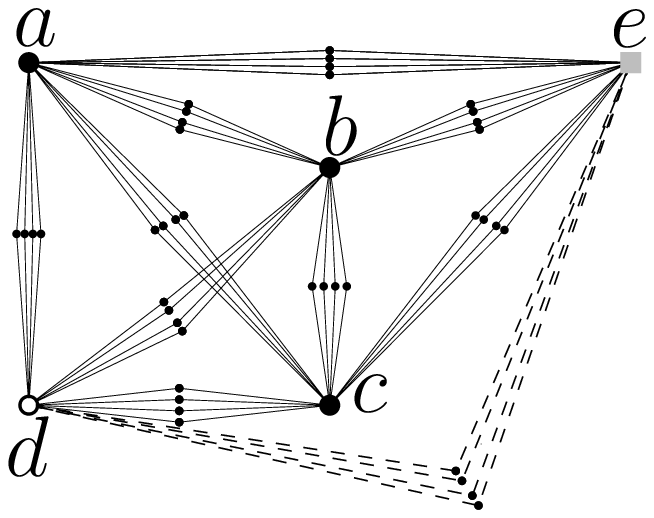}
\label{fig:a00-planar-lower-clustered-k5}}
 \caption{\small Illustration for the proof of
Theorem~\ref{th:a00-planar-lower}. Vertices in $\mu_1$ are black, vertices in $\mu_2$ are
white, and vertices in $\mu_3$ are gray. (a) Graph $G$. (b) Vertices $d$ and $e$ must be
outside all the cycles composed of vertices of $\mu_1$. (c) Graph $G''$, where the
length-$2$ paths connecting $d$ and $e$ are dashed.}
\end{figure}

\begin{theorem}\label{th:a00-lower-ncc-path}
There exists a non-c-connected flat \cg \mcgt, where $G$ is a
matching, such that $\alpha\in \Omega(n^2)$ in every \azzd of \mcgt.
\end{theorem}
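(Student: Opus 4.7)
The plan is to adapt the $K_5(m)$-style lower-bound technique of Theorem~\ref{th:abc-sum} so that the underlying graph is literally a matching while the flat cluster hierarchy still forces $\Omega(n^2)$ edge-edge crossings. The idea is to have each ``super-vertex'' of $K_5(m)$ realized by an entire cluster rather than by a single vertex of $G$, so that each matching edge of $G$ plays the role of one of the $m$ parallel edges between two super-vertices of $K_5(m)$.

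Concretely, I would take five labels $a,b,c,d,e$ and, for each ordered pair $(u,v)$ with $u\neq v$ and for each $i\in\{1,\dots,m\}$, introduce a vertex $[uv]_i$. For every unordered pair $\{u,v\}$ and every $i$, I would add to $G$ the edge $([uv]_i,[vu]_i)$, so that $G$ is a matching with $10m$ edges and $n=20m$ vertices. The inclusion tree $T$ has a root whose five children are leaf-clusters $\mu_a,\dots,\mu_e$, with $\mu_u = \{[uv]_i : v\neq u,\ 1\leq i\leq m\}$. Then $T$ is flat, and since each $G(\mu_u)$ is edgeless on $4m>1$ vertices the instance is non-c-connected, as required.

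For the lower bound, given any \azzd $\Gamma$ of \mcgt, I would contract each region $R(\mu_u)$ to a single point $p_u$. This collapse is legitimate because: (a) the absence of region-region crossings, together with the flat sibling structure, forces $R(\mu_a),\dots,R(\mu_e)$ to be pairwise disjoint simple regions; (b) the absence of edge-region crossings forces every matching edge $([uv]_i,[vu]_i)$ to meet $\partial R(\mu_u)$ and $\partial R(\mu_v)$ exactly once; and (c) no vertex of $G$ lies inside $R(\mu_u)$ other than those of $\mu_u$ itself. After contraction one obtains a drawing on the vertex set $\{p_a,\dots,p_e\}$ in which the edges of $G$ become $m$ parallel curves joining $p_u$ and $p_v$ for every pair $\{u,v\}$, i.e.\ a drawing of $K_5(m)$. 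Any would-be crossings occurring near $p_u$ must be between edges sharing $p_u$ as an endpoint, so they can be locally uncrossed without introducing new crossings. The resulting drawing of $K_5(m)$ therefore has at most $\alpha$ crossings, and by Lemma~\ref{le:k5} it has at least $m^2\cdot cr(K_5)=m^2$ of them. Hence $\alpha\geq m^2=\Omega(n^2)$, as desired.

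The main obstacle I anticipate is making the contraction argument airtight, in particular the claim that collapsing each $R(\mu_u)$ to a point $p_u$ and then locally uncrossing pairs of edges that share $p_u$ does not create extra crossings elsewhere. This rests on the standard fact that two curves sharing an endpoint can be locally rerouted in a neighborhood of that endpoint to eliminate their mutual crossing without affecting the rest of the drawing; combined with points (a)--(c) above it guarantees that every crossing in the final drawing of $K_5(m)$ descends to a distinct \eec of $\Gamma$. Once this is established, the bound follows immediately from Lemma~\ref{le:k5}.
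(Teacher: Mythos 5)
Your proposal is correct and follows essentially the same route as the paper: the identical five-cluster matching construction and a reduction to Lemma~\ref{le:k5} applied to $K_5(m)$. The only (immaterial) difference is mechanical: the paper inserts a hub vertex inside each cluster region of a crossing-minimal drawing to obtain a subdivision of $K_5(m)$, whereas you contract each cluster region to a point and remove the resulting adjacent crossings, which works for an arbitrary \azzd.
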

\begin{proof}
Consider a flat clustered graph \mcgt with five clusters $\mu_1, \dots, \mu_5$.
For each $i\neq j$ with $1 \leq i,j \leq 5$, add $\frac{n}{20}$ vertices to $\mu_i$
and to $\mu_j$ and construct a matching between these two sets of vertices. 

Consider any \azzd $\Gamma$ of \mcgt such that $\alpha$ is minimum. We prove that $\Gamma$
does not contain any edge-edge crossing inside the regions representing clusters. Namely,
assume for a contradiction that a crossing between two edges $e_1$ and $e_2$ occurs inside
the region $R(\mu)$ representing a cluster $\mu$. Since $\Gamma$ has no edge-region
crossings, both $e_1$ and $e_2$ connect a vertex in $\mu$ with a vertex not in $\mu$.
Then, one might place the endvertex of $e_1$ belonging to $\mu$ arbitrarily close to the
boundary of $R(\mu)$ in such a way that it does not cross $e_2$ inside $R(\mu)$. Since
this operation reduces the number of crossings, we have a contradiction to the fact that
$\alpha$ is minimum.


Then, we add a vertex to each cluster $\mu_i$ and connect it to all the
vertices of $\mu_i$. Observe that, since no two edges cross inside the region
representing a cluster, such vertices and edges can be added without creating any
new crossings.

Finally, removing from $\Gamma$ the drawings of the regions representing the clusters
leads to a drawing of a subdivision $K_5(n/20)$ with $\alpha$ crossings. By
Lemma~\ref{le:k5}, $\alpha  \in \Omega(n^2)$.
\end{proof}


We now prove some lower bounds on the number of \ercs in \zbz-drawings of \cgs. In the
case of non-c-connected flat \cgs, a quadratic lower bound directly follows from
Theorem~\ref{th:abc-sum}, as stated in the following.

\begin{corollary} \label{cor:0b0-lower-ncc-f}
There exists a non-c-connected flat \cg \mcgt such that $\beta \in \Omega(n^2)$ in every
\zbzd of \mcgt.
\end{corollary}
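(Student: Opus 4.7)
The plan is to derive this corollary directly from Theorem~\ref{th:abc-sum}, which already establishes the existence of a non-c-connected flat clustered graph \mcgt admitting drawings of all three restricted types and satisfying $\alpha+\beta+\gamma \in \Omega(n^2)$ in every $\langle \alpha,\beta,\gamma\rangle$-drawing. The corollary to be proved asks only for a non-c-connected flat instance forcing $\beta \in \Omega(n^2)$ in every \zbzd, so no new construction is needed.

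I would proceed as follows. First, let \mcgt be the clustered graph built in the proof of Theorem~\ref{th:abc-sum} (the $K_5$-based construction where each edge is replaced by a bundle of $m$ two-vertex clusters via the sets $M(u,v)$). That theorem guarantees that \mcgt is non-c-connected, flat, and admits a \zbzd in particular, so the statement to be proved is nonvacuous on this instance. Second, consider any \zbzd $\Gamma$ of \mcgt. By definition of a \zbzd, the numbers of edge-edge and region-region crossings in $\Gamma$ are both zero, that is $\alpha=0$ and $\gamma=0$. Applying the lower bound from Theorem~\ref{th:abc-sum} to $\Gamma$ yields
\[
\beta \;=\; \alpha+\beta+\gamma \;\in\; \Omega(n^2),
\]
which is exactly the desired conclusion.

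There is no substantive obstacle here, since Theorem~\ref{th:abc-sum} has already done all the work: it isolates a single family of clustered graphs on which the total crossing count is quadratic regardless of how the three crossing budgets are split, so specialising to the case where only \ercs are allowed immediately transfers the quadratic bound to $\beta$ alone. The only thing to be careful about in the write-up is to explicitly invoke the fact that the instance from Theorem~\ref{th:abc-sum} is non-c-connected and flat (both properties are established in that proof), so that it qualifies as a witness for the corollary's statement.
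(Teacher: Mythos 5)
Your proposal is correct and is exactly the paper's own argument: the paper proves Corollary~\ref{cor:0b0-lower-ncc-f} by noting that the quadratic bound ``directly follows from Theorem~\ref{th:abc-sum},'' i.e., the witness instance there is non-c-connected, flat, admits a \zbzd, and forces $\alpha+\beta+\gamma\in\Omega(n^2)$, so setting $\alpha=\gamma=0$ gives $\beta\in\Omega(n^2)$. Nothing further is needed.
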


Next, we deal with the c-connected case and present a quadratic and a linear lower bound
for non-flat and flat cluster hierarchies, respectively.

\begin{theorem}\label{th:0b0-lower-cc-nf}
There exists a c-connected non-flat \cg \mcgt such
that $\beta \in \Omega(n^2)$ in every \zbzd of \mcgt.
\end{theorem}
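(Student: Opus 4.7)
The plan is to construct a c-connected non-flat \cg \mcgt with $G$ planar and $\Theta(n)$ vertices such that every \zbzd has $\beta\in\Omega(n^2)$. The approach adapts the $K_5$-based lower bound of Theorem~\ref{th:abc-sum}: any \zbzd of $C$ will be shown to imply the existence of a drawing of a non-planar multigraph containing a subdivision of $K_5(m)$, with $m\in\Theta(n)$, in which every crossing is chargeable to an \erc of the original drawing. Lemma~\ref{le:k5} then yields the $\Omega(n^2)$ bound.

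Starting from the template of Theorem~\ref{th:abc-sum} (five terminals $a,b,c,d,e$ joined by $m$ two-vertex gadgets per pair of terminals), the main difficulty is to make each gadget cluster c-connected without destroying the planarity of $G$. Directly adding a connecting edge between the two gadget vertices of $\mu(u,v)_i$ is not an option, since this would make $G$ contain a subdivision of $K_5(m)$ and hence be non-planar. Instead, I would use the available non-flatness of $T$ to provide the required connectivity through nested super-clusters: each gadget cluster is extended so as to include vertices of a common connector substructure shared with other gadgets, arranged so that the extended cluster induces a connected subgraph of $G$ while the planarity of $G$ is preserved.

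The lower bound argument then proceeds as follows. Given any \zbzd $\Gamma$ of $C$, I would promote $\Gamma$ by inserting inside each gadget region $R(\mu(u,v)_i)$ a new curve joining the two terminal-facing vertices of the gadget. Because $R(\mu(u,v)_i)$ is a simple closed region and $G(\mu(u,v)_i)$ is connected, this curve can be routed entirely inside $R(\mu(u,v)_i)$ without introducing any edge-edge crossings; and since $\Gamma$ has no region-region crossings, no two such new curves from distinct gadget regions intersect. After deleting all cluster regions, the resulting drawing realises a multigraph that contains a subdivision of $K_5(m)$ and whose crossings are in bijection with \ercs of $\Gamma$ (edge-edge and region-region crossings being forbidden in a \zbzd). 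Lemma~\ref{le:k5} then yields $\Omega(m^2)=\Omega(n^2)$ crossings, as needed.

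The main obstacle is engineering the shared connector substructure so that every gadget cluster is c-connected and the hierarchy $T$ is non-flat, yet no subdivision of $K_5(m)$ appears in $G$ itself. The non-flat hierarchy plays an essential role here, which is precisely why the corresponding flat c-connected lower bound (Theorem~\ref{th:0b0-lower-cc-f}) is only linear in $n$ rather than quadratic.
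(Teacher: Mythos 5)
Your reduction step (inserting a gadget edge inside each cluster region, deleting regions, and charging every crossing of the resulting $K_5(m)$-subdivision to an \erc via Lemma~\ref{le:k5}) is sound \emph{if} the construction exists, but the construction is exactly what is missing, and the obstacle you flag in your last paragraph is not an engineering difficulty --- it is fatal to this approach. The clusters of $T$ form a laminar family, so two gadget clusters either are disjoint or one contains the other. If the gadget clusters $\mu(u,v)_i$ are pairwise disjoint and $C$ is c-connected, then each $G(\mu(u,v)_i)$ is connected, and contracting each such cluster to a single vertex produces a minor of $G$ containing $K_5(m)$ (each contracted gadget, together with its two attachment edges, is a length-two $u$--$v$ path, and distinct gadgets give internally disjoint paths). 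Since planarity is minor-closed, $G$ cannot be planar. The only escape is to let gadget clusters for the same pair share a ``connector'' and hence be nested, but then the disjoint regions into which you route the $m$ parallel curves no longer exist, and the charging argument collapses. So no variant of the $K_5(m)$ gadget can simultaneously satisfy c-connectivity, planarity of $G$, and the disjointness your lower-bound step needs; this is precisely why the paper abandons the crossing-number machinery for the c-connected case.

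The paper's actual proof uses an entirely different mechanism: $G$ is a triconnected planar graph made of $\Theta(n)$ nested triangles $C_i=(a_i,b_i,c_i)$ joined by three ``ladders,'' with two apex vertices $v_a,v_b$, and $T$ consists of a chain of $\Theta(n)$ nested clusters $\mu_1\subset\mu_2\subset\cdots$, where $\mu_i$ contains the first $i$ triangles (plus singleton clusters for the apexes). In any planar embedding, $\Omega(n)$ of the triangles separate $v_a$ from $v_b$; for each such triangle $C_i$ and each of the $\Omega(n)$ clusters $\mu_j$ with $j\ge i$, the triangle cannot lie entirely inside $R(\mu_j)$ (else $R(\mu_j)$ would enclose an apex not in $\mu_j$), so it crosses the boundary of $R(\mu_j)$ twice, yielding one \erc. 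This is a separation argument, not a crossing-number argument, and it is where non-flatness genuinely enters: each triangle is charged against $\Omega(n)$ nested cluster boundaries, whereas in the flat c-connected case (Theorem~\ref{th:0b0-lower-cc-f}) each triangle lies in only one cluster and the bound drops to linear. You correctly predicted that non-flatness is the source of the extra factor of $n$, but you would need to redirect your construction along these lines rather than trying to salvage the $K_5(m)$ template.
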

\begin{proof}
Let $G$ be an $(n+2)$-vertex triconnected planar graph such that for $i=1,\dots,
\frac{n}{3}$, $G$ contains a $3$-cycle $C_i=(a_i,b_i,c_i)$. Further, for
$i=1,\dots,\frac{n}{3}-1$, $G$ has edges $(a_i, a_{i+1})$, $(b_i,
b_{i+1})$, $(c_i, c_{i+1})$. Finally, $G$ contains two vertices $v_a$ and $v_b$
such that $v_a$ is connected to $a_1,b_1,c_1$ and $v_b$ is connected to
$a_{\frac{n}{3}},b_{\frac{n}{3}},c_{\frac{n}{3}}$. Tree $T$ is defined as
follows: $\mu_1=\{a_1,b_1,c_1\}$ and, for each
$i=2,\dots,\frac{n}{3}$, $\mu_i$~$=$~$\mu_{i-1}$~$\cup$~$\{a_i,b_i,c_i\}$;
moreover $\mu_a = \{v_a\}$ and $\mu_b=\{v_b\}$. See
Fig.~\ref{fig:0b0-planar-lower-nf}.

Note that, in any planar embedding of $G$ there exists a set $S$ of at least $\frac{n}{6}$
nested 3-cycles, and all such cycles contain either $v_a$ or $v_b$, say $v_b$, in
their interior. Let $C_i$ be any of such cycles. For each cluster $\mu$
containing $a_i$, $b_i$, and $c_i$, not all the edges of $C_i$ can entirely lie
inside the region $R(\mu)$ representing $\mu$ in any \zbzd of \mcgt, as
otherwise $R(\mu)$ would enclose $v_b$. This implies that $C_i$ intersects the
border of $R(\mu)$ twice, hence creating an edge-region crossing. Since there exist
$\Omega(n)$ cycles in $S$, each of
which is contained in $\Omega(n)$ clusters, we have that any \zbzd of \mcgt has
$\Omega(n^2)$ edge-region crossings.
\end{proof}

\begin{figure}[!tb]
 \centering
\subfigure[]{\includegraphics[scale=.7]{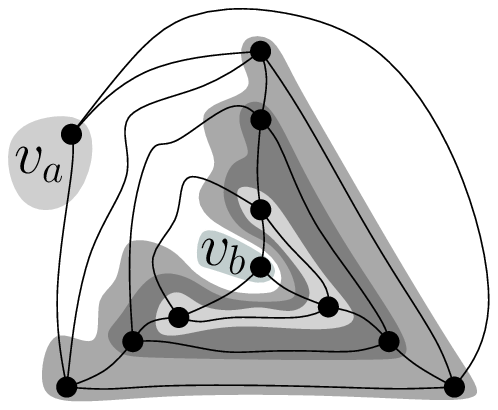}
\label{fig:0b0-planar-lower-nf}}\hspace{10pt}
\subfigure[]{\includegraphics[scale=.7]{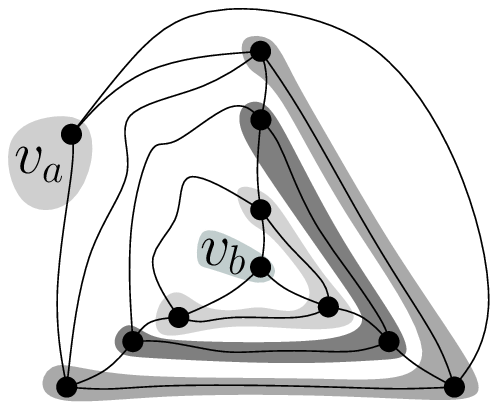}
\label{fig:0b0-planar-lower-f}}
 \caption{(a) \small Illustration for Theorem~\ref{th:0b0-lower-cc-nf}. (b)
\small Illustration for Theorem~\ref{th:0b0-lower-cc-f}.}
\end{figure}

\begin{theorem}\label{th:0b0-lower-cc-f}
There exists a c-connected flat \cg \mcgt such
that $\beta \in \Omega(n)$ in every \zbzd of \mcgt.
\end{theorem}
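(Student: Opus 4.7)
The construction I would use mirrors the one in the proof of Theorem~\ref{th:0b0-lower-cc-nf}, but replaces the nested hierarchy by a flat hierarchy of triangle-clusters, so that each cluster individually is forced to incur a constant number of edge-region crossings.

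The plan is as follows. Let $G$ be the $(n+2)$-vertex triconnected planar graph of the previous proof: it contains $n/3$ vertex-disjoint triangles $C_i=(a_i,b_i,c_i)$ for $i=1,\dots,n/3$, the ``prism edges'' $(a_i,a_{i+1})$, $(b_i,b_{i+1})$, $(c_i,c_{i+1})$ for $i=1,\dots,n/3-1$, and two apex vertices $v_a,v_b$ adjacent to $\{a_1,b_1,c_1\}$ and $\{a_{n/3},b_{n/3},c_{n/3}\}$ respectively. The cluster hierarchy is now flat: I set $\mu_i=\{a_i,b_i,c_i\}$ for every $i=1,\dots,n/3$ and add singleton clusters $\mu_a=\{v_a\}$ and $\mu_b=\{v_b\}$. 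Each $\mu_i$ induces the triangle $C_i$, so $G(\mu_i)$ is connected and the clustered graph is c-connected; it is flat by construction.

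Next I would exploit triconnectivity of $G$: its planar embedding is unique up to a reflection, and in this embedding each of the $\Omega(n)$ ``interior'' triangles $C_i$ is a nonfacial $3$-cycle that separates $v_a$ from $v_b$ (one of them lies in its interior, the other in its exterior). Fix any \zbzd $\Gamma$ of \mcgt. Since $G$ is triconnected, the combinatorial embedding realized by $\Gamma$ agrees with the above embedding up to reflection, so the separation property of the triangles $C_i$ is preserved. Now consider any such separating $C_i$. The region $R(\mu_i)$ must be a simple closed region containing exactly $a_i,b_i,c_i$, and hence containing neither $v_a$ nor $v_b$. Since $C_i$ is drawn in $\Gamma$ as a simple closed curve whose two sides each contain one of $v_a,v_b$, the region $R(\mu_i)$ cannot coincide with either of the two topological disks bounded by $C_i$. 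Therefore at least one edge of $C_i$ must leave and re-enter $R(\mu_i)$, producing at least one \erc between that edge and $R(\mu_i)$.

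Summing over the $\Omega(n)$ separating triangles, each contributing at least one distinct \erc (the crossings are distinct because they involve pairwise distinct regions $R(\mu_i)$), I obtain $\beta\in\Omega(n)$ in every \zbzd of \mcgt, matching the upper bound of Theorem~\ref{th:0b0-upper-cc}.

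The only nontrivial point in the argument is verifying that $\Omega(n)$ of the $C_i$'s genuinely separate $v_a$ and $v_b$ in every planar embedding of $G$; this follows at once from triconnectivity, since the unique (up to reflection) embedding is the natural ``stacked prism'' one, in which all but a bounded number of triangles $C_i$ are nonfacial cycles with $v_a$ and $v_b$ on opposite sides. Everything else is essentially bookkeeping on top of the edge-region-crossing definition.
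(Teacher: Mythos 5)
Your proposal is correct and takes essentially the same route as the paper: the identical stacked-prism graph with flat triangle-clusters $\mu_i=\{a_i,b_i,c_i\}$, and the identical key step that each of the $\Omega(n)$ separating triangles $C_i$ cannot lie entirely inside the simply connected region $R(\mu_i)$ without forcing $v_a$ or $v_b$ into $R(\mu_i)$, so some edge of $C_i$ must cross the boundary of $R(\mu_i)$ twice, yielding one \erc per cluster. The only cosmetic difference is that you justify the $\Omega(n)$ separating cycles via uniqueness of the embedding of a triconnected graph, while the paper simply asserts that every planar embedding has $n/6$ nested separating $3$-cycles; both suffice.
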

\begin{proof}
The underlying graph $G$ is defined as in the proof of
Theorem~\ref{th:0b0-lower-cc-nf}. Tree $T$ is such that, for $i=1,\dots,n$,
there exists a cluster $\mu_i$ containing vertices $a_i$, $b_i$, and $c_i$; moreover,
$\mu_a =
\{v_a\}$ and $\mu_b=\{v_b\}$. See Fig.~\ref{fig:0b0-planar-lower-f}.

In any planar embedding of $G$ there exists a set $S$ of at least $\frac{n}{6}$
nested 3-cycles, and all such cycles contain either $v_a$ or $v_b$, say $v_b$, in
their interior. Let $C_i$ be any of such cycles. Not all the edges of $C_i$ can
entirely lie inside the region $R(\mu_i)$ representing $\mu_i$ in any \zbzd of
\mcgt, as otherwise $R(\mu_i)$ would enclose $v_b$. This implies that $C_i$
intersects the border of $R(\mu_i)$ twice. Since there exist $\Omega(n)$ cycles
in $S$, we have that any \zbzd of \mcgt has $\Omega(n)$ edge-region crossings.
\end{proof}

Finally, we prove some lower bounds on the number of \rrcs in \zzc-drawings of \cgs. We
only consider non-c-connected \cgs, since a c-connected \cg either does not admit any
\zzcd or is c-planar. We distinguish two cases based on whether the considered
\cgs are flat or not.


\begin{theorem}\label{le:00c-lower-outer-f}
There exists a non-c-connected flat \cg \mcgt, where $G$ is
outerplanar, such that $\gamma \in \Omega(n^2)$ in every \zzcd of \mcgt.
\end{theorem}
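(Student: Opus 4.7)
The plan is to take $G$ to be the cycle $C_n$ on vertices $v_1,\dots,v_n$ (with $n$ even) and to define $T$ as the flat hierarchy with $n/2$ children of the root, namely $\mu_i=\{v_i,v_{i+n/2}\}$ for $i=1,\dots,n/2$. Then $G$ is outerplanar, $T$ has height two, and each cluster $\mu_i$ induces two non-adjacent vertices in $G$, so \mcgt is non-c-connected and flat. To show that \mcgt admits a \zzcd, I would draw $G$ as a convex polygon and realize each $R(\mu_i)$ as a thin capsule around a straight chord joining $v_i$ and $v_{i+n/2}$, routed either strictly inside or strictly outside the polygon; each cycle edge is then crossed at most once by $\partial R(\mu_i)$, so no \eec or \erc is produced.

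For the lower bound, I would first observe that in any \zzcd $\Gamma$ of \mcgt the cycle $C_n$ is drawn planarly, so the cyclic order of $v_1,\dots,v_n$ is fixed. Next I would establish the following structural claim: each region $R(\mu_i)$ lies, up to arbitrarily small neighborhoods of $v_i$ and $v_{i+n/2}$, either entirely in the closed interior of $C_n$ or entirely in its closed exterior. The reason is that every cycle edge $e$ not incident to $v_i$ or $v_{i+n/2}$ has both endpoints outside $\mu_i$, so $\partial R(\mu_i)$ cannot cross $e$ at all (two or more crossings would create an \erc); hence $R(\mu_i)$ can meet the curve $C_n$ only near the two cluster vertices, and the simple connectedness of $R(\mu_i)$ forces its ``bridge'' between $v_i$ and $v_{i+n/2}$ to lie on a single side of $C_n$.

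With this classification at hand, let $I$ and $E$ be the sets of indices whose bridge is inside and outside $C_n$, respectively, so that $|I|+|E|=n/2$. The key topological step is to show that any two clusters $\mu_i,\mu_j$ with $i,j\in I$ produce at least one \rrc. Viewing the closed interior of $C_n$ as a topological disk $D$ with $v_1,\dots,v_n$ on $\partial D$ in cyclic order, the vertices $v_i,v_j,v_{i+n/2},v_{j+n/2}$ alternate along $\partial D$; the disk $R(\mu_i)\subseteq D$ meets $\partial D$ only at $v_i$ and $v_{i+n/2}$, and therefore separates $D\setminus R(\mu_i)$ into two components that place $v_j$ and $v_{j+n/2}$ on opposite sides. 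Since $R(\mu_j)$ is connected and contains both of these vertices, $R(\mu_j)\setminus R(\mu_i)$ has at least two connected components, which by definition accounts for at least one \rrc. The same argument, applied to the one-point compactification of the exterior of $C_n$, handles the case $i,j\in E$.

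Because every pair $\{i,j\}$ with $i\neq j$ has alternating vertex positions on $C_n$, the total number of \rrcs in $\Gamma$ is at least $\binom{|I|}{2}+\binom{|E|}{2}$, which is minimized when $|I|=|E|=n/4$ and is $\Omega(n^2)$ in every case. I expect the main obstacle to be the careful justification of the single-side structural claim, which requires a delicate analysis of how $\partial R(\mu_i)$ behaves locally around $v_i$ and $v_{i+n/2}$ (the two cycle edges incident to each such vertex are each crossed exactly once by $\partial R(\mu_i)$); once this is settled, the forcing lemma and the final counting are straightforward.
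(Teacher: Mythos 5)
Your proposal is correct, but it uses a genuinely different instance and a correspondingly different counting argument from the paper. The paper's construction is a ``sun'' graph: a cycle $v_1,\dots,v_n$ with an extra degree-two vertex $u_i$ attached to $v_i,v_{i+1}$ for each $i$; a cluster $\mu^*$ containing all the $v_i$ forces every $u_i$ into the outer face of the cycle (since $R(\mu^*)$ is a disk containing the whole cycle, its complement is connected and must contain all the $u_i$), and the antipodal pairs $\{u_i,u_{n/2+i}\}$ then all live on the same side of the cycle, so all $\binom{n/2}{2}$ pairs of regions cross and the count is immediate. You instead take the bare cycle with antipodal pairs $\{v_i,v_{i+n/2}\}$ as clusters, which leaves each region free to route its ``bridge'' inside or outside the cycle; you recover the bound via the partition into $I$ and $E$ and the convexity estimate $\binom{|I|}{2}+\binom{|E|}{2}\ge 2\binom{n/4}{2}=\Omega(n^2)$. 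Your instance is simpler (the underlying graph is just a cycle, arguably the cleanest possible outerplanar witness), at the price of the extra one-side-versus-two-sides case analysis; the paper's gadget buys a one-line count at the price of a larger graph and an auxiliary forcing cluster. Both proofs rest on the same topological core --- two regions attached to a closed curve at alternating pairs of points on the same side must intersect, and indeed yield at least two components of $R(\mu_j)\setminus R(\mu_i)$ --- which the paper asserts in one sentence and you argue in more detail. Two small points to tighten: the parts of $R(\mu_i)$ on the ``wrong'' side of $C_n$ are not confined to arbitrarily small neighborhoods of $v_i$ and $v_{i+n/2}$; what is true (and suffices) is that $\partial R(\mu_i)$ crosses $C_n$ exactly four times, once on each edge incident to a cluster vertex, so the wrong-side pieces are dead-end caps attached to the bridge only across sub-arcs of $C_n$ lying on edges incident to $v_i$ or $v_{i+n/2}$, and hence cannot be used to connect $v_j$ to $v_{j+n/2}$ around the bridge. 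Also, for indices $i,j$ that are cyclically adjacent these sub-arcs may overlap $R(\mu_j)$; simply discarding such pairs still leaves $\Omega(n^2)$ crossing pairs, so the asymptotic bound is unaffected.
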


\begin{proof}
We first describe \mcgt. Refer to Fig.~\ref{fig:00c-lower-flat}. Consider a cycle $\cal C$
of $n$ vertices
$v_1,\dots,v_n$, such that $n$ is even. For $i=1,\dots,n$, add to $\cal
C$ a vertex $u_i$ and connect it to $v_i$ and $v_{i+1}$, where $v_{n+1} =
v_1$. Denote by $G$ the resulting outerplanar graph. Tree $T$ is such that
vertices $v_1,\dots,v_n$ belong to the same cluster $\mu^*$ and, for
$i=1,\dots,n/2$, vertices $u_i$ and $u_{n/2+i}$ belong to $\mu_i$.

\begin{figure}[!tb]
 \centering
\subfigure[]{\includegraphics[scale=.8]{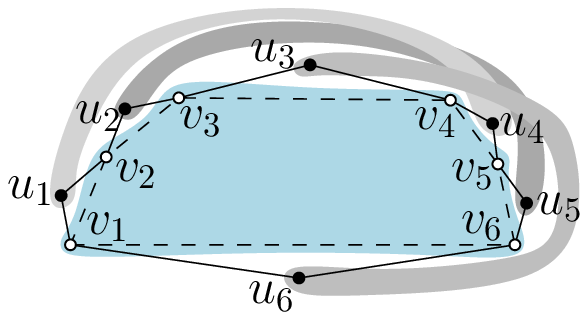}
\label{fig:00c-lower-flat}}\hspace{10pt}
\subfigure[]{\includegraphics[scale=.8]{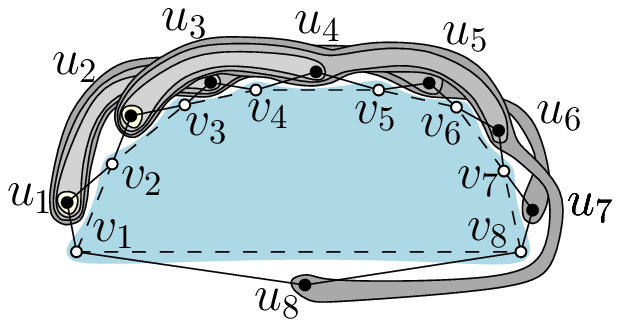}
\label{fig:00c-lower-non-flat}}
 \caption{(a) Illustration for Theorem~\ref{le:00c-lower-outer-f}. (b)
Illustration for Theorem~\ref{le:00c-lower-outer-nf}.}
\end{figure}

Since all vertices $u_1,\dots,u_n$ have to lie outside region 
$R(\mu^*)$, in any \zzcd of \mcgt the embedding of $G$ is
outerplanar. Hence, for any $i\neq j \in \{1,\dots,n/2\}$, cluster
$\mu_i$ intersects cluster $\mu_j$, thus proving the theorem.
\end{proof}

\begin{theorem}\label{le:00c-lower-outer-nf}
There exists a non-c-connected non-flat \cg \mcgt, where $G$ is
outerplanar, such that $\gamma \in \Omega(n^3)$ in every \zzcd of
\mcgt.
\end{theorem}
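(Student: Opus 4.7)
The plan is to reuse the outerplanar graph $G$ from Theorem~\ref{le:00c-lower-outer-f} (a cycle $v_1,\dots,v_n$ together with vertices $u_1,\dots,u_n$, where $u_i$ is adjacent to $v_i$ and $v_{i+1}$) and amplify the previous $\Omega(n^2)$ bound by replacing the flat collection of antipodal clusters with two nested chains. Setting $k=n/2$, the tree $T$ will have root $\rho$ with three children: $\mu^*=\{v_1,\dots,v_n\}$ and the roots $\alpha_k,\beta_k$ of two chains $\alpha_1\subset\cdots\subset\alpha_k$ and $\beta_1\subset\cdots\subset\beta_k$, where $\alpha_\ell=\{u_1,u_3,\dots,u_{2\ell-1}\}$ and $\beta_\ell=\{u_2,u_4,\dots,u_{2\ell}\}$. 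This \cg is non-c-connected (each $\alpha_\ell,\beta_\ell$ induces an edgeless subgraph of $G$) and non-flat (both chains have depth $\Theta(n)$). I would verify that \mcgt admits a \zzcd by invoking Lemma~\ref{le:00c-test-fixed-embedding} on the outerplanar embedding of $G$: all pendants share the outer face so $H(\mu)$ is connected for every $\mu$, and no non-root cluster contains a cycle of $G$, so Conditions $(i)$ and $(ii)$ of the lemma hold.

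For the lower bound, I would fix any \zzcd $\Gamma$ and observe that $\mu^*$ forces the outerplanar embedding of $G$, hence the pendants appear in the cyclic order $u_1,u_2,\dots,u_n$ around the outer face. The core claim would then be: for every pair $(\alpha_\ell,\beta_{\ell'})$ with $L=\min(\ell,\ell')$, the regions $R(\alpha_\ell)$ and $R(\beta_{\ell'})$ contribute at least $L-1$ \rrcs to $\gamma$. To see why, in the cyclic sequence of colored pendants the members of $\alpha_\ell\cup\beta_{\ell'}$ at positions $1,2,\dots,2L$ alternate strictly between $\alpha_\ell$ and $\beta_{\ell'}$, while the extra pendants of the larger cluster (if $\ell\ne\ell'$) together with the empty arc merge cyclically into a single further run; thus one obtains exactly $L$ maximal $\alpha$-runs and $L$ maximal $\beta$-runs arranged alternately. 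A topological generalization of the $L=2$ argument implicit in Theorem~\ref{le:00c-lower-outer-f} then shows that $R(\alpha_\ell)\setminus R(\beta_{\ell'})$ has at least $L$ connected components, giving at least $L-1$ \rrcs for the pair. Summing over all ordered pairs,
\[
\gamma \;\ge\; \sum_{\ell=1}^{k}\sum_{\ell'=1}^{k}\bigl(\min(\ell,\ell')-1\bigr) \;=\; \Theta(k^3) \;=\; \Omega(n^3),
\]
as required.

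The main technical obstacle is the strengthened topological claim that $L$ alternating runs force at least $L-1$ \rrcs between the two enclosing simply-connected regions. The special case $L=2$ is what powers Theorem~\ref{le:00c-lower-outer-f}; for general $L$, I would proceed by induction, at each step peeling off an innermost alternating $(\alpha,\beta)$ pair of runs from the cyclic boundary sequence and using a Jordan-curve / Euler-formula argument on the planar subdivision induced by the two boundary curves inside the outer face disk to show that such peeling decreases the unavoidable \rrc count by exactly one. Some additional care is needed to ensure that the bound survives when the colored pendants are confined to an arc of the boundary (as happens whenever $\ell\ne\ell'$), where the empty complementary arc could a priori help reduce the count but, as the run-counting above shows, does not.
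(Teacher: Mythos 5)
Your construction is, up to relabeling, exactly the one in the paper (the two nested chains of clusters on the odd- and even-indexed pendants of the outerplanar ``sun'' graph, plus $\mu^*$ on the cycle), and your lower-bound strategy --- using $\mu^*$ to force the outerplanar embedding and then summing, over all pairs consisting of one cluster from each chain, a per-pair bound driven by the cyclic alternation of their pendants --- is the paper's as well. The only real difference is presentational: you lower-bound each pair's contribution by counting components of $R(\alpha_\ell)\setminus R(\beta_{\ell'})$, which plugs directly into the paper's definition of \rrcs, whereas the paper exhibits explicit forced crossings between consecutive arcs of the two boundary curves; both rest on the same Jordan-curve separation argument in the outer face, and your sketch of that step (including the treatment of the wrap-around run when $\ell\neq\ell'$) is sound.
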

\begin{proof}
We first describe \mcgt. Refer to Fig.~\ref{fig:00c-lower-non-flat}. Consider a cycle
$\cal C$ of $n$ vertices
$v_1,\dots,v_n$, such that $n$ is a multiple of $4$. For $i=1,\dots,n$, add
to $\cal C$ a vertex $u_i$ and connect it to $v_i$ and $v_{i+1}$, where
$v_{n+1} = v_1$. Denote by $G$ the resulting outerplanar graph. Tree $T$ is
defined as follows. Set $\mu_1 = \{u_1\}$ and $\mu_2 = \{u_2\}$. Then, for
each $i=3,4,\dots,n$, set $\mu_i = \mu_{i-2} \cup \{u_i\}$. Finally, set
$\mu^* = \{v_1,\dots,v_n\}$.

Since all vertices $u_1,\dots,u_n$ have to lie outside region
$R(\mu^*)$, in any \zzcd of \mcgt the embedding of $G$ is
outerplanar.

We claim that, for each $i\in \{\frac{n}{2},\frac{n}{2}+2,\dots,n\}$ and
$j\in \{\frac{n}{2}+1,\frac{n}{2}+3,\dots,n-1\}$, the border of region
$R(\mu_i)$ intersects $\Omega(n)$ times the border of region $R(\mu_j)$. Observe that the
claim
implies the theorem.

We prove the claim. Consider the border $B(\mu_i)$ of $R(\mu_i)$, for any $i\in
\{\frac{n}{2},\frac{n}{2}+2,\dots,n\}$. First, for each $2\leq k\leq
\frac{n}{2}$ such that $k$ is even, $B(\mu_i)$ properly crosses edge
$(v_k,u_k)$ in a point $p_k$ and edge $(v_{k+1},u_k)$ in a point $p'_k$,
given that $\mu_i$ contains $u_k$ and does not contain $v_k$ and $v_{k+1}$.
 Second, for each $1\leq h\leq \frac{n}{2}$ such that $h$ is odd,
$B(\mu_i)$ does not cross edges $(v_h,u_h)$, given that $\mu_i$ contains
neither $u_h$, nor $v_h$, nor $v_{k+1}$. Third, the intersection point of
$B(\mu_i)$ with $G$ that comes after $p_k$ and $p'_k$ is $p_{k+2}$, as
otherwise $B(\mu_i)$ would not be a simple curve or an \erc
would occur. Analogous considerations hold for each $j\in
\{\frac{n}{2}+1,\frac{n}{2}+3,\dots,n-1\}$. Hence, the part of $B(\mu_i)$
between $p'_k$ and $p_{k+2}$ not containing $p_k$ intersects the part of
$B(\mu_j)$ between $p'_{k+1}$ and $p_{k+3}$. This concludes the proof of the theorem.
\end{proof}

\section{Relationships between $\alpha$, $\beta$ and $\gamma$}
\label{se:osmosis}

In this section we discuss the interplay between $ee$-, $er$-, and
\rrcs for the realizability of $\langle \alpha,\beta,\gamma \rangle$-drawings
of \cgs.

As a first observation in this direction, we note that the result proved in
Theorem~\ref{th:abc-sum} shows that there exist $c$-graphs for which allowing 
$ee\mbox{-,}$ $er\mbox{-,}$ and \rrc at the same time does not reduce the total 
number of crossings with respect to allowing only one type of crossings.

Next, we study the following question: suppose that a \cg \mcgt
admits a $\langle 1, 0, 0 \rangle$-drawing (resp. a $\langle 0, 1, 0
\rangle$-drawing, resp. a $\langle 0, 0, 1 \rangle$-drawing); does this imply
that \mcgt admits a $\langle 0, \beta, 0 \rangle$-drawing and a $\langle 0, 0,
\gamma \rangle$-drawing (resp. an $\langle \alpha, 0, 0 \rangle$-drawing and a
$\langle 0, 0, \gamma \rangle$-drawing, resp. an $\langle \alpha, 0, 0
\rangle$-drawing and a $\langle 0, \beta, 0 \rangle$-drawing) with small
number of crossings? 

In  following we prove that the answer to this question is often negative, as we can only prove (Theorem~\ref{le:osmotic-a-bn}) that every graph admitting a drawing with one single \erc also admits a drawing with $O(n)$ \eecs, while in many other cases we can prove (Theorem~\ref{th:abc-osmosis}) the existence of graphs that, even admitting a drawing with one single crossing of one type, require up to a quadratic number of crossings of a different type.

We first present Theorem~\ref{le:osmotic-a-bn}. Observe that this theorem gives a stronger result than the one needed to answer the above question, as it proves that every $\langle \alpha, \beta, \gamma \rangle$-drawing of a \cg can be transformed into a $\langle \alpha + \beta \cdot O(n), 0, \gamma \rangle$-drawing.

\begin{theorem}\label{le:osmotic-a-bn}
Any $n$-vertex \cg admitting a \zbzd also admits an \azzd with
$\alpha \in O(\beta n)$.
\end{theorem}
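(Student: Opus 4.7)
Given a \zbzd $\Gamma$ of $C(G,T)$ with $\beta$ \ercs, the plan is to construct an \azzd of the same \cg by keeping the cluster regions of $\Gamma$ fixed and redrawing only those edges of $G$ that participate in \ercs. Two starting observations make this approach feasible: the drawing of $G$ in $\Gamma$ is planar (whence $|E(G)|\in O(n)$), and the family of regions of $\Gamma$ is laminar (since $\Gamma$ has no \rrcs).

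Let $E^*$ be the set of edges of $G$ having at least one \erc in $\Gamma$; clearly $|E^*|\le \beta$. The main step will be to process the edges of $E^*$ one by one and, for each $e=(u,v)\in E^*$, replace the current drawing of $e$ with a new simple curve $e'$ from $u$ to $v$ such that (a) $e'$ is \erc-free with every cluster region and (b) $e'$ crosses each other curve in the current drawing at most once. Property~(b) will be supplied by the standard planar-routing fact: in any planar drawing of a graph with $m$ edges, for any pair of vertices one can add a simple curve between them that crosses each edge at most once, by walking along a shortest path in the dual graph between a face incident to one endpoint and a face incident to the other. Property~(a) will be enforced by exploiting the laminar structure of the regions: for each cluster $\mu$, whether $u$ and $v$ are both inside, both outside, or on opposite sides of $R(\mu)$ prescribes that $e'$ must cross $\partial R(\mu)$ zero or exactly one time; a curve realizing all these prescriptions exists because one can traverse the laminar family by following the ancestor chains of $u$ and $v$ in $T$ up to their least common ancestor, and its feasibility is witnessed by the edge $e$ itself in $\Gamma$.

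Summing over the $|E^*|\le \beta$ reroutings, and observing that at any stage the drawing contains $O(n)$ edges in total, each reroute introduces at most $O(n)$ new \eecs, so the total number of \eecs in the final drawing is $O(\beta n)$. Since the regions are never modified, no \rrcs arise; and since rerouted edges are \erc-free by construction while edges outside $E^*$ were already \erc-free in $\Gamma$, the final drawing is an \azzd with $\alpha\in O(\beta n)$, as claimed.

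The main technical obstacle will be to show that properties~(a) and~(b) above are simultaneously realizable by the same curve $e'$. Intuitively the two constraints are compatible — the laminar region boundaries further subdivide each face of $G$ and do not obstruct movement between dual-adjacent cells — but a careful argument is needed to verify that the dual-graph routing can be adapted to respect the parity of region-boundary crossings without paying extra \eecs with the edges of $G$.
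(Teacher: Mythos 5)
Your plan has a genuine gap, and it sits exactly where you flag ``the main technical obstacle'': the existence, for each $e=(u,v)\in E^*$, of a single curve that simultaneously (a) crosses every cluster boundary the prescribed number of times (zero or one) and (b) crosses each other edge $O(1)$ times. Neither of the tools you name delivers this. The dual--shortest-path fact controls crossings in the planar drawing of $G$ alone; once constraint (a) is imposed, the curve must live inside the sub-territory of the plane carved out by the laminar region family, and the relevant dual graph is that of the arrangement of the edges of $G$ \emph{together with} the region boundaries. A shortest dual path there crosses each \emph{arc} of that arrangement at most once, not each edge of $G$: a single edge $f=(x,y)$ is legitimately cut into up to $\Theta(n)$ arcs by the boundaries of the clusters separating $x$ from $y$ in a deep hierarchy, so the per-reroute cost is not $O(n)$ without a further argument, and the total could degrade well beyond $O(\beta n)$. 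The obvious repair (bigon removal to reduce the crossings of $e'$ with a fixed edge $f$ to at most one) is not free either, since sliding $e'$ along $f$ changes which regions $e'$ enters and can reintroduce \ercs. Separately, the claim that feasibility of (a) is ``witnessed by the edge $e$ itself in $\Gamma$'' is wrong: $e$ has \ercs in $\Gamma$, so its drawing witnesses nothing about \erc-free routings; what you actually need is the (true, but unproven in your write-up) topological fact that the intersection of the regions containing both endpoints, minus the disjoint closed regions of the clusters containing neither, is connected, and that the nested towers of boundaries above $u$ and above $v$ can each be crossed exactly once.

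The paper sidesteps all of this by doing purely local surgery, one \erc at a time, instead of rerouting whole edges. For a cluster $\mu$ and an edge $e$ crossing $\partial R(\mu)$ more than once, it picks an innermost offending sub-arc $\zeta=e\cap R(\mu)$ and replaces only that sub-arc by a detour hugging the boundary of $R(\mu)$ on the appropriate side (outside if $e\in E_{out}$, inside if $e\in E_{in}$) between the two endpoints of $\zeta$. Existence of the detour is immediate, it removes exactly one \erc, it leaves the rest of $e$ and all regions untouched, and it crosses only the edges intersecting that stretch of the boundary, which the paper bounds by $O(n)$; summing over the $\beta$ \ercs gives $\alpha\in O(\beta n)$ directly. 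To make your global rerouting rigorous you would essentially have to prove the per-edge analogue of this local statement anyway, so the local formulation is both the lemma you are missing and the shorter route to the theorem.
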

\begin{proof}
Let $\Gamma$ be a \zbzd of a clustered graph \mcgt. We construct an \azzd of \mcgt
with $\alpha \in O(\beta n)$ by modifying $\Gamma$, as follows. For each cluster $\mu \in
T$, consider the set of edges that cross the boundary of $R(\mu)$ at least twice.
Partition this set into two sets $E_{in}$ and $E_{out}$ as follows. Each edge whose
endvertices both belong to $\mu$ is in $E_{in}$; each edge none of whose endvertices
belongs to $\mu$ is in $E_{out}$; all the other edges are arbitrarily placed either in
$E_{in}$ or in $E_{out}$. Fig.~\ref{fig:osmosi-1} represents a cluster $\mu$ and the
corresponding set $E_{out}$. We describe the construction for $E_{out}$. For each edge $e
\in E_{out}$ consider the set of curves obtained as $e \cap R(\mu)$, except for the curves
having the endvertices of $e$ as endpoints. Consider the set $\cal S$ that is the union of
the sets of curves obtained from all the edges of $E_{out}$.  Starting from any point of
the boundary of $R(\mu)$, follow such a boundary in clockwise order and assign increasing
integer labels to the endpoints of all the curves in $\cal S$. See
Fig.~\ref{fig:osmosi-2}. Consider a curve $\zeta \in \cal S$ such that there exists no
other curve $\zeta' \in cal S$ whose both endpoints have a label that is between the
labels of the two endpoints of $\zeta$. Then, consider the edge $e$ such that $\zeta$ is a
portion of $e$. Consider two points $p_1$ and $p_2$ of $e$ arbitrarily close to the two
endpoints of $\zeta$, respectively, and not contained into $R(\mu)$. Redraw the portion of
$e$ between $p_1$ and $p_2$ as a curve outside $R(\mu)$ following clockwise the boundary
$B(\zeta,\mu)$ of $R(\mu)$ between the smallest and the largest endpoint of $\zeta$, and
arbitrarily close to $B(\zeta,\mu)$ in such a way that it crosses only the edges that
cross $B(\zeta,\mu)$ and the edges that used to cross the portion of $e$ between $p_1$ and
$p_2$ before redrawing it. See Fig.~\ref{fig:osmosi-3}, where the curve $\zeta$ between
$6$ and $8$ is redrawn. Remove $\zeta$ from $\cal S$ and repeat such a procedure until
$\cal S$ is empty. Fig.~\ref{fig:osmosi-4} shows that final drawing obtained by applying
the described procedure to the drawing in Fig.~\ref{fig:osmosi-1}.

\begin{figure}[!tb]
 \centering
\subfigure[]{\includegraphics[scale=1]{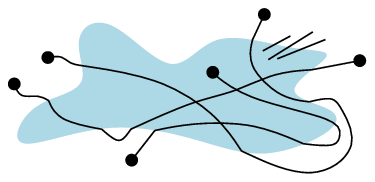}
\label{fig:osmosi-1}}\hspace{5pt}
\subfigure[]{\includegraphics[scale=1]{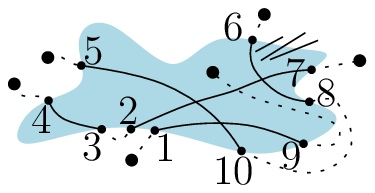}
\label{fig:osmosi-2}}\hspace{5pt}
\subfigure[]{\includegraphics[scale=1]{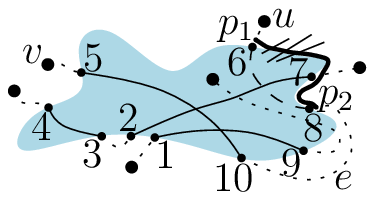}
\label{fig:osmosi-3}}\hspace{5pt}
\subfigure[]{\includegraphics[scale=1.2]{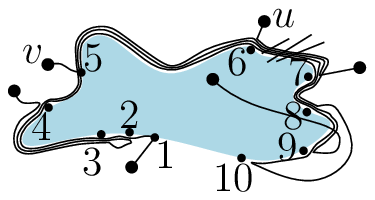}
\label{fig:osmosi-4}}
 \caption{Illustration for Theorem~\ref{le:osmotic-a-bn}. (a) A cluster $\mu$ with a set
of edges crossing $R(\mu)$ at least twice and belonging to $E_{out}$. (b) The curves
belonging to $\cal S$ are represented by solid black curve segments, while the
other portions of the edges are represented by dotted black curve segments. The
intersection points between curves in $\cal S$ and $R(\mu)$ are labeled with increasing
integers. (c) The curve $\zeta$ between intersection points $6$ and $8$ that is a portion
of edge $e = (u,v)$ is selected, since there exists no curve $\zeta' \in \cal S$ whose
both endpoints have a label that is between $6$ and $8$. The old drawing of curve $\zeta$
is represented by a dashed curve segment, while the new drawing of $\zeta$ is represented
by a fat solid curve. Note that the new drawing of $\zeta$ crosses all the edges that
cross the boundary of $R(\mu)$ between $6$ and $8$. (d) The final drawing obtained by
applying the described procedure to all the curves in $\cal S$.
}
\end{figure}

The construction for $E_{in}$ is analogous, with the
portion of $e$ being redrawn inside $R(\mu)$.
Observe that, every time the portion of an edge $e$ between $p_1$ and $p_2$, corresponding to a curve $\zeta \in \cal S$, is redrawn, an \erc is removed from the drawing and at most $O(n)$ \eecs between $e$ and the edges crossing $B(\zeta,\mu)$ are added to the drawing. This concludes the proof of the theorem.
\end{proof}

Finally, we present Theorem~\ref{th:abc-osmosis}.

\begin{theorem}\label{th:abc-osmosis}
 There exist \cgs $C_1$, $C_2$, and $C_3$ such that:
    \begin{inparaenum}[(i)]
      \item $C_1$ admits a $\langle 1,0,0\rangle$-drawing, $\beta\in
\Omega(n^2)$ in every \zbzd of $C_1$, and $\gamma \in \Omega(n^2)$ in every
\zzcd of $C_1$;
      \item $C_2$ admits a $\langle 0,1,0\rangle$-drawing, $\alpha\in \Omega(n)$
in every \azzd of $C_2$, and $\gamma \in \Omega(n^2)$ in every \zzcd of $C_2$;
      \item $C_3$ admits $\langle 0,0,1 \rangle$-drawing, $\alpha\in
\Omega(n^2)$ in every \azzd, and $\beta \in \Omega(n)$ in every \zbzd of $C_3$.
     \end{inparaenum}
\end{theorem}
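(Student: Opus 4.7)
The plan is to construct three separate clustered graphs $C_1$, $C_2$, and $C_3$, each one tailored to realize the asymmetric osmotic behavior claimed. All three follow a common template already implicit in the proof of Theorem~\ref{th:abc-sum}: a small \emph{base obstruction} that forces at least one crossing of the target type, combined with a \emph{multiplication gadget} consisting of bundles of $m = \Theta(n)$ satellite-pair clusters attached at designated pivots, in the style of the $M(u,v)$ families of Theorem~\ref{th:abc-sum}. The gadget is what blocks cheap conversion of the target crossing into other crossing types, by triggering a Lemma~\ref{le:k5} blow-up on any extracted drawing of a $K_5(m)$-subdivision.

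First I would construct $C_1$. The guiding intuition is that a single \eec can absorb a topological obstruction of arbitrary ``thickness,'' because two edges cross at a single point regardless of how many satellite structures lie nearby, whereas any conversion into \ercs or \rrcs must pay a cost of $m$ per bundle element on each side of the obstruction. Concretely, the base obstruction should be a small non-c-planar configuration that can be resolved by exactly one \eec between two designated edges $e_1,e_2$, and the multiplication gadget attaches $m$ satellite-pair cluster bundles at each of the pivots. I would verify the $\langle 1,0,0\rangle$-drawing by routing the $2m$ satellite-pair regions as nested thin bubbles along $e_1$ and $e_2$, concentrating the only unavoidable crossing at the $e_1$--$e_2$ intersection. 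For the lower bounds on $\beta$ and $\gamma$, I would transform any \zbzd (respectively \zzcd) of $C_1$ into a drawing of a subdivision of $K_5(m)$ by replacing each satellite-pair cluster region with a virtual edge between its two satellites; Lemma~\ref{le:k5} then forces $\Omega(m^2)=\Omega(n^2)$ crossings in the resulting drawing, each of which must correspond to an \erc (resp.\ \rrc) in the original since \eecs are forbidden.

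I would build $C_2$ and $C_3$ by the same template, changing only the nature of the base obstruction. For $C_2$, the base is a cluster region forced to cross a separating cycle at a single location, in the spirit of the lower-bound arguments of Theorems~\ref{th:0b0-lower-cc-nf} and~\ref{th:0b0-lower-cc-f}; the multiplication gadget then yields $\Omega(n^2)$ \rrcs in every \zzcd (via the same Lemma~\ref{le:k5} reduction) and $\Omega(n)$ \eecs in every \azzd, matching the sharp osmotic conversion of Theorem~\ref{le:osmotic-a-bn}. For $C_3$, the base is a pair of cluster regions that must mutually intersect, as in the outerplanar constructions of Theorems~\ref{le:00c-lower-outer-f} and~\ref{le:00c-lower-outer-nf}; the multiplication gadget then forces $\Omega(n^2)$ \eecs in every \azzd (via Lemma~\ref{le:k5}) and $\Omega(n)$ \ercs in every \zbzd by a symmetric analog, dual to Theorem~\ref{le:osmotic-a-bn}, of the osmotic conversion bound.

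The principal technical obstacle I anticipate is verifying the upper bounds, namely, checking that a single target-type crossing genuinely suffices once the multiplication gadget is in place. This amounts to routing the $m$ bundles of each family as topologically coherent ``pencils'' whose $m^2$ potential pairwise conflicts collapse to a single crossing of the desired type, exploiting the flexibility of cluster regions as nestable topological disks. The lower bounds themselves should follow almost mechanically from the $K_5(m)$-reduction already developed for Theorem~\ref{th:abc-sum}, combined with Theorem~\ref{le:osmotic-a-bn} and its symmetric analog.
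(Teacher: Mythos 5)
Your overall architecture is the same as the paper's: the same $K_5(m)$-based template with the $M(u,v)$ satellite-pair gadgets of Theorem~\ref{th:abc-sum}, the same three base obstructions (a forced \eec, a forced \erc between an edge and a two-vertex cluster, and a forced \rrc between two two-vertex clusters), and the same strategy of extracting a drawing of a subdivision of $K_5(m)$ from any drawing of the wrong type and invoking Lemma~\ref{le:k5}. The quadratic lower bounds and the single-crossing upper bounds are handled exactly as in the paper (the upper bounds, which you flag as the principal obstacle, are in fact routine: the $2m$ satellite clusters along each removed $K_5$-edge are drawn as disjoint thin bubbles hugging the corresponding path bundle, and only the base obstruction crosses).

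There is, however, one step that does not work as you state it: you derive the two linear lower bounds ($\alpha\in\Omega(n)$ for $C_2$ and $\beta\in\Omega(n)$ for $C_3$) from Theorem~\ref{le:osmotic-a-bn} and ``a symmetric analog, dual to Theorem~\ref{le:osmotic-a-bn}.'' Theorem~\ref{le:osmotic-a-bn} is a conversion \emph{upper} bound (every \zbzd yields an \azzd with $\alpha\in O(\beta n)$); an upper bound on the cost of converting one crossing type into another cannot, by itself, yield a lower bound on the number of crossings of the target type, and no ``dual'' conversion theorem is proved in the paper. The correct argument, which your $K_5(m)$ reduction already almost contains, is an accounting step: in the extracted drawing of the $K_5(m)$-subdivision, the single obstruction edge (e.g., $(c,e)$ in $C_2$) is replaced by a bundle of $m$ parallel copies drawn arbitrarily close to it, so every \eec of the original \azzd on that edge accounts for at most $m$ of the $\Omega(n^2)$ crossings guaranteed by Lemma~\ref{le:k5}; since all other edges are crossing-free in an \azzd, dividing by $m=O(n)$ gives $\alpha\in\Omega(n)$. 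The same division-by-$m$ argument gives $\beta\in\Omega(n)$ for $C_3$, where the two obstruction clusters are replaced by bundles of $m$ parallel edges each. With that repair your proof coincides with the paper's.
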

\begin{proof}
We start by describing a \cg $C^*(G^*,T^*)$, that will be used as a template for the
graphs in the proof. Graph $G^*$ is obtained as follows. Refer to Fig.~\ref{fig:abc-chk}.
Initialize $G^* = K_5(m)$, with $m \in \Omega(n)$, on vertices $\{a,b,c,d,e\}$. First, for
each $u,v \in \{a,b,c,d,e\}$, with $u \neq v$, replace the set of $m$ multiple edges
$S(u,v)$ with a set $S(u,v)$ of $m$ length-2 paths between $u$ and $v$. Then, remove from
$G^*$ sets $S(a,d)$, $S(c,e)$, $S(a,e)$, and $S(c,d)$. 
Finally, for $i=1,\dots,m$, add to $G^*$ vertices $[ae]_i,[ea]_i,[cd]_i,[dc]_i$, and
edges $(a,[ae]_i)$, $(e,[ea]_i)$, $(c,[cd]_i)$, $(d,[dc]_i)$. For $i=1,\dots,m$,
$T^*$ contains clusters $\mu(a,e)_i=\{[ae]_i,[ea]_i\}$ and $\mu(c,d)_i =
\{[cd]_i,[dc]_i\}$.
Denote by $M(a,e)=\{(a,[ae]_i),(e,[ea]_i),\mu(a,e)_i | i=1,\dots,m\}$ and 
$M(c,d)=\{(c,[cd]_i), (d,[dc]_i),\mu(c,d)_i | i=1,\dots,m\}$. 

\begin{figure}[tb]
 \centering
\subfigure[]{\includegraphics[scale=.4]{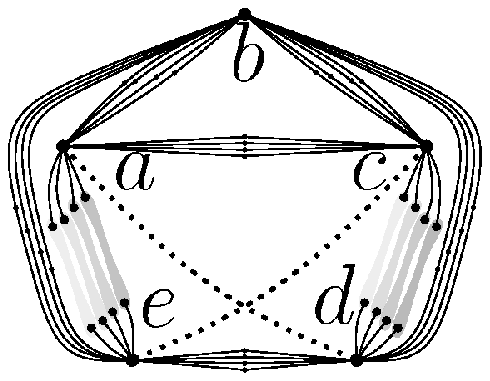}
\label{fig:abc-chk}}\hspace{5pt}
\subfigure[]{\includegraphics[scale=.4]{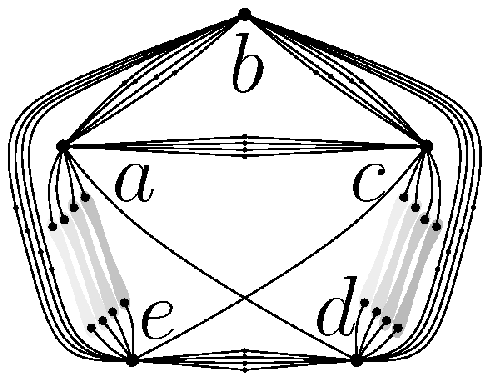}
\label{fig:abc-100}}\hspace{5pt}
\subfigure[]{\includegraphics[scale=.4]{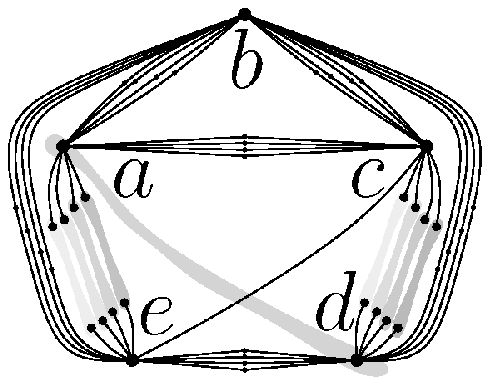}
\label{fig:abc-010}}\hspace{10pt}
\subfigure[]{\includegraphics[scale=.4]{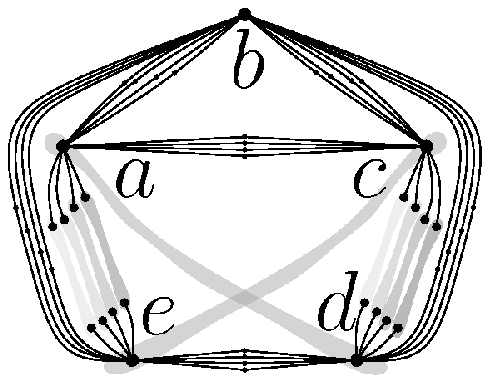}
\label{fig:abc-001}}
 \caption{(a) \ccg $C^*$: dotted lines are placeholders for gadgets
	      used in Theorems~\ref{th:abc-sum} and~\ref{th:abc-osmosis}.
	  (b) $\langle 1,0,0\rangle$-, (c) $\langle 0,1,0\rangle$-, and (d)
$\langle 0,0,1\rangle$-drawing of $C_1$, $C_2$, and $C_3$, respectively, of the proof of
Theorem~\ref{th:abc-osmosis}.}
 \label{fig:abc}
\end{figure}

\begin{itemize}
\item Clustered graph $C_1(G_1,T_1)$ is obtained by adding edges $(a,d)$ and $(c,e)$ to
$G^*$ and by setting $T_1=T^*$. A $\langle 1,0,0 \rangle$-drawing of $C_1$ is depicted in
Fig.~\ref{fig:abc-100}, where edges $(a,d)$ and $(c,e)$ cross.

Consider any \zbzd $\Gamma_\beta$ of $C_1$ which minimizes $\beta$ and any \zzcd
$\Gamma_\gamma$ of $C_1$ which minimizes $\gamma$. 
In both $\Gamma_\beta$ and $\Gamma_\gamma$, for $i=1,\dots,m$, draw an edge
$([ae]_i,[ea_i])$ inside region $R(\mu(a,e)_i)$ and an edge $([cd]_i,[dc_i])$ inside
region $R(\mu(c,d)_i)$, and remove such regions. Then, remove edges $(a,d)$ and $(c,e)$
and draw two sets $S(a,d)$ and $S(c,e)$ of $m$ multiple edges arbitrarily close to the
drawings of $(a,d)$ and $(c,e)$. 
The obtained drawings $\Gamma_\beta$ and $\Gamma_\gamma$ are both drawings of a
subdivision of $K_5(m)$, and hence contain $\Omega(n^2)$ crossings. 

Since $\Gamma_\beta$ is a \zbzd, each crossing in $\Gamma_\beta$ involves exactly one
edge in $\{([ae]_i,[ea_i]), ([cd]_i,[dc_i])\}$. Also, since $\Gamma_\beta$ minimizes
$\beta$, edges in $\{(a,d),(c,e)\}$ are not involved in any crossing in $\Gamma_\beta$,
since both such edges are adjacent to an edge belonging to $M(a,e)$ and to an edge
belonging to $M(c,d)$ (recall that adjacent edges do not cross in any drawing of a graph
whose number of crossings is minimum). Thus, each crossing in $\Gamma_\beta$ corresponds
to an \erc in $\Gamma_\beta$, which implies that $\beta \in \Omega(n^2)$.

Since $\Gamma_\gamma$ is a \zzcd, each crossing in $\Gamma_\gamma$ involves an edge
$([ae]_i,[ea_i])$ and an edge $([cd]_i,[dc_i])$. Hence, each crossing in $\Gamma_\gamma$
corresponds to an \rrc in $\Gamma_\gamma$, which implies that $\gamma \in \Omega(n^2)$.

\item Clustered graph $C_2(G_2,T_2)$ is obtained by adding edge $(c,e)$ to $G^*$ and by
adding a cluster $\mu(a,d) = \{a,d\}$ to $T^*$. A $\langle 0,1,0 \rangle$-drawing of $C_2$
is depicted in Fig.~\ref{fig:abc-010}, where edge $(c,e)$ and region $R(\mu(a,d))$ cross.

Consider any \azzd $\Gamma_\alpha$ of $C_2$ which minimizes $\alpha$ and any \zzcd
$\Gamma_\gamma$ of $C_2$ which minimizes $\gamma$. 
In both $\Gamma_\alpha$ and $\Gamma_\gamma$, for $i=1,\dots,m$, draw an edge
$([ae]_i,[ea_i])$ inside region $R(\mu(a,e)_i)$ and an edge $([cd]_i,[dc_i])$ inside
region $R(\mu(c,d)_i)$, and remove such regions. Then, draw a set $S(a,d)$ of $m$ multiple
edges inside $R(\mu(a,d))$ and remove such a region, and replace the drawing of $(c,e)$
with a set $S(c,e)$ of $m$ multiple edges arbitrarily close to it.
The obtained drawings $\Gamma_\alpha$ and $\Gamma_\gamma$ are both drawings of a
subdivision of $K_5(m)$, and hence contain $\Omega(n^2)$ crossings.

Since $\Gamma_\alpha$ is an \azzd, edges in $\{([ae]_i,[ea_i]), ([cd]_i,[dc_i])\} \cup
S(a,d)$ are not involved in any crossing in $\Gamma_\alpha$. Hence, if there exists a
crossing in $\Gamma_\alpha$ that does not correspond to a \eec in $\Gamma_\alpha$, then
such a crossing involves exactly one edge in $S(c,e)$. Thus, since $|S(c,e)|=m=O(n)$,
since $S(c,e)$ corresponds to an edge $(c,e)$ in $\Gamma_\alpha$, and since there exist
$\Omega(n^2)$ crossings in $\Gamma_\alpha$, it follows that there exist $\Omega(n)$ \eecs
in $\Gamma_\alpha$.

Since $\Gamma_\gamma$ is a \zzcd that mimizes $\gamma$, each crossing in $\Gamma_\gamma$
involves an edge $([ae]_i,[ea_i])$ and an edge $([cd]_i,[dc_i])$, as all edges in $S(a,d)$
are adjacent both to an edge belonging to $M(a,e)$ and to an edge belonging to $M(c,d)$
(recall that adjacent edges do not cross in any drawing of a graph whose number of
crossings is minimum). Hence, each crossing in $\Gamma_\gamma$ corresponds to an \rrc in
$\Gamma_\gamma$, which implies that $\gamma \in \Omega(n^2)$.

\item Clustered graph $C_3(G_3,T_3)$ is obtained by setting $G_3=G^*$ and by adding
clusters $\mu(a,d) = \{a,d\}$ and $\mu(c,e) = \{c,e\}$ to $T^*$. A $\langle 0,0,1
\rangle$-drawing of $C_3$ is depicted in Fig.~\ref{fig:abc-001}, where regions
$R(\mu(a,d))$ and $R(\mu(c,e))$ cross.

Consider any \azzd $\Gamma_\alpha$ of $C_3$ which minimizes $\alpha$ and any \zbzd
$\Gamma_\beta$ of $C_3$ which minimizes $\beta$. 
In both $\Gamma_\alpha$ and $\Gamma_\beta$, for $i=1,\dots,m$, draw an edge
$([ae]_i,[ea_i])$ inside region $R(\mu(a,e)_i)$ and an edge $([cd]_i,[dc_i])$ inside
region $R(\mu(c,d)_i)$, and remove such regions. Then, draw two sets $S(a,d)$ and $S(c,e)$
of $m$ multiple edges inside $R(\mu(a,d))$ and $R(\mu(c,e))$, respectively, and remove
such regions.
The obtained drawings $\Gamma_\alpha$ and $\Gamma_\beta$ are both drawings of a
subdivision of $K_5(m)$, and hence contain $\Omega(n^2)$ crossings.

Since $\Gamma_\alpha$ is an \azzd, edges in $\{([ae]_i,[ea_i]), ([cd]_i,[dc_i])\} \cup
S(a,d) \cup S(c,e)$ are not involved in any crossing in $\Gamma_\alpha$. Hence, each
crossing in $\Gamma_\alpha$ corresponds to an \eec in $\Gamma_\alpha$, which implies that
$\alpha \in \Omega(n^2)$.

Since $\Gamma_\beta$ is a \zbzd, each crossing in $\Gamma_\beta$ involves exactly one
edge in $\{([ae]_i,[ea_i]), ([cd]_i,[dc_i])\} \cup S(a,d) \cup S(c,e)$. Also, if there
exists a crossing in $\Gamma_\beta$ that does not correspond to a \erc in $\Gamma_\beta$,
then such a crossing involves exactly one edge in $S(a,d) \cup S(c,e)$. Thus, since
$|S(a,d)|=|S(c,e)|=m=O(n)$, since $S(a,d)$ and $S(c,e)$ correspond to edge $(a,d)$ and
$(c,e)$ in $\Gamma_\beta$, and since there exist $\Omega(n^2)$ crossings in
$\Gamma_\beta$, it follows that there exist $\Omega(n)$ \ercs in $\Gamma_\beta$.
\end{itemize}

This concludes the proof of the theorem.
\end{proof}

\section{Complexity}\label{se:complexity}

In this section we study the problem of minimizing the number of crossings in $\langle
\alpha, \beta, \gamma \rangle$-drawings.

We define the problem \minabclong ($(\alpha,\beta,\gamma)$-CCN) as follows. Given a \cg
\mcgt and an integer $k>0$, problem $(\alpha,\beta,\gamma)$-CCN asks whether \mcgt admits
a $\langle \alpha, \beta, \gamma \rangle$-drawing with $\alpha + \beta + \gamma \leq k$. 

First, we prove that problem $(\alpha,\beta,\gamma)$-CCN belongs to class NP.

\begin{lemma} \label{le:np}
Problem $(\alpha,\beta,\gamma)$-CCN is in NP. 
\end{lemma}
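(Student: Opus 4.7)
The plan is to exhibit, for any YES instance, a polynomial-size combinatorial certificate (essentially a planarization of the drawing) and show that its validity and the crossing count can be checked in polynomial time.

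First I would argue that if $C(G,T)$ admits an $\langle \alpha,\beta,\gamma\rangle$-drawing with $\alpha+\beta+\gamma\leq k$, then it also admits one whose combinatorial complexity is polynomial in $n$ and $k$. To see this, invoke the standard reduction to drawings in ``general position'' satisfying Constraints $\cal{C}_1$--$\cal{C}_3$, and observe that if any edge crosses itself, if two edges or two region boundaries cross more than necessary, or if a region boundary has a contractible loop that is crossing-free, then local surgery strictly reduces the number of crossings while preserving the type counts (edge-edge, edge-region, region-region). Hence we can assume that, in the drawing $\Gamma$, any two curves among the edges of $G$ and the boundaries of the regions $R(\mu)$ cross at most $O(k)$ times. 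Since there are $O(n)$ edges and $O(n)$ clusters, the total number of crossings is $O((n+k)^2)$, and the \emph{planarization} of $\Gamma$ -- the plane graph obtained by turning every crossing point into a dummy vertex of degree four -- has $\mathrm{poly}(n,k)$ vertices and edges.

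Next I would describe the certificate. It consists of: (a) a planar embedding (combinatorial rotation system plus choice of outer face) of the planarization $P$; (b) a labeling of each vertex of $P$ as an original vertex of $G$, an edge-edge dummy, an edge-region dummy, or a region-region dummy; (c) for each edge of $P$, a label stating to which original edge of $G$ and/or boundary of a region $R(\mu)$ it belongs; (d) the counts $\alpha,\beta,\gamma$, where an edge-region crossing contributes a pair of consecutive edge-region dummies along an edge of $G$ (counted as $\lfloor k/2\rfloor$ as in the preliminaries), and a region-region crossing is counted by looking at the connected components of $R(\mu)\setminus R(\nu)$. The certificate has size $\mathrm{poly}(n,k)$, hence polynomial in the input size (since $k$ is given in unary via the bound $k>0$ on crossings; even if $k$ were given in binary we may assume $k\leq O(n^4)$, otherwise the instance is trivially YES by Theorem~\ref{th:a00-upper} and the observation following it).

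The verifier then checks in polynomial time that: (i) $P$ is planar with the given rotation system; (ii) contracting all edge-edge dummies recovers $G$ with a planar multi-curve drawing; (iii) for each cluster $\mu\in T$, the union of edges labeled as belonging to $\partial R(\mu)$ forms a simple closed curve in the plane (i.e., a single cycle in $P$ after suitable contraction of dummies through which $\partial R(\mu)$ passes), whose interior contains exactly the dummies/vertices corresponding to the leaves of $T$ below $\mu$ (checked by a face-enclosure test in $P$); (iv) the ancestor containment between regions is respected; (v) the total $\alpha+\beta+\gamma$, computed from the dummy labels as prescribed, is at most $k$. All these checks reduce to planarity testing, cycle/face enumeration, and point-location in a plane graph, each doable in polynomial time.

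The main obstacle I expect is step (iii): verifying that the edges labeled as boundary of $R(\mu)$ really can be realized as a \emph{simple} closed region that encloses exactly the vertices in $V(\mu)$, and that the region-region crossing count agrees with the topological definition (number of connected components of $R(\mu)\setminus R(\nu)$ minus one). I would handle it by requiring the certificate to prescribe, for each cluster $\mu$, the cyclic sequence of edges of $P$ traversed by $\partial R(\mu)$; then ``simple closed region'' becomes a purely combinatorial property of this cyclic sequence within the planar embedding of $P$, enclosure of $V(\mu)$ is a face-labeling check, and the number of region-region crossings between $\mu$ and $\nu$ equals the number of maximal runs of $\partial R(\mu)$ lying outside $R(\nu)$ (minus one, or zero if nested), which is directly readable from the two cyclic sequences.
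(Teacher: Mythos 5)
Your proposal is correct and follows essentially the same route as the paper: both certify the combinatorial structure of a witnessing drawing as a labeled planarization (the paper phrases this as guessing, for each edge and each region boundary, the ordered sequence of crossings along it, then assembling a plane graph with a fixed rotation system) and then verify planarity together with the per-cluster enclosure conditions. Your added remarks on bounding $k$ via Theorem~\ref{th:a00-upper} and on pruning redundant boundary intersections make explicit a polynomial size bound on the certificate that the paper leaves implicit.
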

\begin{proof}
Similarly to the proof that the \cn problem is in NP~\cite{gj-cnnph-83}, we need
to ``guess'' a drawing of $C(G,T)$ with $\alpha$ \eecs, with
$\beta$ \ercs, and with $\gamma$ \rrcs, for each choices of $\alpha$, $\beta$, and
$\gamma$ satisfying $\alpha + \beta + \gamma \leq k$. This is done as follows. Let $m$ be
the number of edge-cluster pairs $\langle e,\mu
\rangle$ such that one end-vertex of $e$ is in $\mu$ and the other one is not.
Let $0\leq p\leq \gamma$ be a guess on the number of pairs of clusters that
intersect each other. Let $\cal E$ be a guess on the rotation schemes of the vertices of
$G$. Arbitrarily orient each edge in $G$; also, arbitrarily fix a ``starting
point'' on the boundary of each cluster in $T$ and orient such a border in any
way.


For each edge $e$, guess a sequence of crossings $x_1,x_2,\dots,x_{k(e)}$
occurring along $e$ while traversing it according to its orientation. Each of
such crossings $x_i$ is associated with: (1) the edge $e'$ that crosses $e$ in
$x_i$ or the cluster $\mu'$ such that the boundary of $R(\mu)$ crosses $e$ in $x_i$; and
(2) a boolean
value $b(x_i)$ stating whether $e'$ (resp. the boundary of $R(\mu')$) crosses $e$
from left to right according to the orientations of $e$ and $e'$ (resp.
of $e$ and the boundary of $R(\mu')$).

Analogously, for each cluster $\mu$, guess a sequence of crossings
$x_1,x_2,\dots,x_{k(\mu)}$ occurring along the boundary of $R(\mu)$ while traversing
it from its starting point according to its orientation. Again, each of such
crossings $x_i$ is associated with: (1) the edge $e'$ that crosses the boundary of
$R(\mu)$ in $x_i$ or the cluster $\mu'$ such that the boundary of $R(\mu)$ crosses the
boundary of $R(\mu)$ in
$x_i$; and (2) a boolean value $b(x_i)$ stating whether $e'$ (resp. the boundary
of $R(\mu')$)  crosses the boundary of $R(\mu)$ from left to right according to
the orientations of the boundary of $R(\mu)$ and $e'$ (resp. of the boundary of $R(\mu)$
and the boundary of $R(\mu')$).
Observe that the guessed crossings respect constraints $\mathcal{C}_1$, $\mathcal{C}_2$,
and $\mathcal{C}_3$.


Crossings are guessed in such a way that there is a total number of $\alpha$
crossings between edge-edge pairs, a total number of $2\beta+2m$ crossings
between edge-cluster pairs, and a total number of $2\gamma +2p$ crossings
between cluster-cluster pairs, so that $p$ pairs of clusters have a crossing.


We construct a graph $G^*$ with a fixed rotation scheme around each vertex as
follows. Start with $G^*$ having the same vertex set of $G$ and containing no
edge. For each edge $e$ in $G$, add to $G^*$ a path starting at one end-vertex
of $e$, ending at the other end-vertex of $e$, and containing a vertex for each
crossing associated with $e$. For each cluster $\mu$ in $T$, add to $G^*$ a
cycle containing a vertex for each crossing associated with $\mu$. This is done
in such a way that one single vertex is introduced in $G^*$ for each guessed
crossing. The rotation scheme of each vertex in $G^*$ that is also a vertex in
$G$ is the one in $\cal E$. The rotation scheme of each vertex in $G^*$
corresponding to a crossing $x_i$ is determined according to $b(x_i)$.


Check in linear time whether the constructed graph $G^*$ with a fixed rotation
scheme around each vertex is planar. For each cluster $\mu$, check in linear
time whether the cycle representing the boundary of $R(\mu)$ contains in its interior
all and only the vertices of $G$ and the clusters in $T$ (that is, all the
vertices of the cycles representing such clusters) it has to contain. Observe
that, if the checks succeed and a planar drawing of $G^*$ with the a fixed
rotation scheme around each vertex can be constructed, the corresponding
drawing of \mcgt is an $\langle \alpha, \beta ,\gamma\rangle$-drawing.
\end{proof}

Second, we prove that $(\alpha,\beta,\gamma)$-CCN is NP-complete, even in the case in
which the underlying graph is planar, namely a forest of star graphs, by means of a
reduction from the \cn problem.


\begin{theorem}\label{th:00c-min-c-nph}
Problem $(\alpha,\beta,\gamma)$-CCN is NP-complete, even in the case in which the
underlying graph is a forest of star graphs.
\end{theorem}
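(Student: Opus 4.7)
The plan is to reduce from the NP-complete \cn problem~\cite{gj-cnnph-83}; membership in NP is granted by Lemma~\ref{le:np}. Given an instance $(G,K)$ of \cn with $n$ vertices and $m$ edges, I would build an instance $(C(H,T),K')$ of $(\alpha,\beta,\gamma)$-\caps{CCN} as follows. Fix $N=K+1$. The underlying graph $H$ is a forest of star graphs with one star $S_v$ centered at each vertex $v\in V(G)$: for every edge $(u,v)\in E(G)$ and every $i\in\{1,\dots,N\}$ I attach two new leaves $l^u_{uv,i}$ to $S_u$ and $l^v_{uv,i}$ to $S_v$. The tree $T$ is flat with root $\rho$, and its non-root clusters are exactly the sets $\mu_{uv,i}=\{l^u_{uv,i},l^v_{uv,i}\}$, one per ``gadget''; the star centers lie directly under $\rho$. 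Set $K'=K\cdot N^2$. The construction is polynomial in $|G|+K$, and $H$ is a forest of stars by design.

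The forward direction is direct. Starting from an optimal drawing of $G$ with at most $K$ crossings, I would draw each star $S_v$ at the location of $v$ and realize each gadget $\mu_{uv,i}$ as a thin elongated simple region whose medial axis follows closely the curve of the edge $(u,v)$ from near $u$ to near $v$; the two leaves of the gadget sit inside this region near its endpoints. Each edge crossing of $G$ then turns into exactly $N^2$ region-region crossings between the corresponding gadget bands, and no other crossings arise, so $\alpha+\beta+\gamma\le K\cdot N^2=K'$.

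For the converse I would adapt the multiplicity trick of Lemma~\ref{le:k5} to the clustered setting. Given a drawing $\Gamma$ of $C$ with $\alpha+\beta+\gamma\le K'$, attach to every gadget $(u,v,i)$ an \emph{effective curve} $\eta_{uv,i}$ from $u$ to $v$ consisting of the edge $(u,l^u_{uv,i})$, an arc of the boundary of $R(\mu_{uv,i})$ from $l^u_{uv,i}$ to $l^v_{uv,i}$, and the edge $(v,l^v_{uv,i})$; every \eec, \erc, or \rrc between two gadgets of distinct $G$-edges corresponds to a crossing of their effective curves. For each $G$-edge $e$, let $g^*(e)$ be a gadget whose effective curve sustains the minimum number of crossings with the effective curves of gadgets of other $G$-edges, and redraw the remaining $N-1$ gadgets of $e$ in a thin tubular neighborhood of the effective curve of $g^*(e)$ so that each of them crosses the same objects in the same cyclic order. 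As in Lemma~\ref{le:k5}, iterating this substitution over the edges of $G$ yields a drawing $\Gamma'$ of $C$ with at most as many crossings as $\Gamma$. In $\Gamma'$ the gadgets of two distinct $G$-edges $e_1,e_2$ produce exactly $N^2\cdot t(e_1,e_2)$ crossings, for some integer $t(e_1,e_2)\ge 0$, and the effective curves of the representatives $g^*(e)$, $e\in E(G)$, form a drawing of $G$ in which $e_1$ and $e_2$ cross $t(e_1,e_2)$ times. Therefore $cr(G)\le \sum_{\{e_1,e_2\}} t(e_1,e_2)\le K'/N^2 = K$, as required.

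The main obstacle I anticipate is formalizing the ``redraw in a thin neighborhood'' step. The analogous step in Lemma~\ref{le:k5} acts on single edges and is essentially immediate, but here each gadget carries also a simple closed region: I have to verify that the two incident star-edges and the enclosing cluster region can be simultaneously isotoped to follow the effective curve of $g^*(e)$ while keeping $R(\mu_{uv,i})$ simple, respecting constraints $\mathcal{C}_1$, $\mathcal{C}_2$, $\mathcal{C}_3$, and not creating new crossings beyond those experienced by $g^*(e)$. Once this routine but delicate topological manipulation is set up, the counting above establishes the equivalence between $cr(G)\le K$ and the existence of a drawing of $C$ with at most $K'$ crossings.
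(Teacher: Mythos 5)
Your overall architecture (reduction from \cn with membership via Lemma~\ref{le:np}, stars as vertex gadgets, a two-leaf cluster playing the role of each edge of $G$) is the same as the paper's, but the paper needs neither the $N=K+1$ amplification nor a Lemma~\ref{le:k5}-style redrawing: it uses a \emph{single} gadget per edge, sets $K'=K$, and recovers a drawing of $G$ from a drawing of the clustered graph by drawing, \emph{inside} each region $R(\mu_{uv})$, a curve joining its two leaves that crosses each passage of a foreign object through the region exactly once; since each such passage already costs one \erc or one \rrc, the crossing counts match exactly. The extra machinery you introduce is where your argument breaks.

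The gap is in the converse direction and it is concrete: your ``effective curve'' runs along \emph{one arc of the boundary} of $R(\mu_{uv,i})$, and you assert that every \eec, \erc, or \rrc between gadgets of distinct $G$-edges corresponds to a crossing of the effective curves. That is false, and the error is multiplicative. One \erc means the foreign edge meets the region's boundary at \emph{two} points, and one \rrc likewise corresponds to two boundary-boundary intersection points; depending on which arc you chose, these contribute $0$, $1$, or $2$ crossings to the effective curve. As a result: (a) the gadget $g^*(e)$ minimizing effective-curve crossings need not minimize the actual $\alpha+\beta+\gamma$ charge of the gadget; (b) rerouting the other $N-1$ gadgets as thin bands along $g^*(e)$'s effective curve can \emph{increase} the total count of the drawing (two transversal band-band crossings where the original drawing paid for a single \rrc), so the key inequality $\mathrm{(crossings\ of\ }\Gamma')\le\mathrm{(crossings\ of\ }\Gamma)$ fails; and the chain only yields something like $cr(G)\le 2K'/N^2=2K$. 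Because the loss is a multiplicative constant rather than an additive lower-order term, taking $N$ large does not absorb it, so the reduction does not close. (A smaller imprecision: $l^u_{uv,i}$ and $l^v_{uv,i}$ lie in the interior of $R(\mu_{uv,i})$, not on its boundary, so ``an arc of the boundary from $l^u$ to $l^v$'' is not even well defined.) The repair is to route the connecting curve through the \emph{interior} of each region, crossing each passage once; then every crossing of the derived curves is charged to a distinct counted crossing of $\Gamma$, you get a drawing of a subdivision of $G(N)$ with at most as many crossings, and Lemma~\ref{le:k5} finishes --- exactly the technique of Theorem~\ref{th:abc-sum}. But once the correspondence is exact, $N=1$ already suffices, and you have reproduced the paper's proof.
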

\begin{proof}
The membership in NP is proved in Lemma~\ref{le:np}.

The NP-hardness is proved by means of a polynomial-time reduction from the \cn
problem, which has been proved to be NP-complete by Garey and
Johnson~\cite{gj-cnnph-83}. Given a graph $G^*$ and an integer $k^*>0$, the \cn
problem consists of deciding whether $G^*$ admits a drawing with at most $k^*$
crossings. 

We describe how to construct an instance $\langle C(G,T),k\rangle$ of
$(\alpha,\beta,\gamma)$-CCN starting from an instance $\langle G^*,k^*\rangle$ of \cn.

\begin{figure}[!htb]
\centering
\subfigure[]{\includegraphics[scale=.4]{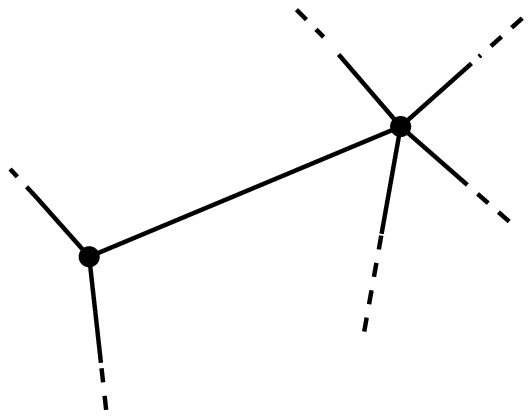}
\label{fig:00c-min-c-graph}}\hspace{10pt}
\subfigure[]{\includegraphics[scale=.4]{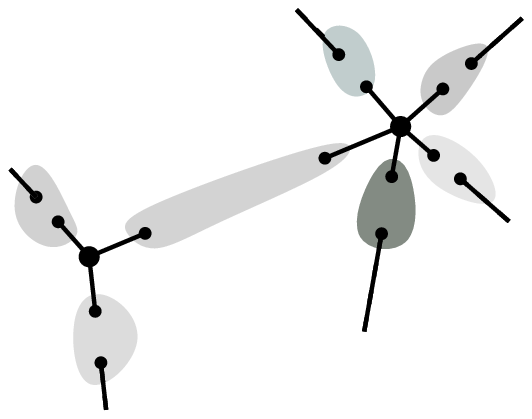}
\label{fig:00c-min-c-clust}}
 \caption{Illustration for the proof of Theorem~\ref{th:00c-min-c-nph}: A part of graph
$G^*$ (a) and the corresponding part of \mcgt (b).}
 \label{fig:00c-min-c-nph-graph}
\end{figure}

For each vertex $v_i$ of $G^*$, $G$ contains a star graph $\xi_i$ with one internal
node $v_i$ and $deg(v_i)$ leaves. Clustered graph \mcgt contains $|V^*|+|E^*|$ clusters
defined
as follows. For each vertex $v_i \in V^*$, $T$ contains a cluster $\mu_i = \{v_i\}$. Also,
for each edge $(v_i,v_j)$ of $E^*$, $T$ contains a cluster $\mu_{i,j}$ which includes a
leaf of $\xi_i$ and a leaf of $\xi_j$ in such a way that each leaf
belongs to exactly one cluster. See Fig.\ref{fig:00c-min-c-nph-graph}.
Further, set $k=k^*$. Observe that instance $\langle
\mcgt,k\rangle$ can be constructed in polynomial time.

We show that instance $\langle C(G,T),k\rangle$ has a solution if and only if
instance $\langle G^*,k^*\rangle$ has a solution.  

Suppose that $\langle G^*,k^*\rangle$ admits a solution, that is, $G^*$ has a drawing
$\Gamma^*$ with at most $k^*$ crossings. 

An $\langle \alpha, \beta, \gamma \rangle$-drawing $\Gamma$ of \mcgt with $\alpha + \beta
+ \gamma \le k$ can
be constructed as follows. Initialize $\Gamma = \Gamma^*$. For each vertex $v_i$
of $G^*$, consider a disk $d_i$ centered at $v_i$ in $\Gamma$ and containing
neither another vertex nor a crossing point between two edges. Then, for each
edge $(v_i,v_j)$ in $G^*$, replace $(v_i,v_j)$ in $\Gamma$ with a path having two internal
vertices $v_{i,j}$ and $v_{j,i}$ whose drawing is the same as the drawing of $(v_i,v_j)$.
Vertices $v_{i,j}$ and $v_{j,i}$ are placed in such a way that they do not coincide with
any crossing point between two edges in $\Gamma$. Draw a region $R(\mu_{i,j})$
representing cluster $\mu_{i,j}$ slightly surrounding edge $(v_{i,j},v_{j,i})$ and remove
$(v_{i,j},v_{j,i})$ from $\Gamma$. Finally, represent each cluster $\mu_i$ in $\Gamma$ as
a region slightly surrounding disk $d_i$. Note that, by construction, each crossing
between two edges in $\Gamma^*$ corresponds to either a \eec, or to a \erc, or to a \rrc
in $\Gamma$. Hence, drawing $\Gamma$ contains the same number of crossings as $\Gamma^*$,
that is, at most $k^*=k$.

Suppose that $\langle \cgt,k\rangle$ admits a solution, that is, $\cgt$ has an $\langle
\alpha, \beta, \gamma \rangle$-drawing
$\Gamma$ with $\alpha + \beta + \gamma \le k$. A drawing $\Gamma^*$ of $G^*$ with at most
$k^*$ crossings
can be constructed as follows. Initialize $\Gamma^* = \Gamma$. For each cluster $\mu_{i,j}
= \{v_{i,j}, v_{j,i}\}$, draw an edge between $v_{i,j}$ and $v_{j,i}$ inside
$R(\mu_{i,j})$. Then, for each two vertices $v_i$ and $v_j$ that are connected by a
length-$3$ path $P(i,j)$ with internal vertices $v_{i,j}$ and $v_{j,i}$, replace $P(i,j)$
in $\Gamma^*$ with an edge $(v_i,v_j)$ whose drawing is the same as the drawing of
$P(i,j)$ in $\Gamma$. Finally, for each cluster $\mu$, remove region $R(\mu)$ from
$\Gamma^*$.

Note that, by construction, each crossing (that is either an \eec, or an \erc, or an \rrc)
in $\Gamma$ corresponds to a crossing between two edges in $\Gamma^*$. Hence, drawing
$\Gamma^*$ contains the same number of crossings as $\Gamma$, that is, at most
$k=k^*$.This concludes the proof of the theorem.
\end{proof}

As for the problems considered in the previous sections, it is interesting to study the
$(\alpha,\beta,\gamma)$-CCN problem when only one out of $\alpha$, $\beta$, and $\gamma$
is allowed to be different from $0$. We call $\alpha$-CCN, $\beta$-CCN, and $\gamma$-CCN
the corresponding decision problems.

We observe that the result proven in Theorem~\ref{th:00c-min-c-nph} implies that all of
$\alpha$-CCN, $\beta$-CCN, and $\gamma$-CCN are NP-complete, even in the case in which the
underlying graph is planar, namely a forest of star graphs.

In the following we prove that stronger results can be found for $\alpha$-CCN and
$\beta$-CCN, by giving NP-hardness proofs for more restricted clustered graph classes.


\begin{theorem}\label{th:a00-min-a-nph}
Problem $\alpha$-CCN is NP-complete even in the case in which the underlying graph is a
matching.
\end{theorem}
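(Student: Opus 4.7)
Membership of $\alpha$-CCN in NP follows from Lemma~\ref{le:np}. For NP-hardness the plan is to reduce from the \cn problem, which is NP-complete~\cite{gj-cnnph-83}, along the lines of Theorem~\ref{th:a00-lower-ncc-path}: we encode vertices of the input graph as clusters and edges as matching edges, so that crossings in a drawing of the input graph correspond one-to-one to \eecs in an $\alpha$-drawing of the clustered matching.

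Given an instance $\langle G^*,k^*\rangle$ of \cn, I would construct an instance $\langle C(G,T),k\rangle$ of $\alpha$-CCN as follows. For every vertex $v_i$ of $G^*$ introduce a cluster $\mu_i$ in $T$. For every edge $(v_i,v_j)$ of $G^*$ add to $G$ a matching edge $(a_{ij},b_{ij})$ and assign $a_{ij}$ to $\mu_i$ and $b_{ij}$ to $\mu_j$, so that each vertex of $G$ belongs to exactly one cluster and $G$ is a matching. Set $k=k^*$. The construction is clearly polynomial.

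For the forward direction, from a drawing $\Gamma^*$ of $G^*$ with at most $k^*$ crossings I build an $\alpha$-drawing $\Gamma$ of $C(G,T)$ by subdividing each edge $(v_i,v_j)$ of $G^*$ twice, relabeling the two subdivision vertices as $a_{ij}$ and $b_{ij}$ placed arbitrarily close to $v_i$ and $v_j$ respectively, and drawing each region $R(\mu_i)$ as a small disk around $v_i$ enclosing all the $a_{ij}$'s but no other vertex and no crossing. Since the subdivided drawing has the same set of crossings as $\Gamma^*$, drawing $\Gamma$ is a valid \azzd with at most $k^*=k$ crossings. For the reverse direction, from an \azzd $\Gamma$ of $C(G,T)$ with $\alpha\leq k$ crossings I first apply the local redrawing argument of Theorem~\ref{th:a00-lower-ncc-path} to assume without loss of generality that no \eec of $\Gamma$ lies inside any region $R(\mu_i)$ (every edge of $G$ has exactly one endpoint in each of two distinct clusters, so the argument applies verbatim and does not increase $\alpha$). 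Then, inside each $R(\mu_i)$, I insert a new vertex $v_i$ and connect it by internal curves to every $a_{ij}$ with $a_{ij}\in\mu_i$, routing these curves in star-shape from $v_i$ so that they do not cross each other; concatenating each such curve with the matching edge $(a_{ij},b_{ij})$ and the corresponding curve inside $R(\mu_j)$ yields a curve from $v_i$ to $v_j$. After deleting the cluster regions and the original matching vertices we obtain a drawing of $G^*$ whose crossings are in bijection with the \eecs of $\Gamma$ lying outside all cluster regions, hence at most $\alpha\leq k=k^*$.

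The main subtlety will be in the reverse direction: I must argue that the internal star-shaped connections from each $v_i$ to the $a_{ij}$'s can indeed be drawn inside the simple closed region $R(\mu_i)$ without crossing each other and without crossing any matching edge (which is possible because the $a_{ij}$'s lie on or inside $R(\mu_i)$ and $v_i$ is chosen generically in its interior) and, secondly, that no \eec is lost or double-counted in the bijection, which is where the assumption ``no \eec occurs inside any $R(\mu_i)$'' is used. Everything else is routine.
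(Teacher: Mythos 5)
Your reduction is correct and is essentially the paper's: the same encoding of each vertex of $G^*$ as a cluster of $\deg(v_i)$ degree-one vertices, the same encoding of edges as matching edges, the same forward direction by subdividing each edge near its endpoints and surrounding each group by a small disk, and the same threshold $k=k^*$. The only minor difference is in the reverse direction: the paper truncates each matching edge at the boundary of its cluster region and joins the resulting boundary points to a freshly placed center $v_i$ by pairwise non-crossing curves, thereby discarding everything drawn inside the regions and avoiding any need for the ``no edge-edge crossing inside a region'' normalization, whereas your variant keeps the interior edge portions and stars $v_i$ to the vertices $a_{ij}$ themselves, which does require that normalization -- and you correctly import it from the argument of Theorem~\ref{th:a00-lower-ncc-path}, so both routes go through.
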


\begin{proof}
The membership in NP follows from Lemma~\ref{le:np}.

The NP-hardness is proved by means of a polynomial-time reduction from the known \cn
problem~\cite{gj-cnnph-83}. 

We describe how to construct an instance $\langle \mcgt,k\rangle$ of $\alpha$-CCN starting
from an instance $\langle
G^*,k^*\rangle$ of \cn. See
Figs.~\ref{fig:a00-alphanumber-graph}-\subref{fig:a00-alphanumber-clustered}.
For each vertex $v_i$ of $G^*$, add a set of $\deg(v_i)$ vertices to $G$ and add
a cluster $\mu_i$ containing such vertices to $T$. For each edge $(v_i,v_j)$ in
$G^*$, add an edge to $G$ connecting a vertex in $\mu_i$ to a vertex in $\mu_j$ in such a
way that each vertex
of $G$ has degree one. Notice that $G$ is a matching. Further, set $k=k^*$. Observe that
instance $\langle
\cgt,k\rangle$ can be constructed in polynomial time.

 \begin{figure}[!htb]
 \centering

\subfigure[]{\includegraphics[scale=.85]{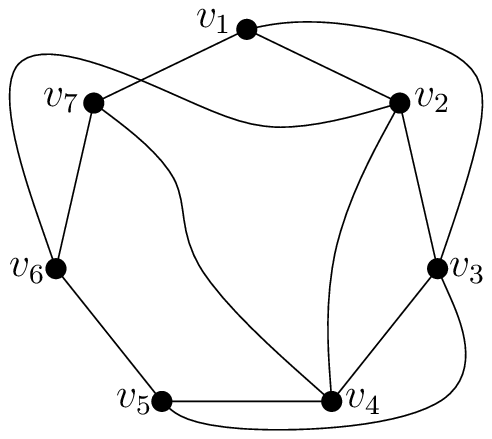}
\label{fig:a00-alphanumber-graph}}\hspace{10pt}
\subfigure[]{\includegraphics[scale=.85]{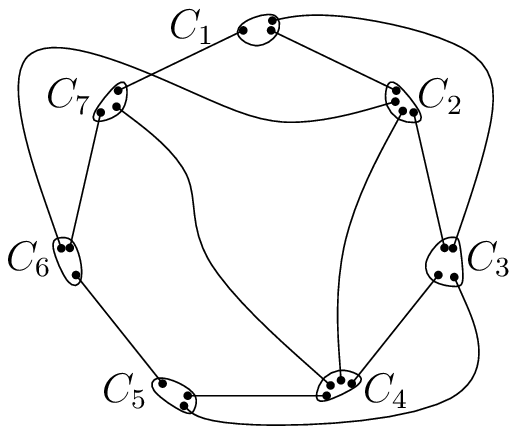}
\label{fig:a00-alphanumber-clustered}}
 \caption{(a) \small{Graph $G^*$ in the proof of
Theorem~\ref{th:a00-min-a-nph}.} (b) \small{The
clustered graph \mcgt corresponding to $G^*$.}}
  \label{fig:a00-alphanumber}
\end{figure}

We show that instance $\langle C(G,T),k\rangle$ has a solution if and only if
instance $\langle G^*,k^*\rangle$ has a solution.  

Suppose that $\langle G^*,k^*\rangle$ admits a solution, that is, $G^*$ has a drawing
$\Gamma^*$ with
at most $k^*$ crossings. An \azzd $\Gamma$ of \mcgt with $\alpha\le k$ can
be constructed as follows. Initialize $\Gamma = \Gamma^*$. For each vertex $v_i$
of $G^*$, consider a disk $d_i$ centered at $v_i$ in $\Gamma$ and containing
neither another vertex nor a crossing point between two edges. Then, for each
edge $(v_i,v_j)$ in $G^*$, place a vertex $v'_i$ on the intersection between
$(v_i,v_j)$ and the boundary of $d_i$, place a vertex $v'_j$ on the intersection
between $(v_i,v_j)$ and the boundary of $d_j$, and replace edge $(v_i,v_j)$ in
$\Gamma$ with edge $(v'_i,v'_j)$. Finally, remove each vertex $v_i$ of $G^*$
from $\Gamma$ and represent each cluster $\mu_i$ in $\Gamma$ as a region slightly
surrounding disk $d_i$. Since
each edge in $\Gamma$ is represented as a Jordan curve that is a subset of the
Jordan curve representing an edge in $\Gamma^*$, drawing $\Gamma$ contains at most the
same number of crossings as $\Gamma^*$.

Suppose that $\langle \cgt,k\rangle$ admits a solution, that is, $\cgt$ has an \azzd
$\Gamma$ with $\alpha\le k$. A drawing $\Gamma^*$ of $G^*$ with at most $k^*$ crossings
can be constructed as follows. Place vertex $v_i$
on any interior point of region $R(\mu_i)$. For each intersection point between the
boundary of $R(\mu_i)$ and an edge
incident to $\mu_i$, draw a curve connecting $v_i$ so that such curves do not cross each
other. Remove each
vertex $v'_i$ and, for each edge $e$ incident to $v'_i$, the part of $e$ which lies inside
$R(\mu_i)$. Also, remove all the regions representing clusters of $T$. The crossings in
the resulting drawing $\Gamma^*$ of $G^*$ are a subset of the \eecs in $\Gamma$. Namely,
the curves that exist in $\Gamma^*$ and do not exist in $\Gamma$ do not cross any edge of
$G^*$, given that $\Gamma$ has no \erc. This concludes the proof of the theorem.
\end{proof}


\begin{theorem}\label{th:0b0-min-b-p}
Problem $\beta$-CCN is NP-complete even for c-connected flat \cgs in which the underlying
graph is a triconnected planar multigraph.
\end{theorem}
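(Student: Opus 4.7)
The membership in NP follows directly from Lemma~\ref{le:np}. For the NP-hardness, my plan is to reduce from the \stplong (\stp) problem, which is well-known to be NP-hard: given a planar graph $G^*$, a set of terminals $T^* \subseteq V(G^*)$, and an integer $k^*$, decide whether $G^*$ admits a Steiner tree connecting all terminals of total cost at most $k^*$. The intuition behind the reduction is that, since a triconnected planar multigraph has a combinatorial embedding that is essentially unique (up to reflection), the only freedom in constructing a \zbzd lies in how the boundary of each cluster region is routed through the fixed embedding; and this routing problem naturally encodes a Steiner-tree-style minimization.

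Given an instance $\langle G^*,T^*,k^*\rangle$ of \stp, I would construct a flat c-connected \cg $C(G,T)$ together with a threshold $k$ as follows. First, build $G$ by augmenting $G^*$ into a triconnected planar multigraph: for instance, duplicate every edge of $G^*$ into a pair of parallel edges, then insert into each face a gadget consisting of one internal vertex connected by parallel edges to the face's boundary vertices, chosen so that every separation pair of the resulting graph is destroyed while the underlying cyclic orders at original vertices are preserved. Next, define the flat hierarchy $T$ by placing into a single non-trivial cluster $\mu^*$ the set $T^*$ together with a minimal collection of ``connector'' vertices from $G^* \setminus T^*$ chosen so that $\mu^*$ induces a connected subgraph of $G$ (ensuring c-connectivity); every remaining vertex is a leaf directly under the root. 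The threshold $k$ is then obtained from $k^*$ by an affine shift that accounts for the fixed contribution of the gadgets.

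The correctness argument will rest on the following claim: in any \zbzd of \mcgt, the triconnectivity of $G$ forces its embedding, so the region $R(\mu^*)$ must be a simple closed region enclosing exactly the vertices of $\mu^*$. The number of \ercs charged to $R(\mu^*)$ is then the number of extra times its boundary is forced to cross edges of $G$ because non-terminal vertices lie topologically ``trapped'' between vertices of $\mu^*$ in the fixed embedding. By the duality between simple closed curves in planar embeddings and edge cuts, the cheapest such routing corresponds bijectively (up to the fixed cost of the gadgets) to a minimum Steiner tree in $G^*$: edges of the tree are precisely those whose two sides are both assigned to ``inside $R(\mu^*)$'' by the routing, and vice versa. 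Translating back and forth between \stp solutions of cost $\leq k^*$ and \zbzds with $\beta \leq k$ then yields the equivalence.

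The main obstacle will be the design of the triconnectivity-augmentation gadgets. The construction must simultaneously guarantee: (i) $G$ is a triconnected planar multigraph, so that the embedding is unique; (ii) the cluster $\mu^*$ is c-connected in $G$; and (iii) there is a cost-preserving correspondence between Steiner trees in $G^*$ and admissible routings of the boundary of $R(\mu^*)$. The delicate point is (iii): any edge added to enforce triconnectivity could in principle act as an unintended shortcut that the region boundary exploits to reduce $\beta$ below the minimum Steiner tree weight. Engineering the gadgets so that no such shortcut exists — for instance, by inserting gadget vertices into their own singleton clusters, or by using parallel multi-edges whose local structure forbids the boundary from traversing them cheaply — will be the technical core of the full proof.
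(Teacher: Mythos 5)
Your overall instinct --- reduce from \stp and exploit the fact that a triconnected planar multigraph has an essentially unique embedding, so that the only degree of freedom is the routing of the cluster boundaries --- matches the paper. But the concrete construction has a gap that I do not think can be patched in the form you describe: you encode the instance in the \emph{primal} graph and put all terminals (plus connector vertices) into a single connected cluster $\mu^*$. Once $G(\mu^*)$ is connected, the region $R(\mu^*)$ can simply be drawn as a thin neighbourhood of a spanning tree of $G(\mu^*)$ (this is exactly the construction of Theorem~\ref{th:0b0-upper-cc}); its boundary then crosses only the chords of $G(\mu^*)$ not in that spanning tree, so the minimum $\beta$ is governed by the cyclomatic structure of $G(\mu^*)$ and by whichever edges must traverse $R(\mu^*)$, not by the cost of a Steiner tree connecting the terminals in $G^*$. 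Relatedly, the ``minimal collection of connector vertices'' you add to make $\mu^*$ connected is an arbitrary choice made at reduction time, and the optimum of your instance would depend on that choice rather than on the \stp optimum; moreover, a simple closed curve in a fixed planar embedding crosses a set of edges dual to a \emph{cycle}, not to a tree, so the ``curves versus cuts'' duality you invoke does not by itself produce a Steiner-tree objective.

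The paper's proof resolves exactly this by moving to the dual. It first reduces \stp to the variant where $G$ is a unit-weight subdivision of a triconnected planar graph (replacing each weighted edge by a path and adding heavy dummy edges), then takes the planar \emph{dual} $H$ of $G$ --- which is automatically a triconnected planar multigraph --- places each terminal $s\neq s^*$ as a new vertex inside its dual face $f_s$ in its own \emph{singleton} cluster, and puts all remaining vertices of $H$ into one cluster $\nu$, so c-connectivity is immediate and no connector vertices are needed. Since $R(\nu)$ must exclude the terminals, which sit deep inside faces of $H$, its boundary is forced to carve a channel from the outer face to every $f_s$; the dual edges crossed (twice each, hence one \erc each) by a cheapest such simple region are exactly the edges of a minimum Steiner tree in $G$. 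That inversion --- terminals excluded from the big cluster rather than collected into it --- is the missing idea in your proposal.
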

\begin{proof}
The membership in NP follows from Lemma~\ref{le:np}.

The NP-hardness is proved by means of a polynomial-time reduction from the
NP-complete~\cite{gj-rstpnpc-77} problem \stplong (\stp), which is defined as
follows: Given a planar graph $G(V,E)$ whose edges have weights $w:E\rightarrow
\mathbb{N}$, given a set $S\subset V$ of \emph{terminals}, and given an integer
$k$, does a tree $T^*(V^*,E^*)$ exist such that (1) $V^*\subseteq V$, (2)
$E^*\subseteq E$, (3) $S\subseteq V^*$, and (4) $\sum_{e\in E^*}w(e)\leq k$? The
edge weights in $w$ are bounded by a polynomial function $p(n)$
(see~\cite{gj-rstpnpc-77}). We are going to use the variant of \stp in which (A)
$G$ is a subdivision of a triconnected planar graph, where each subdivision
vertex is not a terminal, and (B) all the edge weights are equal to $1$. 

In the following we sketch a reduction from \stp to \stp with the described properties.
Let $G$ be any edge-weighted planar graph. Augment $G$ to any triconnected planar graph
$G'(V',E')$ by adding dummy edges and by assigning weight $w(e)=3n\cdot p(n)$ to each
dummy edge $e$. Then, replace each edge $e$ of $G$ with a path $P(e)$ with $w(e)$ edges,
each with weight $1$, hence obtaining a planar graph $G''(V'',E'')$. Let the terminals of
$G$ be the same terminals of $G$. Note that, by construction, $G$ satisfies Properties
(A) and (B). Also, since $|V''|\in O(n^2\cdot p(n))$, the described reduction is
polynomial.

We prove that $\langle G,S,k \rangle$ is a positive instance of \stp if
and only if $\langle G'',S,k \rangle$ is a positive instance of the considered variant of
\stp.

Suppose that $\langle G,S,k \rangle$ is a positive instance of \stp, i.e., there
exists a tree $T^*(V^*,E^*)$ such that (1) $V^*\subseteq V$, (2) $E^*\subseteq
E$, (3) $S\subseteq V^*$, and (4) $\sum_{e\in E^*}w(e)\leq k$. Then, we
construct a solution $T^\diamond(V^\diamond,E^\diamond)$ of $\langle G'',S,k
\rangle$ as follows. Initialize $V^\diamond=E^\diamond=\emptyset$. For each edge
$e\in E^*$, add all the vertices of $P(e)$ to $V^\diamond$ and add all the edges
of $P(e)$ to $E^\diamond$. It is easy to see that, with this construction,
$T^\diamond(V^\diamond,E^\diamond)$ satisfies properties (1)--(4); in particular,
$\sum_{e\in E^\diamond}w(e) = \sum_{e\in E^*}w(e)\leq k$.

Suppose that $\langle G'',S,k \rangle$ is a positive instance of the variant of \stp,
i.e.,
there exists a tree $T^\diamond(V^\diamond,E^\diamond)$ such that (1)
$V^\diamond\subseteq V''$, (2) $E^\diamond\subseteq E''$, (3) $S\subseteq
V^\diamond$, and (4) $\sum_{e\in E^\diamond}w(e)\leq k$. Assume that
$T^\diamond$ is the \emph{optimal} solution to $\langle G'',S,k \rangle$, i.e.,
there exists no tree $T^\sharp(V^\sharp,E^\sharp)$ such that
$T^\sharp(V^\sharp,E^\sharp)$ is a solution to $\langle G'',S,k \rangle$ and
$\sum_{e\in E^\sharp}w(e)< \sum_{e\in E^\diamond}w(e)$. Observe that, if an edge
of a path $P(e)$ belongs to $E^\diamond$, then all the edges of $P(e)$ belong to
$E^\diamond$. Moreover, no edge of a path $P(e)$ such that $e$ is a dummy edge
belongs to $E^\diamond$, since $\sum_{e\in P(e) | e \mbox{ is a dummy edge }}w(e) = 3n
p(n)$, that is, the total weight of the edges of each path $P(e)$ such that $e$ is a dummy
edge is larger than the total weight of all the edges of $E$ that are not part of a path
P(e) such that $e$ is a dummy edge. 
We construct a solution $T^*(V^*,E^*)$ of $\langle G,S,k \rangle$ as follows. Initialize
$V^*=E^*=\emptyset$. For each
edge $e \in E$ such that $E^\diamond$ contains the edges of $P(e)$, add the
endvertices of $e$ to $V^*$ and add $e$ to $E^*$.

Next we show a polynomial-time reduction from the variant of \stp in which all the
instances
satisfy Properties (A) and (B) to \minb. Refer to
Fig.~\ref{fig:beta-np-hardness}. Let $\langle G,S,k \rangle$ be an instance of
the variant of \stp. Since $G$ is a subdivision of a triconnected planar graph, it admits
a
unique planar embedding, up to a flip and to the choice of the outer face.
Construct a planar embedding $\Gamma_G$ of $G$ in such a way that a vertex $s^* \in S$ is
incident to the outer face. Construct the dual graph $H$ of $\Gamma_G$ in such a way that
the outer face of $H$ is dual to $s^*$. Note that, since $G$ is a subdivision of a
triconnected planar graph, its dual $H$ is a planar triconnected multigraph.
For each terminal $s\in S$ with $s\neq s^*$, consider the set $E_G(s)$ of the edges
incident to
$s$ in $G$ and consider the face $f_s$ of $H$ composed of the edges that are
dual to the edges in $E_G(s)$; add $s$ to the vertex set of $H$, embed it inside
$f_s$, and connect it to the vertices incident to $f_s$. Observe that $s^*$ does
not belong to $H$. Denote by $f^*$ the outer face of the resulting embedded
graph $H$. Define the inclusion tree $T$ as follows. For each vertex $s_i \in
S$, with $1\leq i\leq |S|$, $T$ has a cluster $\mu_i=\{s_i\}$; all the other
vertices in the vertex set of $H$ belong to the same cluster $\nu$. Then, the
instance of \minb is $\langle C(H,T),k \rangle$.

\begin{figure}[!htb]
 \centering{
 \includegraphics[scale=.8]{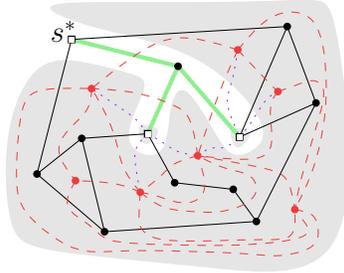}
 }
 \caption{\small Illustration for the proof of Theorem~\ref{th:0b0-min-b-p}.
Solid (black) lines are edges of $G$; dashed (red) and dotted (blue) lines are
edges of $H$; black circles and white squares are non-terminal vertices and
terminals in $G$, respectively; finally, red circles and white squares are
vertices in $H$.}
\label{fig:beta-np-hardness}
\end{figure}

We show that $\langle C(H,T),k \rangle$ admits a solution if and only if
$\langle G,S,k \rangle$ does.

Suppose that $\langle G,S,k \rangle$ admits a solution $T^*$. Construct a planar embedding
of $H$ with outer face $f^*$. Construct a drawing of cluster $\nu$ as a simple region
$R(\nu)$ that entirely encloses $H$, except for a small region surrounding $T^*$
(observe that such a simple region $R(\nu)$ exists since $s^*$ is in $f^*$).
Draw each cluster $\mu_i$ as a region $R(\mu_i)$ surrounding $s_i$ sufficiently
small so that it does not intersect $R(\nu)$.  Observe that the resulting
drawing of $C(H,T)$ is a \zbzd. Moreover, $R(\nu)$ intersects all and only the
edges dual to edges in $T^*$, hence there are at most $k$ edge-region crossings, that is,
$\beta \le k$.

Suppose that $C(H,T)$ admits a \zbzd $\Gamma$ with at most $k$ edge-region
crossings. Consider the graph $T^*$ composed of the edges that are dual
to the edges of $H$ participating in some edge-region crossing. We claim that
$T^*$ has at least one edge incident to each terminal in $S$ and that $T^*$ is
connected. The claim implies that $T^*$ is a solution to the instance $\langle
G,S,k \rangle$ of \stp, since $T^*$ has at most $k$ edges. Consider any terminal
$s \in S$. If none of the edges incident to $s$ in $G$ belongs to $T^*$, it
follows that none of the edges of $H$ incident to face $f_s$ has a crossing with
the region $R(\nu)$ representing $\nu$ in $\Gamma$. If $s\neq s^*$, then since
all the vertices incident to $f_s$ have to lie inside $R(\nu)$, we have that
either $R(\nu)$ is not a simple region or it contains $s$, in both cases
contradicting the assumption that $\Gamma$ is a \zbzd. Also, if $s=s^*$, then
we have that either $R(\nu)$ is not a simple region or it contains all the
vertices of $H$, and hence also vertices not in $\nu$, in both cases
contradicting the assumption that $\Gamma$ is a \zbzd. Suppose that $T^*$
contains (at least) two connected components $T^*_1$ and $T^*_2$. At most one of
them, say $T^*_1$, might contain $s^*$. Hence, none of the edges of $T^*_2$ is
dual to an edge incident to $f^*$. Therefore, there exists a bounded region of
the plane that does not belong to $R(\nu)$ and that is enclosed by the boundary of
$R(\nu)$, thus implying that $R(\nu)$ is not a simple region. This concludes the proof of
the theorem.
\end{proof}

\section{Open Problems} \label{se:conclusions}

Given a \cg whose underlying graph is planar we defined and studied its $\langle \alpha,
\beta, \gamma\rangle$-drawings, where the number of ee-, er-, and \rrcs is equal to
$\alpha$, $\beta$, and $\gamma$, respectively. 

This paper opens several problems. First, some of them are identified by non-tight bounds
in the tables of the Introduction. Second, in order to study how allowing different types
of
crossings impacts the features of the drawings, we concentrated most of the
attention on $\langle \alpha, \beta, \gamma\rangle$-drawings where two out of
$\alpha$, $\beta$, and $\gamma$ are equal to zero.
It would be interesting to study classes of \cgs that have drawings where the values of
$\alpha$, $\beta$, and $\gamma$ are balanced in some way. Third, we have seen that not all
\cgs whose underlying graph is planar admit \zzc-drawings. It would be interesting to
characterize the class of \cgs that admit one and to extend our testing algorithm to
simply-connected clustered graphs.

\bibliographystyle{plain}
\bibliography{jbib}

\begin{thebibliography}{10}

\bibitem{afp-scgcp-10}
P.~Angelini, F.~Frati, and M.~Patrignani.
\newblock Splitting clusters to get c-planarity.
\newblock In {\em GD}, volume 5849 of {\em LNCS}, pages 57--68, 2010.

\bibitem{bl-tcop-76}
K.~S. Booth and G.~S. Lueker.
\newblock Testing for the consecutive ones property, interval graphs, and graph
  planarity using pq-tree algorithms.
\newblock {\em J. Comput. Syst. Sci.}, 13(3):335--379, 1976.

\bibitem{bcdeeiklm-ospge-07}
P.~Bra{\ss}, E.~Cenek, C.~A. Duncan, A.~Efrat, C.~Erten, D.~Ismailescu, S.~G.
  Kobourov, A.~Lubiw, and J.~S.~B. Mitchell.
\newblock On simultaneous planar graph embeddings.
\newblock {\em Comput. Geom.}, 36(2):117--130, 2007.

\bibitem{CornelsenW06}
S.~Cornelsen and D.~Wagner.
\newblock Completely connected clustered graphs.
\newblock {\em Journal of Discrete Algorithms}, 4(2):313--323, 2006.

\bibitem{cd-cp-05}
P.~F. Cortese and G.~{Di Battista}.
\newblock Clustered planarity (invited lecture).
\newblock In {\em Twenty-first annual symposium on Computational Geometry
  (proc. SoCG 05)}, ACM, pages 30--32, 2005.

\bibitem{cdfpp-ccccg-08}
P.~F. Cortese, G.~{Di Battista}, F.~Frati, M.~Patrignani, and M.~Pizzonia.
\newblock C-planarity of c-connected clustered graphs.
\newblock {\em J. Graph Alg. Appl.}, 12(2):225--262, 2008.

\bibitem{CorteseBPP05}
P.~F. Cortese, G.~{Di Battista}, M.~Patrignani, and M.~Pizzonia.
\newblock Clustering cycles into cycles of clusters.
\newblock {\em J. Graph Alg. Appl.}, 9(3):391--413, 2005.

\bibitem{cdpp-ecpg-09}
P.~F. Cortese, G.~{Di Battista}, M.~Patrignani, and M.~Pizzonia.
\newblock On embedding a cycle in a plane graph.
\newblock {\em Discrete Mathematics}, 309(7):1856--1869, 2009.

\bibitem{Dahlhaus98}
E.~Dahlhaus.
\newblock A linear time algorithm to recognize clustered graphs and its
  parallelization.
\newblock In {\em Proc. Latin American Theoretical Informatics}, volume 1380 of
  {\em LNCS}, pages 239--248, 1998.

\bibitem{df-ectefcgsf-09}
G.~{Di Battista} and F.~Frati.
\newblock Efficient c-planarity testing for embedded flat clustered graphs with
  small faces.
\newblock {\em J. Graph Alg. Appl.}, 13(3):349--378, 2009.

\bibitem{dt-omtc-96}
G.~{Di Battista} and R.~Tamassia.
\newblock On-line maintenance of triconnected components with {SPQR}-trees.
\newblock {\em Algorithmica}, 15(4):302--318, 1996.

\bibitem{dt-opt-96}
G.~{Di Battista} and R.~Tamassia.
\newblock On-line planarity testing.
\newblock {\em SIAM J. Comput.}, 25:956--997, 1996.

\bibitem{ek-sepgfb-05}
C.~Erten and S.~G. Kobourov.
\newblock Simultaneous embedding of planar graphs with few bends.
\newblock {\em J. Graph Algorithms Appl.}, 9(3):347--364, 2005.

\bibitem{FengCE95}
Q.~Feng, R.~F. Cohen, and P.~Eades.
\newblock Planarity clustered graphs.
\newblock In {\em Proc. European Symposium on Algorithms}, volume 979 of {\em
  LNCS}, pages 213--226, 1995.

\bibitem{fce-hdpcg-95}
Q.~W. Feng, R.~F. Cohen, and P.~Eades.
\newblock How to draw a planar clustered graph.
\newblock In {\em COCOON'95}, volume 959 of {\em LNCS}, pages 21--30, 1995.

\bibitem{phd-forster-2005}
M.~Forster.
\newblock {\em Crossings in clustered level graphs}.
\newblock PhD thesis, University of Passau, 2005.

\bibitem{gj-rstpnpc-77}
M.~R. Garey and D.~S. Johnson.
\newblock {The Rectilinear Steiner Tree Problem is NP-Complete}.
\newblock {\em SIAM J. Appl. Math.}, 32:826--834, 1977.

\bibitem{gj-cnnph-83}
M.~R. Garey and D.~S. Johnson.
\newblock {Crossing Number is NP-Complete}.
\newblock {\em SIAM Journal on Algebraic and Discrete Methods}, 4(3):312--316,
  1983.

\bibitem{ddm-pcg-02}
Alessandro~Marcandalli Giuseppe~{Di Battista}, Walter~Didimo.
\newblock Planarization of clustered graphs.
\newblock In P.~Mutzel, M.~Juenger, and S.~Leipert, editors, {\em Graph Drawing
  (Proc. GD '01)}, volume 2265 of {\em Lecture Notes in Computer Science},
  pages 60--74, 2002.

\bibitem{GoodrichLS05}
M.~T. Goodrich, G.~S. Lueker, and J.~Z. Sun.
\newblock C-planarity of extrovert clustered graphs.
\newblock In {\em GD}, volume 3843 of {\em LNCS}, pages 211--222, 2006.

\bibitem{GutwengerJLMPW02}
C.~Gutwenger, M.~J{\"u}nger, S.~Leipert, P.~Mutzel, M.~Percan, and
  R.~Weiskircher.
\newblock Advances in c-planarity testing of clustered graphs.
\newblock In {\em GD'02}, LNCS, pages 220--235, 2002.

\bibitem{gm-lti-00}
C.~Gutwenger and P.~Mutzel.
\newblock A linear time implementation of {SPQR}-trees.
\newblock In J.~Marks, editor, {\em Graph Drawing (GD '00)}, volume 1984 of
  {\em LNCS}, pages 77--90, 2001.

\bibitem{jjkl-cpecgtcc-08}
V.~Jelinek, E.~Jelinkova, J.~Kratochvil, and B.~Lidicky.
\newblock Clustered planarity: Embedded clustered graphs with two-component
  clusters.
\newblock In {\em GD '08}, volume 5417 of {\em LNCS}, pages 121--132, 2008.

\bibitem{jstv-cpcfoe-09}
V.~Jel\'{\i}nek, O.~Such{\'y}, M.~Tesar, and T.~Vyskocil.
\newblock Clustered planarity: Clusters with few outgoing edges.
\newblock In {\em GD '08}, pages 102--113, 2009.

\bibitem{jkkpsv-cpsceg-08}
E.~Jelinkova, J.~Kara, J.~Kratochvil, M.~Pergel, O.~Suchy, and T.~Vyskocil.
\newblock Clustered planarity: Small clusters in eulerian graphs.
\newblock In {\em GD '07}, volume 4875 of {\em LNCS}, pages 303--314, 2007.

\bibitem{k-setbpept-06}
F.~Kammer.
\newblock Simultaneous embedding with two bends per edge in polynomial area.
\newblock In {\em SWAT}, volume 4059 of {\em LNCS}, pages 255--267, 2006.

\bibitem{pt-wcna-00}
J.~Pach and G.~T{\'o}th.
\newblock Which crossing number is it anyway?
\newblock {\em J. Comb. Theory, Ser. B}, 80(2):225--246, 2000.

\end{thebibliography}
\end{document}